%% file: main.tex
\documentclass[a4paper,11pt]{article}
\input{MathPhysDef}

\providecommand{\Tr}{\operatorname{Tr}}

\DeclareMathOperator{\Id}{id}

\DeclareMathOperator{\diag}{diag}
\DeclareMathOperator{\rank}{rank}

\providecommand{\I}{\mathds{1}}
\providecommand{\BB}{\mathbf{B}}

\providecommand{\JJ}{\mathbb{J}}

\providecommand{\MH}{\mathrm{MH}}

\providecommand{\lft}{\overleftarrow}
\providecommand{\rgt}{\overrightarrow}
\providecommand{\dbl}{\overleftrightarrow}
\providecommand{\overline}{\overline}
\newcommand{\vac}{\mathrm{vac}}
\newcommand{\Ups}{\Upsilon}
\newcommand{\Upsr}{\rgt{\Ups}}
\newcommand{\Upsl}{\lft{\Ups}}
\newcommand{\Upsd}{\dbl{\Ups}}
\newcommand{\Omegad}{\dbl{\Omega}}
\newcommand{\Omegal}{\lft{\Omega}}
\newcommand{\Omegar}{\rgt{\Omega}}
\newcommand{\Qr}{\rgt{Q}}
\newcommand{\Ql}{\lft{Q}}
\newcommand{\Qd}{\dbl{Q}}

\newcommand{\stlink}{\star}
\providecommand{\Sep}{\operatorname{Sep}}

\makeatletter
\newsavebox{\@brx}
\newcommand{\llangle}[1][]{\savebox{\@brx}{\(\m@th{#1\langle}\)}%
  \mathopen{\copy\@brx\kern-0.5\wd\@brx\usebox{\@brx}}}
\newcommand{\rrangle}[1][]{\savebox{\@brx}{\(\m@th{#1\rangle}\)}%
  \mathclose{\copy\@brx\kern-0.5\wd\@brx\usebox{\@brx}}}
\makeatother
\usepackage{fontawesome5} % icon 

\usepackage[misc]{ifsym} %%%icon

\usepackage{tikz-feynman}

\usetikzlibrary{decorations.markings}

\usetikzlibrary{decorations.pathmorphing,decorations.markings}

\usetikzlibrary{patterns}

\allowdisplaybreaks

\usetikzlibrary{decorations.markings}

\tikzset{
    partial ellipse/.style args={#1:#2:#3}{
        insert path={+ (#1:#3) arc (#1:#2:#3)}
    }
}

\tikzset{->-/.style={decoration={
  markings,
  mark=at position #1 with {\arrow{>}}},postaction={decorate}}}
\tikzset{-<-/.style={decoration={
  markings,
  mark=at position #1 with {\arrow{<}}},postaction={decorate}}}

\begin{document}

\flushbottom

\title{Unified spatiotemporal quantum states and spatiotemporal entanglement from Kirkwood-Dirac phase space}

\author[a,\star]{Zhian Jia\orcidlink{0000-0001-8588-173X}}
\author[b,\dagger]{Mei-Hui Xiao\orcidlink{0009-0004-8706-6156}}

\affiliation[a]{Institute of Quantum Physics, School of Physics, Central South University,
Changsha 418003, China}

\affiliation[b]{School of Physics and Astronomy, Sun Yat-Sen University, Guangzhou 510275, China}

\affiliation[\star]{Email: \href{mailto:giannjia@foxmail.com}{giannjia@foxmail.com}}
\affiliation[\dagger]{Email: \href{mailto: xiaomh25@mail2.sysu.edu.cn}{xiaomh25@mail2.sysu.edu.cn}}

\abstract{
The notion of a spatiotemporal quantum state extends the conventional concept of a spatial quantum state to the spatiotemporal domain. Such states are represented by unit-trace operators that encode correlations among quantum events distributed across space and time. In this work, we use the spatiotemporal Kirkwood-Dirac phase space to provide a unified characterization of spatiotemporal quantum states. Spatiotemporal states obtained from Kirkwood-Dirac distributions are generally non-Hermitian and nonnormal; whereas those constructed from Margenau-Hill distributions are Hermitian. We provide a unification via (quasi)probabilistic mixture of Kirkwwod-Dirac spatiotemporal states which encompasses almost all existing formulations of spatiotemporal quantum states. We derive recursive expressions for  spatiotemporal states and elucidate the relation between Kirkwood-Dirac nonclassicality and the temporality of spatiotemporal states. We further extend the construction to many-fold correlation functions and establish its connection with out-of-time-ordered correlators (OTOCs). We also develop a more general unifying framework based on (quasi)probabilistic mixture of $s$-parametrized spatiotemporal states and establish their Petz time reversal and application in studying Kubo-Martin-Schwinger (KMS) condition in two-time setting. Finally, we apply this framework to characterize quantum entanglement in spacetime and analyze spatiotemporal entanglement using several complementary entropy measures.
}

\keywords{Non-Equilibrium Field Theory, Quantum
Dissipative Systems, Correlation Functions, Quantum Dissipative Systems, Thermal Field Theory}

\maketitle

\section{Introduction}
\label{sec:intro}

Space and time are typically treated differently in Copenhagen interpretation of quantum mechanics: whereas observables are associated with spatial coordinates, no analogous observables exist for time, which instead serves as a parameter governing the evolution of spatial states. This asymmetry leads to conceptual difficulties known as \emph{problem of time} when attempting to reconcile quantum mechanics with relativity, and becomes especially significant in quantum field theory and quantum gravity \cite{Page1983evolution,Rovelli1991time,Muga2008Time,Anderson2012time}.
One possible approach to this problem is to introduce a notion of quantum state associated with the temporal direction, which we refer to as a \emph{temporal state}, or, more generally, a \emph{spatiotemporal state}. One may also associate a Hilbert space with each time slice and introduce a tensor-product structure among different spacetime points
$\bigotimes_{(x,t)} \cH_{(x,t)}$,
\footnote{We emphasize that this statement is only schematic. In theories such as quantum field theory, the tensor-product structure is generally more subtle. In such cases, it is more appropriate to associate observable algebras with spacetime regions, rather than only with spatial regions.}
thereby allowing temporal degrees of freedom to be characterized in a manner analogous to spatial ones and placing space and time on a more equal footing.

Over the past few decades, various temporal state formalisms have been developed, motivated by quantum Bayesian inference, quantum causality, and related considerations. Notable examples include consistent histories \cite{griffiths1984consistent}; Feynman--Vernon influence functional \cite{Feynman1963influenceMatrix} and influence matrix \cite{Lerose2021influencematrix}; multiple-time states \cite{Watanabe1955twostateQM,aharonov2007twostatevectorformalismqauntum,Aharonov2009multi}; pseudo-density operators \cite{fitzsimons2015quantum} (also known as states over time \cite{fullwood2022quantum,Parzygnat2023pdo}); quantum-classical games \cite{gutoski2007toward}, quantum combs \cite{Chiribella2009comb},  process tensors \cite{Pollock2018processtensor}, process matrices \cite{oreshkov2012quantum}; superdensity operators \cite{cotler2018superdensity}; and doubled density operators \cite{jia2024spatiotemporal,Jia2025TemporalKirkwoodDirac,jia2026temporaltomography}, among others. Understanding the connections and distinctions among these various approaches remains an important direction for future investigation \cite{liu2023unification,Parzygnat2023pdo,fullwood2022quantum,Jia2025TemporalKirkwoodDirac}.
In this work, by generalizing the results of Ref.~\cite{Jia2025TemporalKirkwoodDirac}, we develop a spatiotemporal Kirkwood--Dirac phase-space framework that provides a unified description of spatiotemporal quantum states and their quantum correlations.

The spatiotemporal Kirkwood--Dirac phase-space framework can be viewed as a generalized probabilistic theory (GPT). In purely spatial settings, GPTs provide a broad operational framework for characterizing quantum theory and for distinguishing quantum correlations from those allowed by more general physical theories\footnote{In quantum foundations, one seeks to reconstruct quantum mechanics from more fundamental physical principles, such as no-signaling. The GPT framework is well suited for studying quantum theory from an external perspective and for delineating its boundaries. It  naturally extends classical probability theory to encompass a wide class of probabilistic models and offers a unified, operationally motivated approach to states, measurements, and transformations in physical systems, without relying on specific mathematical structures such as Hilbert spaces.  Prominent examples include the Bell--Kochen--Specker theorem\cite{bell1964,kochen1967problem}, where the Tsirelson's bound is the boundary for quantum theory.}.  
It also have many applications in high energy physics, e.g., in black hole information problem, where some generalizations beyond quantum theory is possible \cite{Muller2012GPTblackhole,Braunstein2007QIblackhole}.
A central object in this framework is the quasiprobability distribution, which originates from Wigner's seminal work \cite{Wigner1932on}. Recently, quasiprobability methods have been extended to spatiotemporal settings through the spatiotemporal Kirkwood--Dirac quasiprobability distribution, which itself defines a spatiotemporal state in the sense of a GPT. It has been shown that this framework provides a promising route toward unifying different spatiotemporal state formalisms: by averaging over the spatiotemporal phase space and subsequently performing Bloch tomography, one can recover and relate a variety of existing spatiotemporal state constructions within a common framework~\cite{Jia2025TemporalKirkwoodDirac,jia2024spatiotemporal}. In this work, we establish a more general unification, summarized in Theorem~\ref{theorem:UnificationSingle}, based on a (quasi)probabilistic mixture (a similar idea is discussed in \cite{jia2023quantumspace} for interpreting pseudo-density operator as a (quasi)probabilistic mixture of product states). This construction goes beyond existing unification schemes and also yields new classes of spatiotemporal states. The interrelations among temporal states, Kirkwood--Dirac distributions, and Wightman correlation functions are summarized in Figure~\ref{fig:KDequv}.

\begin{figure}
    \centering
    \includegraphics[width=0.7\linewidth]{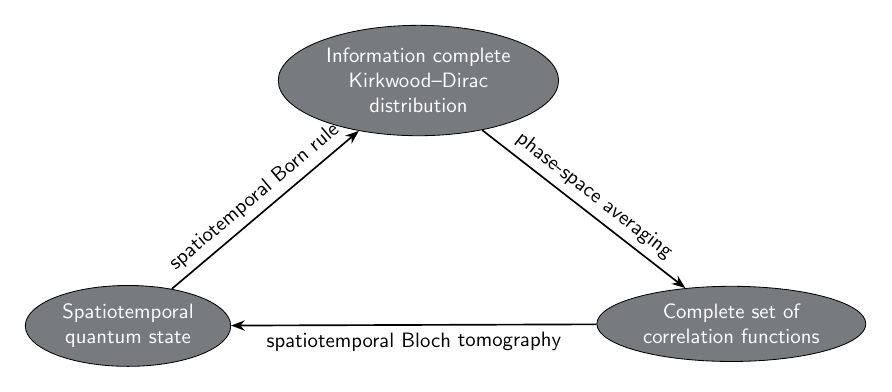}
    \caption{Equivalence between three core concepts for characterizing spatiotemporal quantum correlations: the spatiotemporal quantum state, the information-complete Kirkwood–Dirac distribution, and Wightman correlation functions. }
    \label{fig:KDequv}
\end{figure}

The Kirkwood--Dirac distribution \cite{Kirkwood1933quantum,Dirac1945on} was originally introduced as a complex-valued phase-space description of states $Q_{\rm KD}(a,b)=\langle \psi|a\rangle \langle a|b\rangle \langle b|\psi\rangle$. Like the Wigner function, it yields the correct probability marginals. Its real part is known as the Margenau--Hill quasiprobability distribution \cite{margenau1961correlation}. They has proven to be a useful tool in many contexts, see review \cite{ArvidssonShukur2024KDreview}. This framework was systematically extended to the (spatio)temporal setting in Ref.~\cite{Jia2025TemporalKirkwoodDirac}.
The spatiotemporal Wightman correlation function plays a key role in quantum field theory \cite{weinberg1995quantum}. A key observation is that all Wightman correlation functions are in fact phase-space averages over the spatiotemporal Kirkwood--Dirac distribution.

Roughly speaking, to construct a spatiotemporal state, we begin with a complete orthogonal set of observables $\{\mathcal{O}_{\mu}\}$. For a chosen collection of spacetime points, we consider the corresponding Wightman correlation functions
\begin{equation}
    T^{\mu,\nu,\cdots}
    =
    \big\langle
    \mathcal{O}_{\mu}(x,t)
    \mathcal{O}_{\nu}(y,t')
    \cdots
    \big\rangle.
\end{equation}
These correlation functions can then be assembled into a spatiotemporal state reads \cite{Jia2025TemporalKirkwoodDirac}
\begin{equation}
    \Upsilon
    =
    \frac{1}{\mathcal{Z}}
    \int d\mu\, d\nu \cdots\,
    T^{\mu,\nu,\cdots}\,
    \mathcal{O}_{\mu}(x,t)
    \otimes
    \mathcal{O}_{\nu}(y,t')
    \otimes \cdots,
\end{equation}
where $\mathcal{Z}$ is a normalization factor chosen such that $\Tr\Upsilon=1$. In finite-dimensional quantum systems, the integrals over the observable labels are replaced by discrete sums.
Notice that we need to assign a tensor product structure  $\otimes_{(x,t)}\mathcal{H}_{(x,t)}$, which in standard quantum theory only exist for spatial local degrees of freedom.
The spatiotemporal state is generally non-Hermitian; we therefore refer to it as a \emph{Kirkwood--Dirac spatiotemporal state} $\Ups$. Indeed, applying the spatiotemporal Born rule for a chosen set of projectors or positive operator-valued measures (POVMs) $F_{\alpha}(x,t), F_{\beta}(y,t')$, $\cdots$, yields a spatiotemporal Kirkwood--Dirac quasiprobability distribution $Q_{\rm KD}(\alpha,\beta,\cdots)$.
Hermitianizing the state gives the so-called Margenau--Hill spatiotemporal state
   $ \Upsilon^{\rm MH}=\frac{1}{2}(\Upsilon+\Upsilon^{\dagger})$,
for which the spatiotemporal Born rule yields the corresponding Margenau--Hill quasiprobability distribution, namely the real part of the Kirkwood--Dirac distribution, this reflects the fact  $\Tr  \Upsilon^{\rm MH}=1$.

The choice of correlation functions is therefore crucial. In this work, we focus primarily on finite-dimensional systems to illustrate the main ideas, while a systematic treatment of the infinite-dimensional case will be presented elsewhere. 
We will consider constructions with a single local space $\mathcal{H}_{(x,t)}=\mathcal{H}$, including left-, right-, and mixed-branch spatiotemporal states (the left and right bloom construction discussed in \cite{fullwood2023quantum,Lie2025stateovertime,lie2025probingquantumstatesspacetime} and the one defined in \cite{milekhin2025observable} are special cases of two-time left and right Kirkwood--Dirac temporal state), as well as constructions with a doubled local space $\mathcal{H}_{(x,t)}=\mathcal{H}^L\otimes\mathcal{H}^R$ \cite{jia2024spatiotemporal} (the doubled setting also underlies the process-matrix/process-tensor/quantum-comb formalisms~\cite{Chiribella2009comb,Pollock2018processtensor,gutoski2007toward}) and their many-fold generalizations. Hermitian spatiotemporal states are referred to as being of Margenau--Hill type, which includes the pseudo-density operator \cite{fitzsimons2015quantum} (we will call it L\"uders-von Neumann spatiotemporal state to emphasize the correlation function is averaging over L\"uders-von Neumann distributions) as a special case with a single local space. See Table~\ref{tab:spatiotemporal-state-summary} for a summary.
The spatiotemporal states satisfy the following properties: (i) $\Tr\Ups=1$; (ii) equal-time density operators are recovered as reduced states of the spatiotemporal state; and (iii) a spatiotemporal Born rule holds.
Since (quasi)probabilistic mixtures of spatiotemporal states also satisfy the conditions above, we use this observation to provide a unified description of several existing spatiotemporal-state frameworks.

\begin{table*}[t]
\centering
\caption{
Basic properties of the principal spatiotemporal-state constructions.
All states have unit trace, and every single-time marginal is an ordinary
positive density operator.
}
\label{tab:spatiotemporal-state-summary}
\renewcommand{\arraystretch}{1.45}
\setlength{\tabcolsep}{4pt}
\resizebox{\textwidth}{!}{%
\begin{tabular}{c|c|c|c|c|c}
\hline\hline
&
\textbf{Left}
&
\textbf{Right}
&
\textbf{Mixed-branch}
&
\textbf{Doubled}
&
\textbf{Many-fold}
\\
\hline

\shortstack{\textbf{Kirkwood--Dirac}}
&
\shortstack{
$\Upsl^{\rm KD}$\\
Fixed left ordering\\
Generally non-Hermitian\\
Generally nonnormal\\
$\Upsl^{\rm KD}=(\Upsr^{\rm KD})^\dagger$
}
&
\shortstack{
$\Upsr^{\rm KD}$\\
Fixed right ordering\\
Generally non-Hermitian\\
Generally nonnormal\\
$\Upsr^{\rm KD}=(\Upsl^{\rm KD})^\dagger$
}
&
\shortstack{
$\overline{\Upsilon}^{\rm KD}_{\ttB}$\\
Arbitrary branch assignment\\
Generally non-Hermitian, generally nonnormal\\
Adjoint branches occur in pairs
$\overline{\Upsilon}^{\rm KD}_{\ttB}=(\overline{\Upsilon}^{\rm KD}_{\ttB^c})^{\dagger}$
}
&
\shortstack{
$\Upsd^{\rm KD}$\\
Retains both contour branches\\
Generally non-Hermitian, generally nonnormal\\
All single-branch states are marginals
}
&
\shortstack{
$\Ups^{[2m],{\rm KD}}$\\
Defined on a $2m$-fold contour\\
Generally non-Hermitian, generally nonnormal\\
Encodes OTOCs and multi-fold\\
Wightman correlation functions
}
\\
\hline
\shortstack{\textbf{Margenau--Hill}}
&
\multicolumn{2}{c|}{
\shortstack{
$\displaystyle
 \Ups^{\rm MH}
 =\frac{1}{2}(\Upsl^{\rm KD}+\Upsr^{\rm KD})$\\
Hermitian and normal, generally nonpositive
}}
&
\shortstack{
$\overline{\Upsilon}^{\rm MH}_{\ttB}
=\frac{1}{2}
(\overline{\Upsilon}^{\rm KD}_{\ttB}
+\overline{\Upsilon}^{\rm KD}_{\ttB^c})$\\
Hermitianization of an adjoint\\
pair of mixed-branch KD states\\
Hermitian and normal, generally nonpositive
}
&
\shortstack{
$\displaystyle
 \Upsd^{\rm MH}
 =\frac{1}{2}(\Upsd^{\rm KD}
       +(\Upsd^{\rm KD})^\dagger)$\\
Hermitian and normal, generally nonpositive
\\
Produces real doubled correlations
}
&
\shortstack{
$\Ups^{[2m],{\rm MH}}
=\frac{1}{2}
(\Ups^{[2m],{\rm KD}}
+(\Ups^{[2m],{\rm KD}})^{\dagger})$\\
Hermitianization of
$\Ups^{[2m],{\rm KD}}$\\
Hermitian and normal, generally nonpositive
\\
Retains real multi-fold correlations
}
\\
\hline

\shortstack{\textbf{L\"uders--von Neumann}\\\textbf{(Pseudo-density operator)}}
&
\multicolumn{3}{c|}{
\shortstack{
$\Ups^{\rm LvN}$: Equal-weight probabilistic mixture of all mixed-branch KD states\\
Hermitian and normal, generally nonpositive\\
For two times:
$\Ups^{\rm LvN}=\Ups^{\rm MH}$
}}
&
\shortstack{
NA
}
&
\shortstack{
NA
}
\\
\hline\hline
\end{tabular}%
}
\end{table*}

Spatiotemporal states provide a general framework for understanding quantum entanglement and other correlations across spacetime. For spatially separated subsystems, the entanglement of a pure state is characterized by entanglement entropy, making entropy and mutual information fundamental quantities in the study of quantum correlations. 
Their spatiotemporal generalizations have been explored within several different formalisms, see e.g.~\cite{cotler2018superdensity,fullwood2026entropypseudodensitymatrix,Parzygnat2023SVD,Caputa2024SVD,jia2023quantumspace,Glorioso2024spacetimeMutual,milekhin2025observable,Glorioso2024spacetimeMutual,Doi2023Pseudoentropy,Doi2023Timelike}.
A further motivation comes from quantum gravity, where entropy plays a central role in the AdS/CFT correspondence. In particular, the Ryu--Takayanagi formula $S(\sA)=\operatorname{Area}(\Gamma_{\sA})/4G_N$ relates the entanglement entropy of a boundary conformal field theory to the area of a minimal surface in the bulk gravitational theory~\cite{Ryu2006}, providing a powerful geometric characterization of quantum entanglement. Related developments, such as pseudo-entropy~\cite{Doi2023Pseudoentropy}, extend entropic concepts to transition matrices, which originate from the two-state vector formalism~\cite{Watanabe1955twostateQM,aharonov2007twostatevectorformalismqauntum,Aharonov2009multi}, providing an approach to understanding timelike entropy. Our formalism provides an alternative framework that naturally extends to multi-time and general spatiotemporal settings. A deeper understanding of spatiotemporal entropy may therefore provide new insights into the interplay between quantum information, spacetime, and quantum gravity.

We use the temporal Kirkwood--Dirac quasiprobability distribution developed in
Ref.~\cite{Jia2025TemporalKirkwoodDirac} as the main tool of this work. We carefully show that: The Wightman correlation functions can be understood as quasiprobabilistic averages over temporal phase space.
Given an informationally complete set of Hermitian Hilbert--Schmidt operators, the complete Wightman correlation tensor determines a spatiotemporal state through a generalized Bloch formula. Conversely, the temporal Born rule recovers the corresponding Kirkwood--Dirac distribution. These three descriptions therefore encode the same information, while the equal-time restriction of the spatiotemporal state reduces to the ordinary density operator.

One of our principal technical result is a recursive construction of spatiotemporal states, including left, right, mixed branch and doubled ones. The left state is the adjoint of the right state, while mixed-branch states are obtained by changing the ordering of the temporal links or by taking appropriate marginals of the doubled state. The doubled construction coincides with the doubled density operator of Ref.~\cite{jia2024spatiotemporal}. The recursion preserves unit trace, reproduces all physical single-time marginals, and admits a matrix-product-operator representation. It also extends naturally to processes with memory through a system--environment dilation and
to many-fold contours, where a single spatiotemporal state encodes the corresponding OTOC and other many-fold correlators.

We quantify the departure of a spatiotemporal state from an ordinary density operator by its \emph{temporality}. For Hermitian states, it is the negative spectral weight and determines the trace-norm distance from the set of density operators. Many properties of the temporality are derived.  

For two-time states, we give a broader unification of existing temporal state constructions through a continuous family $\Ups^{(s)}$, whose endpoints are the left and right Kirkwood--Dirac states and whose symmetric point contains the Leifer--Spekkens construction. We establish an $s$-dependent Petz time-reversal relation for this family; at $s=1/2$ it reduces to the standard Petz recovery map. We also apply this to the thermal states to introduce the KMS family of temporal states and discuss their properties.

Finally, we develop complementary notions of spatiotemporal entanglement and entropy. Vectorizing a temporal operator produces a pure state on a doubled operator Hilbert space and leads to temporal Schmidt spectra, entanglement entropies, mutual information, and finite-memory area laws. For Hermitian Margenau--Hill spatiotemporal, we define spatiotemporal separability and a corresponding robustness measure. We also discuss the spatral property of spatiotemporal state and show $\|\Ups\|_\infty\leq1$, so its eigenvalues lie in the closed unit disk. This allows us to distinguish spectral, spectral-magnitude, SVD, and normalized SVD entropies, and to separate temporality into spectral and nonnormal contributions. For Margenau--Hill states the spectral-magnitude and SVD hierarchies coincide, whereas they are inequivalent for general Kirkwood--Dirac states. We further show that a genuine nonnegative SVD mutual information is obtained by normalizing the global magnitude operator before taking its marginals.

The paper is organized as follows.
Section~\ref{sec:WightmanCor} reviews Wightman correlators and the
Schwinger--Keldysh formalism.
Section~\ref{sec:TQD} introduces temporal Kirkwood--Dirac distributions and their phase-space averaging.
Section~\ref{sec:tempstate} develops the spatiotemporal-state recursion, many-fold extensions, and temporality.
Section~\ref{sec:TwoTimeState} discusses two-time unification, Petz time reversal, and KMS symmetry.
Sections~\ref{sec:STentropy} and~\ref{sec:entropy} develop the vectorized, spectral, and SVD approaches to spatiotemporal entanglement and entropy.

\section{Preliminaries on Wightman correlation functions for open-system quantum field theory}
\label{sec:WightmanCor}

\subsection{Multi-time Wightman correlation functions}
\label{sec:keldysh}
In zero-temperature closed system quantum field theory, one commonly considers Wightman correlation functions of the form
\begin{equation}
    \Delta(x,y,\ldots,z)
    =
    \langle O_A(x) O_B(y) \cdots O_C(z) \rangle
    =
    \langle \mathrm{vac} |
    O_A(x) O_B(y) \cdots O_C(z)
    | \mathrm{vac} \rangle .
\end{equation}
More generally, for finite-temperature or dissipative quantum system, a state described by a density operator $\varrho$, one has the so-called \emph{left} Wightman correlation functions\footnote{Here, left and right refer to placing operators to the left and right of the density operator, respectively. The distinction among left, right, mixed-branch, doubled, and related constructions will be useful in the subsequent discussion.}
\begin{equation}
    \Delta(x,y,\ldots,z)
    =
    \operatorname{Tr}
    \left[
         O_A(x) O_B(y) \cdots O_C(z)\varrho
    \right].
\end{equation}
We can also introduce a doubled Wightman correlation functions (in many situations, we will also assume that the two sides are of equal time, namely, $t_x=t_{x'}$,$t_y=t_{y'}$, so on and so forth),
\begin{equation}
    \Delta(x,y,\ldots,z;x',y',\ldots,z')
    =
    \operatorname{Tr}
    \left[
        O_A(x) O_B(y) \cdots O_C(z)
        \varrho\,
        O'_A(x') O'_B(y') \cdots O'_C(z')
    \right].
\end{equation}
Note also that \emph{mixed-branch} Wightman correlation functions can be obtained from the doubled one by setting some observables on either the left or right branch to the identity operator; for example, $\Tr[\cdots O_{A}(t_3) O_{A}(t_1) \varrho O_{B}(t_0) O_{B}(t_2) \cdots ]$.

These correlation functions arise naturally in the description of quantum instruments and sequential measurements. At first glance, cyclicity of the trace seems to imply that there is little difference between the left/right/mixed-branch and doubled Wightman correlation functions. However, as we shall see, once evolutions are inserted between observables, the doubled version becomes the more essential one; to make every version a right one, one needs to perform some postprocessing.

To illustrate this point, we consider the following generalization of observables called quantum instrument \cite{davies1970operational}.
    A quantum instrument is a collection of completely positive, trace-non-increasing (CPNIP) maps
    $\{\mathcal{F}_{\alpha}\}_{\alpha}$ such that
$\sum_{\alpha} \mathcal{F}_{\alpha}$
    is a completely positive, trace-preserving (CPTP) map. For an input state $\varrho$, the probability of outcome $\alpha$ is
    \begin{equation}
        p(\alpha)
        =
        \operatorname{Tr}\left[\mathcal{F}_{\alpha}(\varrho)\right].
    \end{equation}
    Each map admits a Kraus decomposition,
    \begin{equation}
        \mathcal{F}_{\alpha}(\varrho)
        =
        \sum_k K_{\alpha,k} \varrho K_{\alpha,k}^{\dagger},
    \end{equation}
    and hence
    \begin{equation}
        p(\alpha)
        =
        \sum_k
        \operatorname{Tr}
        \left[
            K_{\alpha,k} \varrho K_{\alpha,k}^{\dagger}
        \right]
        =
        \sum_k
        \operatorname{Tr}
        \left[
             K_{\alpha,k}^{\dagger} K_{\alpha,k}\varrho
        \right].  
    \end{equation}
    For one-time case, the quantum instrument can always be transformed in the POVM $F_{\alpha}=\sum_k K_{\alpha,k}^{\dagger} K_{\alpha,k}$, however, as we will see, for multi-time case, this is not the case, we must\footnote{There is no natural way to transform multi-time quantum instruments to POVMs.} consider a doubled Kraus expression.
    The normal observable is just von Neuman projective quantum instrument with a chosen set of outputs, with $K_{\alpha}=\Pi_{\alpha}$ for projectors $\Pi_{\alpha}$ arising from $O_A=\sum_{\alpha}  \alpha \Pi_{\alpha}$.

If the Kraus operators can be expanded in a chosen operator basis $\sigma_{\mu}$ \footnote{For example, one may consider the Hilbert-Schmidt basis for the operator space $\mathbf{B}(\mathcal{H})$ equipped with the inner product $\langle J,K\rangle= \Tr [J^{\dagger}K]$. In this work, we always employ a Hermitian operator basis; such a basis exists (we restrict ourselves to Hilbert spaces with these nice properties), as one can Hermitianize any given set of basis elements by replacing $K_i$ with $\frac{K_i+K_{i}^{\dagger}}{2}$. For finite-dimensional spaces, we choose this basis so that $\sigma_{0}=\mathds{1}$ and $\Tr \sigma_{\mu}\sigma_{\nu}= \mathrm{dim}\mathcal{H}\, \delta_{\mu,\nu}$. 
For $\dim\cH=2$, they reduce to the Pauli operators, while for $\dim\cH=3$, they are the Gell--Mann matrices~\cite{Gell-mann1962symmetry}. For explicit matrix for for arbitrary dimensions, see e.g., \cite[Section~3]{wei2022antilinear} and reference therein.}, namely, we have $\Tr [ \sigma_{\mu}\varrho \sigma_{\nu}]$, the probabilities of instrument outcomes can be reconstructed from the corresponding Wightman correlation functions, since $ \operatorname{Tr}
        \left[
            K_{\alpha,k} \varrho K_{\alpha,k}^{\dagger}
        \right]=\sum_{\mu,\nu} c_{\alpha,k;\mu} c^*_{\alpha,k;\nu}\Tr [ \sigma_{\mu}\varrho \sigma_{\nu}]$ where $K_{\alpha,k}=\sum_{\mu}c_{\alpha,k;\mu}\sigma_{\mu}$.

We first consider sequential projective measurements. Let
\begin{equation}
    O_A = \sum_a a\, \Pi_a^A,
    \qquad
    O_B = \sum_b b\, \Pi_b^B,
\end{equation}
and suppose that $O_A$ is measured first, followed by $O_B$. The joint probability (von Neumann probability) of obtaining outcomes $a$ and $b$ is given by the L\"uders rule:
\begin{equation}
    p_{\rm vN}(a,b)
    =
    \operatorname{Tr}
    \left[
        \Pi_b^B \Pi_a^A
        \varrho
        \Pi_a^A \Pi_b^B
    \right].
\end{equation}
The corresponding sequential-measurement correlation function is
\begin{equation}
    \langle O_AO_B\rangle_{\rm vN}=\mathbb{E}_{\rm vN}(a,b)
    =
    \sum_{a,b} ab\, p_{\rm vN}(a,b).
\end{equation}
Using cyclicity of the trace and $(\Pi_b^B)^2=\Pi_b^B$, this can be rewritten as
\begin{equation}
    \langle O_AO_B\rangle_{\rm vN}
    =
    \sum_a a\,
    \operatorname{Tr}
    \left[
        \Pi_a^A O_B \Pi_a^A\varrho
    \right].
\end{equation}
This quantity is generally not determined solely by the three-point right Wightman correlation function
 $\operatorname{Tr}(\varrho \sigma_{\mu} \sigma_{\nu} \sigma_{\beta})$. 
For sequential quantum instruments, the joint probability is given by
\begin{equation}
    p(\alpha,\beta,\ldots,\mu)
    =
    \Tr\!\left[
    \cM_{\mu}\circ\cdots\circ\cK_{\beta}\circ\cF_{\alpha}(\varrho)
    \right].
\end{equation}
The corresponding correlation functions are then obtained by averaging over the measurement outcomes.

Note that sequential-measurement correlation function is not the same as Wightmann correlation functions.
Our starting point is that the standard Wightman correlation function is in fact an averaging over the Kirkwood--Dirac phase-space distribution \cite{Kirkwood1933quantum,Dirac1945on}
\begin{equation}
    Q_{\rm KD}(a,b)=\Tr \left[ \Pi_a\Pi_b \varrho\right],
\end{equation}
namely
\begin{equation}
    \langle O_AO_B\rangle_{\rm KD} =\sum_{a,b}ab\, Q_{\rm KD}(a,b).
\end{equation}
This will serve as a crucial starting point for our consideration of the temporal phase-space Kirkwood--Dirac distribution and its applications in constructing spatiotemporal state and for calculating the Wightman correlation functions.

Notice that for commutative measurements (i.e., those that are jointly measurable), the expression simplifies to
\begin{equation}
    \langle O_AO_B\rangle_{\rm vN}
    =
    \operatorname{Tr}
    \left[
        \varrho\,
        O_A O_B
    \right]=:\langle O_AO_B\rangle_{\rm KD},
\end{equation}
since the observables $O_A$ and $O_B$ share a common set of eigenstates. In general, however, they are distinct, and Wightman correlation functions can be complex-valued.

We now consider the case with time steps \(t_0<\cdots<t_n\). The system is prepared in an initial state \(\varrho_{t_0}\), and the evolution between successive times is described by quantum channels \(\mathcal{E}_{t_i,t_{i+1}}\), corresponding to a memoryless multi-time quantum process. Since the general case is considerably more involved, we postpone its discussion to the next section and begin with the simpler setting of unitary evolution \(U_{t_0,t}= \exp (-i\int_{t_0}^t H dt)\) and a pure initial state \(|\mathrm{vac}\rangle\), which is the standard setup in quantum field theory.

In textbook quantum field theory, observables \(O_{A_k}(t_k)\) are defined in the Heisenberg picture as \(O_{A_n}(t_k)=U_{t_0,t_k}^{\dagger}O_{A_k}U_{t_0,t_k}\), and we have
\begin{equation}
\begin{aligned}
   &\langle\vac| O_{A_n}(t_n) \cdots O_{A_0}(t_0) |\vac\rangle\\
   =&\langle\vac| U_{t_0,t_n}^{\dagger} O_{A_n} U_{t_{n-1},t_{n}}O_{A_{n-1}} U_{t_{n-2},t_{n-1}} \cdots U_{t_1,t_2}O_{A_1}  U_{t_0,t_1} O_{A_0} |\vac\rangle,
\end{aligned}
\end{equation}
where for intermediate unitary evolution, we have apply $U_{t_0,t_{k}} U_{t_0,t_{k-1}}^{\dagger}=U_{t_{k-1},t_k}$.
For a mixed state \(\varrho_{t_0}\), we have
\begin{equation}
\begin{aligned}
   &\Tr \left[ O_{A_n}(t_n) \cdots O_{A_0}(t_0) \varrho_{t_0} \right]\\
   =&\Tr \left[ U_{t_0,t_n}^{\dagger} O_{A_n} U_{t_{n-1},t_{n}}O_{A_{n-1}} U_{t_{n-2},t_{n-1}} \cdots U_{t_1,t_2}O_{A_1}  U_{t_0,t_1} O_{A_0}
   \varrho_{t_0}\right].
\end{aligned}
\end{equation}
This can also be interpreted as an averaging over the \emph{temporal} Kirkwood--Dirac phase space.

Consider the complex-valued quasiprobability distribution,
\begin{equation}
    Q_{\rm KD}(a_n,\cdots,a_0)=\Tr \left[ \Pi_{a_n} U_{t_{n-1},t_n}{\Pi}_{a_{n-1}} U_{t_{n-2},t_{n-1}} \cdots U_{t_1,t_2}{\Pi}_{a_1} U_{t_0,t_1} \Pi_{a_0} \varrho_{t_0}  U_{t_0,t_n}^{\dagger} \right].
\end{equation}
Using the decomposition \(U_{t_0,t_n}^{\dagger}=U_{t_0,t_1}^{\dagger}\cdots U_{t_{n-1},t_n}^{\dagger}\), the above expression can be recognized as a left Kirkwood--Dirac phase-space distribution \cite{Jia2025TemporalKirkwoodDirac,jia2026temporaltomography}, since the identity operator is inserted on the right-hand side of the density operator at each time step. The evolution is now implemented by \(\mathcal{E}_{t_k,t_{k+1}}=U_{t_k,t_{k+1}}(\bullet)U_{t_k,t_{k+1}}^{\dagger}\), and at time \(t_k\) the phase-space operation takes the form \(\mathcal{P}_{a_k}=\Pi_{a_k}(\bullet)\mathds{1}\). We then have
\begin{equation}
\begin{aligned}
   \Tr \left[ O_{A_n}(t_n) \cdots O_{A_0}(t_0) \varrho_{t_0} \right]=\mathbb{E}_{\rm KD}(a_n,\cdots,a_0)
   =\sum_{a_n,\cdots,a_0} a_n\cdots a_0\, Q_{\rm KD}(a_n,\cdots,a_0).
\end{aligned}
\end{equation}
This naturally motivates the use of the Kirkwood--Dirac phase-space distribution as a tool for characterizing temporal quantum correlations. A similar analysis applies to the right and doubled Wightman correlation functions.

\begin{remark}
    The Kirkwood--Dirac phase-space distributions are complex-valued; their real parts define the real-valued Margenau--Hill quasiprobability distribution \cite{margenau1961correlation}, while the imaginary part sums to zero. Since the Wightman correlation function is obtained by averaging over the Kirkwood--Dirac phase-space distributions, it can likewise be decomposed into real and imaginary contributions:
    \begin{equation}
    \begin{aligned}
         \Tr \left[ O_{A_n}(t_n) \cdots O_{A_0}(t_0) \varrho_{t_0} \right]=&\mathbb{E}_{\rm KD}(a_n,\cdots,a_0)\\
         =&\mathbb{E}_{\rm MH}(a_n,\cdots,a_0)+i\mathbb{E}_{\rm Im-KD}(a_n,\cdots,a_0).
    \end{aligned}
    \end{equation}
    The imaginary part of the Kirkwood--Dirac phase-space distribution plays a key role in characterizing nonclassicality of the system \cite{ArvidssonShukur2024KDreview}.
\end{remark}

%=====================================================================
\subsection{Schwinger--Keldysh formalism}
\label{subsec:keldysh}
Let us briefly review the Schwinger--Keldysh formalism for non-equilibrium quantum field theories \cite{kamenev2023field,sieberer2016keldysh}, in a form suitable for the phase-space constructions that follow. The two-time correlation functions have a structure closely analogous to the averages taken over Kirkwood--Dirac distributions.

For a system prepared in $\varrho_{t_0}$ and evolving under a (possibly time-dependent, possibly dissipative) generator, expectation values of Heisenberg operators are computed on the closed time path $\mathcal{C}=\mathcal{C}_+\cup\mathcal{C}_-$, running forward from $t_0$ to $t_f$ along $\mathcal{C}_+$ and backward along $\mathcal{C}_-$. Fields acquire a contour index, $\phi_\pm$, and the generating functional is
\begin{equation}
  Z[j_+,j_-]=\Tr\!\left[\;\mathcal{T}\,e^{-i\int (H-j_+\phi)dt}\;\varrho_{t_0}\;
  \bar{\mathcal{T}}\,e^{+i\int (H-j_-\phi)dt}\right],
\end{equation}
with $\mathcal{T}$ ($\bar{\mathcal{T}}$) denoting time (anti-time) ordering. The two contour branches correspond precisely to the two sides on which operators can be inserted relative to the density operator: $\phi_+$ multiplies $\varrho$ from the left, while $\phi_-$ multiplies it from the right. This is the field-theoretic origin of the \emph{left/right/doubled/mixed branch} distinction that organizes the entire paper.

The left and right Wightman correlation functions are
$G^{>}(t,t')=-i\langle\phi(t)\phi(t')\rangle$ and
$G^{<}(t,t')=-i\langle\phi(t')\phi(t)\rangle$.
Consider the Keldysh rotation
\begin{equation}
  \phi_{\rm cl}=\tfrac{1}{2}(\phi_++\phi_-),\qquad
  \phi_{\rm q}=\phi_+-\phi_-,
\end{equation}
where $\phi_{\rm cl}$ is the classical field that survives in the classical limit, while $\phi_{\rm q}$ is the quantum field, which vanishes in the classical limit. Computing the correlation functions among them yields
\begin{equation}
  G^{K}(t,t')=-i\langle\{\phi(t),\phi(t')\}\rangle,\qquad
  G^{R/A}(t,t')=\mp i\theta(\pm(t-t'))\langle[\phi(t),\phi(t')]\rangle,
\end{equation}
where $G^K$ is the Keldysh (statistical) component and $G^{R/A}$ are the retarded/advanced (spectral) components.
Note that $G^K=G^{>}+G^{<}$  and $G^R-G^A=G^{>}-G^{<}$, thus
\begin{equation}\label{eq:keldysh-decomp}
  G^{>}=\frac{1}{2}\left(G^{K}+ G^{R}-G^{A}\right), \quad  G^{<}=\frac{1}{2}\left(G^{K}- G^{R}+G^{A}\right).
\end{equation}
Equation~\eqref{eq:keldysh-decomp} is the field-theoretic avatar of the Kikwood--Dirac/Margenau--Hill decomposition used above: the real (statistical) part of the temporal Kikwood--Dirac averaging corresponds to the Keldysh function, while its imaginary (response) part corresponds to the spectral function.

In thermal equilibrium, they are related by the Kubo--Martin--Schwinger (KMS) condition (in frequency space) $G^{>}(\omega)=e^{\beta\omega}G^{<}(\omega)$, or equivalently by the fluctuation-dissipation theorem
$G^K(\omega)=\coth\left(\frac{\beta\omega}{2}\right)\bigl(G^R(\omega)-G^A(\omega)\bigr)
$. In Section~\ref{subsec:KMS}, we show that KMS symmetry selects a distinguished member of the Kirkwood--Dirac family, namely, the one that is covariant under Petz time reversal.

Finally, an open system in the Keldysh formalism is obtained by integrating out the environment for a unitary system-environment evolution. When the initial system-environment state is not of product form, the reduced dynamics is no longer a CPTP map which makes it difficult to deal with in the traditional framework; nonetheless, the temporal Kirkwood--Dirac quasiprobability formalism remains applicable to this scenario. We treat this case explicitly in Section~\ref{subsec:withMemo}.

\section{Temporal Kirkwood--Dirac quasiprobability distributions and temporal phase space averaging}
\label{sec:TQD}

Let us now turn to the general framework of temporal quasiprobability distributions (we will focus on temporal setting, generalization to spatiotemporal setting is straightforward) for closed- and open-system quantum field theory. In fact, we will consider an even more intricate scenario where the system (S) and environment (E) may be initially entangled and then evolve under a joint unitary evolution $U^{SE}_{0,t}$, after which we trace out the environment. This goes beyond the standard quantum channel description, since obtaining a quantum channel from the joint evolution $U^{SE}_{0,t}$ requires the initial state to be of product form \cite{Nielsen2010}. We thus distinguish between memoryless and memory cases in open quantum field theory. Closed-system quantum field theory, of course, always falls into the memoryless case.
We will consider a discrete series of time slices, $t_0<\cdots<t_n$, introducing a phase space $\alpha_k$ for each time step $t_k$. The temporal quasiprobability distributions is thus a quasiprobability distribution for the trajectories in temporal phase space.

\subsection{Temporal Kirkwood--Dirac quasiprobability distributions}

\subsubsection{Without memory}
 
We first consider the memoryless case. For a quantum multi-time process with initial state $\varrho_{t_0}$ and evolution CPTP maps $\mathcal{E}_{t_i,t_{i+1}}$ (for closed-system quantum field theory, $\mathcal{E}_{t_k,t_{k+1}}=U_{t_k,t_{k+1}}(\bullet)U_{t_k,t_{k+1}}^{\dagger}$ and $\varrho_{t_0}=|\vac\rangle \langle \vac|$), we define the Kirkwood--Dirac TQD as
\begin{equation}
Q_{\rm KD}(\alpha_n, \dots, \alpha_0)
=
\operatorname{Tr} \Big[ \mathcal{P}^{t_n}_{\alpha_n} \circ \mathcal{E}_{t_{n-1},t_n} \circ \cdots \circ \mathcal{P}^{t_1}_{\alpha_1} \circ \mathcal{E}_{t_0,t_1} \circ \mathcal{P}^{t_0}_{\alpha_0} (\varrho_{t_0}) \Big],
\label{eq:temporal_quasiprobability}
\end{equation}
where $\alpha_k$ denotes a phase-space point at time $t_k$, and $\mathcal{P}^{t_k}_{\alpha_k}$ is the corresponding phase-space operation, which is generally a linear map but not completely positive or trace-preserving. Typical choices~\cite{Jia2025TemporalKirkwoodDirac} include the right projection $\mathcal{P}^{t_k}_{\beta_k}(\bullet) = \bullet\, \Pi_{\beta_k}$, the left projection $\mathcal{P}^{t_k}_{\alpha_k}(\bullet) = \Pi_{\alpha_k}\, \bullet$, and the doubled projection $\mathcal{P}^{t_k}_{\alpha_k,\beta_k}(\bullet) = \Pi_{\alpha_k}\, \bullet\, \Pi_{\beta_k}$, where $\{\ket{\alpha_k}\}$ and $\{\ket{\beta_k}\}$ are orthonormal bases of the Hilbert space.

To render the temporal Kirkwood--Dirac quasiprobability distribution informationally complete, these projectors can be generalized to an informationally complete POVM (IC-POVM) $\{F_{\alpha}\}$ (the collection of IC-POVMs in different time steps are called spatiotemporal quantum frame). We refer to the resulting constructions as the left, right, and doubled informationally complete temporal Kirkwood--Dirac quasiprobability distributions, corresponding to the left, right, and doubled insertions
\begin{equation}
    \mathcal{P}_{\alpha}=F_{\alpha}(\bullet)\mathds{1},
    \qquad
    \mathcal{P}_{\beta}=\mathds{1}(\bullet)F'_{\beta},
    \qquad
    \mathcal{P}_{\alpha,\beta}=F_{\alpha}(\bullet)F'_{\beta},
\end{equation}
respectively (also notice projectors are special cases of POVM). We denote these distributions by
\begin{equation}
    \Ql_{\rm KD}(\alpha_n,\cdots,\alpha_0),
    \qquad
    \Qr_{\rm KD}(\beta_n,\cdots,\beta_0),\qquad
    \Qd_{\rm KD}
    (\alpha_n,\cdots,\alpha_0;\beta_n,\cdots,\beta_0),
\end{equation}
where $\Ql_{\rm KD}$, $\Qr_{\rm KD}$, and $\Qd_{\rm KD}$ correspond to the left, right, and doubled constructions, respectively. Explicitly, for IC-POVM insertions,
\begin{align}
  \Ql_{\rm KD}(\alpha_n,\dots,\alpha_0)&=\Tr\!\left[F_{\alpha_n}\cE_{t_{n-1},t_n}\!\left(\cdots F_{\alpha_1}\cE_{t_0,t_1}(F_{\alpha_0}\varrho_{t_0})\cdots\right)\right],\label{eq:QL}\\
  \Qr_{\rm KD}(\beta_n,\dots,\beta_0)&=\Tr\!\left[\cE_{t_{n-1},t_n}\!\left(\cdots \cE_{t_0,t_1}(\varrho_{t_0}F'_{\beta_0})F'_{\beta_1}\cdots\right)F'_{\beta_n}\right],\label{eq:QR}\\
  \Qd_{\rm KD}(\alpha_n,\dots,\alpha_0;\beta_n,\dots,\beta_0)&=
  \Tr\left[F_{\alpha_n}\cE_{t_{n-1},t_n}\left(\cdots F_{\alpha_1}\cE_{t_0,t_1}(F_{\alpha_0}\varrho_{t_0}F'_{\beta_0})F'_{\beta_1}\cdots\right)F'_{\beta_n}\right].\label{eq:QD}
\end{align}
We may also choose a mixture of left and right temporal slots by introducing a branch, for example,
$\ttB=(\cdots,t_1^R,t_0^L)$, which gives rise to a mixed-branch distribution
$\overline{Q}^{\ttB}_{\rm KD}(\cdots,\alpha_1,\beta_0)$.
Each branch $\ttB$ has a dual branch $\ttB'$, obtained by replacing every temporal slot with its opposite chirality, namely $L\leftrightarrow R$.
Some elementary but important consistency relations follow immediately and will be used repeatedly:
\begin{align}\label{eq:marginals}
  \sum_{\beta_n,\cdots, \beta_0}\Qd_{\rm KD}(\alpha;\beta)=\Ql_{\rm KD}(\alpha),\,\,
  \sum_{\alpha_n,\cdots ,\alpha_0}\Qd_{\rm KD}(\alpha;\beta)=\Qr_{\rm KD}(\beta),\,\,\\
  \sum_{\ttB'} \Qd_{\rm KD} =\overline{Q}^{\ttB}_{\rm KD},\quad 
  \Ql_{\rm KD}(\alpha)^{*}=\Qr_{\rm KD}(\alpha), \quad \overline{Q}^{\ttB}_{\rm KD}=(\overline{Q}^{\ttB'}_{\rm KD})^*
\end{align}
and, when the two branches use the same projectors, the diagonal of the doubled distribution is the L\"uders--von Neumann distribution,
\begin{equation}\label{eq:vNdiag}
  Q_{\rm LvN}(\alpha_n,\dots,\alpha_0)=\Qd_{\rm KD}(\alpha_n,\dots,\alpha_0;\alpha_n,\dots,\alpha_0)\;\ge 0 .
\end{equation}
Equation~\eqref{eq:vNdiag} is the precise sense in which the doubled distribution is the fundamental object: the operationally realizable (sequential L\"uders) statistics is its diagonal, whereas the left and right distributions are its marginals. The off-diagonal entries $\alpha\neq \beta$ carry the coherences between measurement branches that are destroyed by an actual sequential measurement, and are accessible only interferometrically or via quantum snapshotting \cite{Jia2025TemporalKirkwoodDirac,jia2026temporaltomography}.

These temporal Kirkwood--Dirac distributions  are generally complex-valued quasiprobability distribution and are therefore referred to as temporal Kirkwood--Dirac distributions \cite{Kirkwood1933quantum,Dirac1945on,Jia2025TemporalKirkwoodDirac}. 
They can be regarded as quasiprobabilistic temporal states in the generalized probabilistic theory , equivalent to the temporal state in operator form~\cite{Jia2025TemporalKirkwoodDirac}.

\subsubsection{With memory}
\label{subsec:withMemo}
Consider a quantum system $S$ coupled to an environment $E$, with the system and environment initially prepared in a joint state $\varrho_{SE}$, which may or may not be a product state. We consider the scenario illustrated in Figure~\ref{fig:QuantumComb}, where the joint system and environment undergo a unitary evolution $U^{SE}_{t_0,t}$. More generally, one may consider a CPTP evolution $\mathcal{E}_{t_0,t}$. By a multi-time version of the Stinespring dilation theorem, any CPTP evolution can be represented as a unitary evolution on an enlarged system. Thus, in many situations, it suffices to consider the unitary channel
\begin{equation}
\mathcal{U}_{t_0,t}(\bullet)
=
U^{SE}_{t_0,t}\,\bullet\,\big(U^{SE}_{t_0,t}\big)^\dagger.
\end{equation}
At each of the $n+1$ discrete time steps $t_0,\ldots,t_n$, we leave the system's slot open; after the final time step, we trace out the environment. The resulting object is commonly referred to as a process tensor or quantum comb \cite{Chiribella2009comb,Milz2021quantum}.

\begin{figure}
    \centering
    \includegraphics[width=0.8\linewidth]{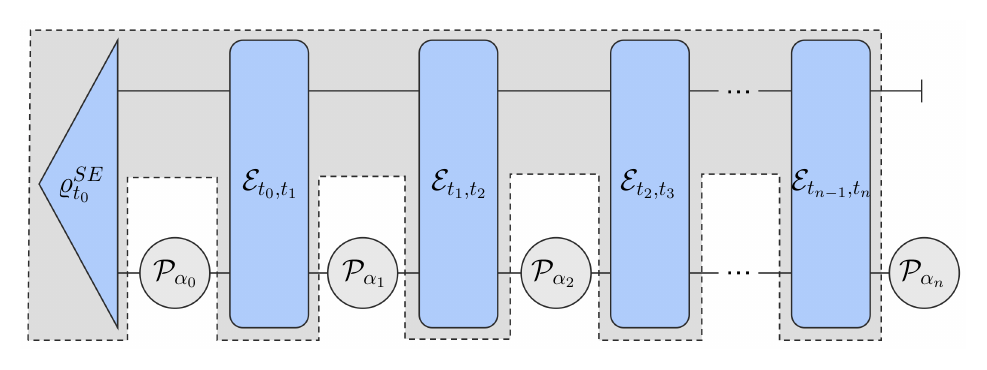}
    \caption{The general multi-time quantum process with memory, which can be represented as a quantum comb.}
    \label{fig:QuantumComb}
\end{figure}

At each time $t_k$, we insert a phase-space operation $\mathcal{P}_{\alpha_k}^{t_k}$ and then apply the unitary evolution $U^{SE}_{t_k,t_{k+1}}$ to the joint system and environment.
The resulting temporal Kirkwood--Dirac distribution is
\begin{equation}
Q_{\rm KD}(\alpha_n, \dots, \alpha_0)
=
\operatorname{Tr}_{SE} \Big[ \mathcal{P}^{t_n}_{\alpha_n} \circ \mathcal{U}_{t_{n-1},t_n} \circ \cdots \circ \mathcal{P}^{t_1}_{\alpha_1} \circ \mathcal{U}_{t_0,t_1} \circ \mathcal{P}^{t_0}_{\alpha_0} (\varrho^{SE}_{t_0}) \Big],
\label{eq:temporal_quasiprobability_memory}
\end{equation}
where the phase-space operations act as $\mathcal{P}\otimes \Id^E$ on the joint space. When the inserted operations are chosen from an informationally complete set, the resulting multi-time quasiprobability distribution contains the full information about the underlying quantum process, including its temporal correlations and memory effects.

%%%%%%%%%

\subsubsection{Temporal Margenau--Hill quasiprobability distributions}

The real part of the temporal Kirkwood--Dirac distribution defines a real-valued quasiprobability distribution, known as the temporal Margenau--Hill distribution \cite{margenau1961correlation,Jia2025TemporalKirkwoodDirac,jia2026temporaltomography}. The imaginary part of the temporal Kirkwood--Dirac distribution, in contrast, is not a quasiprobability distribution, since its sum over all outcomes vanishes. Nevertheless, we refer to it as a ``distribution'' and consider its average over the corresponding outcomes.

Several properties make this decomposition particularly useful. First, the left and right distributions are complex conjugates of each other, as shown in Eq.~\eqref{eq:marginals}. Hence,
\begin{equation}
  Q_{\MH}
  =
  \tfrac12\left(\Ql_{\rm KD}+\Qr_{\rm KD}\right)
  =
  \mathrm{Re}\,\Ql_{\rm KD},
\end{equation}
so Hermitianization at the level of spatiotemporal states corresponds precisely to taking the real part at the level of distributions. Second, the spatiotemporal state associated with the Margenau--Hill distribution is Hermitian, which makes its analysis more tractable. Moreover, as we shall see, the corresponding correlation functions admit an analogous decomposition.

%---------------------------------------------------------------------
\subsection{Wightman correlation functions as averaging over temporal phase space}

As discussed in Section~\ref{sec:WightmanCor}, for closed-system quantum field theory, the Wightman correlation function can be regarded as an average over the Kirkwood--Dirac distribution. We naturally generalize this construction to the open-system setting by defining
\begin{equation}\label{eq:KDaveWightman}
    \begin{aligned}
    \left\langle \{O_{A_n}(t_n), \cdots, O_{A_0}(t_0)\} \right\rangle_{W}
    &= \mathbb{E}_{\rm KD}(a_n,\cdots,a_0) \\
    &= \mathbb{E}_{\rm MH}(a_n,\cdots,a_0)
    + i\mathbb{E}_{\rm Im\text{-}KD}(a_n,\cdots,a_0).
    \end{aligned}
\end{equation}
This can be understood as inserting observables into the slots of the quantum comb. More generally, by inserting observables $O_{A_k}$ and $O_{B_k}$ on the left and right of each slot, respectively, we obtain
\begin{equation}
\begin{aligned}
\operatorname{Tr}_{SE} \Big[
    &(O_{A_n}\otimes \mathds{1}^E)
    \mathcal{U}_{t_{n-1},t_n}
    \Big(
        \cdots
        \mathcal{U}_{t_0,t_1}
        \Big(
            (O_{A_0}\otimes \mathds{1}^E)
            \varrho^{SE}_{t_0}
            (O_{B_0}\otimes \mathds{1}^E)
        \Big)
       \cdots
    \Big)
    (O_{B_n}\otimes \mathds{1}^E)
\Big].
\end{aligned}\label{eq:TraceWightman}
\end{equation}
The left and right Wightman correlation functions are obtained as special cases of this expression by replacing all $O_{B_k}$ (respectively, all $O_{A_k}$) with identity operators.
By decomposing the observables in the IC-POVM basis $O_{X}=\sum_{\alpha} c_{\alpha}F_{\alpha}$, it is clear that Eq.~\eqref{eq:TraceWightman} coincides with Eq.~\eqref{eq:KDaveWightman}.

We can further generalize this construction by replacing the observables with quantum instruments. In particular, we replace the left (resp. right) observable
$O_{A_k}\otimes\mathds{1}^E(\bullet)\mathds{1}^{SE}$ at each time step $t_k$ by the corresponding quantum instrument $\{(\mathcal{F}_{\alpha}^{(k)} , \alpha^{(k)})\}_{\alpha^{(k)}}$:
\begin{equation}
    \mathcal{F}_{\alpha}(\bullet)
    =
    \sum_k K_{\alpha,k}(\bullet)K_{\alpha,k}^{\dagger},
\end{equation}
and then average over the outcome $\alpha^{(k)}$ with the corresponding weight. This provides, arguably, the most general class of Wightman correlation functions accessible in open-system quantum field theory. Note that a genuinely \emph{doubled} insertion $F_\alpha(\bullet)F'_\beta$ with $\alpha\neq\beta$ is not of instrument form: it is not completely positive, and this is exactly why the doubled distribution is not directly measurable by a sequential measurement but requires the interferometric protocols or quantum snapshotting \cite{Jia2025TemporalKirkwoodDirac,jia2026temporaltomography}.

%================
\section{Spatiotemporal states, temporal link products and recursive expression}
\label{sec:tempstate}

The key idea underlying the spatiotemporal-state construction is to introduce a tensor-product structure across different times and define the temporal space as $\mathcal{H}=\mathcal{H}_{t_n}\otimes \cdots \otimes \mathcal{H}_{t_0}$.
Temporal states can be defined in a unified manner through a temporal generalization of the Bloch representation for spatial density operators \cite{Jia2025TemporalKirkwoodDirac}, in the same spirit as the construction of the pseudo-density operator \cite{fitzsimons2015quantum}. An $n$-qudit spatial state can be written as
\begin{equation}
    \varrho_{x_n,\cdots,x_1} = \frac{1}{d^n} \sum_{\mu_1,\cdots,\mu_n} \langle \{ \sigma_{\mu_n}, \cdots  ,\sigma_{\mu_1} \}\rangle_{\rm sp}\,   \sigma_{\mu_n}\otimes \cdots \otimes \sigma_{\mu_1},
\end{equation}
where $\sigma_{\mu}$ are Hilbert--Schmidt operators and $\langle \{ \sigma_{\mu_n}, \cdots  ,\sigma_{\mu_1} \}\rangle_{\rm sp}$ denote the corresponding spatial correlation functions.
We then extend this construction naturally to the temporal setting by replacing the spatial correlation functions with temporal ones,
\begin{equation}\label{eq:temporalbloch}
    \Ups_{t_n,\cdots,t_0}= \frac{1}{d^{\,n+1}} \sum_{\mu_0,\cdots,\mu_n} \langle \{ \sigma_{\mu_n}, \cdots  ,\sigma_{\mu_0} \} \rangle_{\rm tp}\; \sigma_{\mu_n}\otimes \cdots \otimes \sigma_{\mu_0},
\end{equation}
where $\langle \{ \sigma_{\mu_n}, \cdots  ,\sigma_{\mu_0} \} \rangle_{\rm tp}$ denote the corresponding temporal correlation functions defined in the temporal phase space.
More generally, for a spatiotemporal state, we first evaluate the relevant spatiotemporal correlation functions and then reconstruct the corresponding state through the same procedure.

If we take doubled, left, right and mixed-branch temporal Kirkwood--Dirac; doubled, left/right Margenau--Hill; or L\"uders--von Neumann distributions, we obtain the corresponding temporal states (see \cite{Jia2025TemporalKirkwoodDirac} for details):
\begin{itemize}
    \item The doubled Kirkwood--Dirac temporal state $\Upsd^{\rm KD}_{t_n,\cdots,t_0}$ coincides with the doubled density operator defined in \cite{jia2024spatiotemporal}. It is also closely related to the superdensity operator \cite{cotler2018superdensity}: both yield the same doubled Wightman correlation functions for Hilbert-Schmidt operators. However, while the superdensity operator constructs a positive semidefinite matrix by regarding these correlation functions as matrix elements, the doubled Kirkwood--Dirac temporal state defined here is generally non-Hermitian. This non-Hermiticity offers advantages in detecting temporality, as discussed in \cite{jia2024spatiotemporal}.
    \item Left and right Kirkwood--Dirac temporal states $\Upsl^{\rm KD}_{t_n,\cdots,t_0}$ and $\Upsr^{\rm KD}_{t_n,\cdots,t_0}$; they are adjoint to each other,
    \begin{equation}\label{eq:adjointpair}
        \Upsl^{\rm KD}_{t_n,\cdots,t_0}=\big(\Upsr^{\rm KD}_{t_n,\cdots,t_0}\big)^{\dagger},
    \end{equation}
    and they are the left and right reduced states of the doubled Kirkwood--Dirac temporal state (Corollary~\ref{cor:marginals}). The mixed-branch Kirkwood--Dirac temporal states $\overline{\Upsilon}^{\rm KD}_{t_n,\cdots,t_0}$ is also a reduced state of the doubled Kirkwood--Dirac temporal state (Corollary~\ref{cor:marginals}).
    The formalism discussed in \cite{milekhin2025observable,lie2025probingquantumstatesspacetime} is actually a two-time left (or right) Kirkwood--Dirac temporal state.
    \item Doubled Margenau--Hill temporal state $\Upsd^{\MH}$, defined by
    \begin{equation}
        \Upsd^{\MH}_{t_n,\cdots,t_0} =\frac{1}{2}\left[\Upsd^{\rm KD}_{t_n,\cdots,t_0}+(\Upsd^{\rm KD}_{t_n,\cdots,t_0})^{\dagger} \right];
    \end{equation}
    this is the Hermitianization of the doubled Kirkwood--Dirac temporal state.
    It is also convenient to introduce the another Hermitianization of the Kirkwood--Dirac temporal state (which we call dual Margenau--Hill temporal state):
    \begin{equation}
        \Omegad^{\rm MH}_{t_n,\cdots,t_0} =\frac{1}{2i}\left[\Upsd^{\rm KD}_{t_n,\cdots,t_0}-(\Upsd^{\rm KD}_{t_n,\cdots,t_0})^{\dagger} \right];
    \end{equation}
    Note that $\Upsd^{\rm KD}_{t_n,\cdots,t_0}= \Upsd^{\MH}_{t_n,\cdots,t_0} + i\,\Omegad^{\rm MH}_{t_n,\cdots,t_0}$.
    When applying the temporal Born rule, $\Upsd^{\MH}_{t_n,\cdots,t_0}$ yields the real-part averaging, while $\Omegad^{\rm MH}_{t_n,\cdots,t_0}$ gives the imaginary-part averaging.
\item The left and right Margenau--Hill temporal states coincide:
    \begin{equation}
        \Upsl^{\MH}_{t_n,\cdots,t_0}=\Upsr^{\MH}_{t_n,\cdots,t_0} =\frac{1}{2} \left[ \Upsl^{\rm KD}_{t_n,\cdots,t_0}+  \Upsr^{\rm KD}_{t_n,\cdots,t_0}\right].
    \end{equation}
    Similarly, we can introduce the anti-Hermitian counterparts
    \begin{equation}
        \Omegal^{\MH}_{t_n,\cdots,t_0}= \Omegar^{\MH}_{t_n,\cdots,t_0}=\frac{1}{2i} \left[ \Upsl^{\rm KD}_{t_n,\cdots,t_0}-  \Upsr^{\rm KD}_{t_n,\cdots,t_0}\right].
    \end{equation}
    We see that $\Upsl^{\rm KD}_{t_n,\cdots,t_0}=\Upsr^{\MH}_{t_n,\cdots,t_0}+i\Omegar^{\MH}_{t_n,\cdots,t_0}$ and $\Upsr^{\rm KD}_{t_n,\cdots,t_0}=\Upsl^{\MH}_{t_n,\cdots,t_0}-i\Omegal^{\MH}_{t_n,\cdots,t_0}$.
    These two Hermitianizations provide the real and imaginary parts of the averaging when computing the Wightman correlation function in the temporal phase space.
    \item L\"uders--von Neumann temporal state $ \Ups^{\rm LvN}_{t_n,\cdots,t_0}$ coincides with the pseudo-density operator of \cite{fitzsimons2015quantum,Jia2025TemporalKirkwoodDirac,fullwood2025spatiotemporalbornrule,Lie2025stateovertime,lie2025probingquantumstatesspacetime}. For the two-time case the left/right Margenau--Hill temporal state coincides with the von Neumann temporal state; we have re-derived this identity numerically for random qubit channels and random measurement bases. This means that Born rule for L\"uders--von Neumann temporal state can only reproduce the correct correlation functions, but cannot reproduce the L\"uders--von Neumann distributions. For this reason, this kind of temporal state is not well suited for QFT in some situations.
\end{itemize}

\begin{definition}[Spatiotemporal state through spatiotemporal Bloch tomography~\cite{Jia2025TemporalKirkwoodDirac}]
    temporal states are temporal Bloch expressions constructed from temporal Wightman correlation functions, i.e., Eq.~\eqref{eq:temporalbloch}, with $\langle\cdots\rangle_{\rm tp}$ given by Eq.~\eqref{eq:KDaveWightman}. Equivalently, a temporal state $\Ups$ is the unique operator on $\cH_{t_n}\otimes\cdots\otimes\cH_{t_0}$ that reproduces the chosen temporal quasiprobability distribution through the temporal Born rule
    \begin{equation}\label{eq:tempborn}
        Q(\alpha_n,\dots,\alpha_0)
        =
        \Tr\!\left[
            (F_{\alpha_n}\otimes\cdots\otimes F_{\alpha_0})\,\Ups
        \right].
    \end{equation}
More generally, in the spatiotemporal setting $\mathcal{H}_{\rm tot}=\otimes_{(x,t)}\mathcal{H}_{x,t}$, spatiotemporal Wightman correlation functions similarly define a corresponding spatiotemporal state, from which the spatiotemporal quasiprobability distribution is obtained via the spatiotemporal Born rule. For example, in closed-system QFT, we can choose local operators $\sigma_{\mu}(x,t),\sigma_{\nu}(y,t'),\cdots,\sigma_{\gamma}(z,t'')$ at different spacetime points and evaluate their Wightman correlation functions in the vacuum state. These correlation functions can then be used to construct the corresponding doubled, left or right Kirkwood--Dirac spatiotemporal state.
\end{definition}

\begin{remark}[Equivalence of spatiotemporal quasiprobability distribution and spatiotemporal  state]
\label{remk:equivalence}
An informationally complete temporal (spatiotemporal) quasiprobability distribution is equivalent to the temporal  (spatiotemporal)  state. 
Indeed, from such a distribution one can reconstruct all temporal correlation functions, and therefore uniquely determine the corresponding temporal state $\Upsilon$. 
Conversely, given a temporal state $\Upsilon$, the temporal Born rule yields a quasiprobability distribution for any chosen operators; when the chosen operator basis is informationally complete, the mapping from $\Upsilon$ to the quasiprobability distribution $Q$ is bijective. Thus, up to the choice of tomographic frame, the temporal state and the informational complete temporal quasiprobability distribution carry precisely the same physical information.
\end{remark}

\subsection{Recursive structure of spatiotemporal states}
\label{sec:recursion}

%\begin{definition}[Jamio\l kowski operator]\label{def:choi}

We now derive a recursive expression for spatiotemporal states, providing a complementary description to the spatiotemporal Bloch tomography~\cite{Jia2025TemporalKirkwoodDirac}. More importantly, this formulation leads to a broader unification in terms of probabilistic mixtures over all possible temporal link products of the initial state and the Jamio{\l}kowski operators associated with the intervening evolutions.

Let $\cE:\BB(\cH_{\rm in})\to\BB(\cH_{\rm out})$ be a CPTP map, its Jamio\l kowski operator is defined by
$ J[\mathcal{E}]
    =
    \sum_{i,j}\mathcal{E}(|i\rangle\langle j|)
    \otimes |j\rangle\langle i|$, which is an operator on $\cH_{\rm out}\otimes\cH_{\rm in}$.
If $\{\sigma_\mu\}$ is a Hilbert--Schmidt basis with $\Tr[\sigma_\mu\sigma_\nu]=d\,\delta_{\mu\nu}$, then the operator satisfying Eq.~\eqref{eq:choipairing} is $J[\cE]=\frac{1}{d}\sum_\mu \cE(\sigma_\mu)\otimes\sigma_\mu$.
It is easy to verify
\begin{equation}\label{eq:choipairing}
   \Tr\!\left[(A\otimes B)\,J[\cE]\right]=\Tr\!\left[A\,\cE(B)\right]
   \qquad\text{for all }A\in\BB(\cH_{\rm out}),\;B\in\BB(\cH_{\rm in}),
\end{equation}
which is crucial for proving the recursive expression of spatiotemporal state.

%\begin{definition}[Padded temporal link product]\label{def:link}
Let $N\in\BB(\cH_C\otimes\cH_B)$ and $M\in\BB(\cH_B\otimes\cH_A)$, where $\cH_A$ may itself be a tensor product of earlier time slots. The padded \emph{temporal link product} is defined by
\begin{equation}\label{eq:link}
   N\stlink M:=\left(N_{CB}\otimes\mathds{1}_A\right)\left(\mathds{1}_C\otimes M_{BA}\right)\;\in\;\BB(\cH_C\otimes\cH_B\otimes\cH_A).
\end{equation}
%\end{definition}
Unlike the comb link product of Ref.~\cite{Chiribella2009comb}, Eq.~\eqref{eq:link} contains no partial trace and no partial transpose, the intermediate space is kept, because in a spatiotemporal state every time slot must survive as a tensor factor. This is exactly the structural difference between a process tensor (where intermediate slots are contracted with the inserted operations) and a temporal state (where they remain open).

We first consider how to express the memoryless temporal state as a temporal link product of initial states and evolution maps. In the spatiotemporal setting, one can write down all equal-time many-body spatial states and combine them with the temporal recursive expression to obtain the general form.

% \begin{theorem}[Left and right Kirkwood-Dirac temporal states \cite{Jia2025TemporalKirkwoodDirac}]\label{thm:rec-single}
It was proven in Ref.~\cite{Jia2025TemporalKirkwoodDirac} that, for a memoryless process $(\varrho_{t_0},\cE_1,\dots,\cE_n)$ with $\cE_k=\cE_{ t_{k-1},t_k}$, we have the recursive expression
\begin{equation}\label{eq:recright}
   \Upsr_{t_k,\cdots ,t_0}=J[\cE_k]\stlink\Upsr_{t_{k-1},\cdots, t_0},
   \qquad \Upsr_{t_0}=\varrho_{t_0}.
\end{equation}
For all $X_k\in\BB(\cH_{t_k})$,
\begin{equation}
  \Tr\!\left[\Upsr_{t_n,\cdots, t_0}\,(X_n\otimes\cdots\otimes X_0)\right]
  =\Tr\!\left[\cE_n\!\left(\cdots\cE_1(\varrho_{t_0}X_0)X_1\cdots\right)X_n\right],
\end{equation}
i.e.\ $\Upsr$ is the right Kirkwood-Dirac temporal  state defined via temporal Bloch tomography. Moreover $\Upsl=\Upsr^{\dagger}$ is the left temporal KD state,  and the single-time marginals are the physical states,
\begin{equation}
\Tr_{t_n,\dots,\widehat{t_k},\dots,t_0}\Upsr=\varrho_{t_k}=\cE_k\circ\cdots\circ\cE_1(\varrho_{t_0}).
\end{equation}
This implies that these states satisfy the quantum Kolmogorov consistency condition and  $\Tr\Upsr=1$.
For left  Kirkwood-Dirac temporal  state, $\Upsl=\Upsr^{\dagger}$, similar results hold.

Since the left- and right-branch spatiotemporal states are special cases of the mixed-branch construction, we have the following more general result:

\begin{theorem}[Mixed-branch Kirkwood--Dirac temporal state]
    For the mixed-branch case, one must specify, for each time step, whether the left or right time slot is chosen as the free (i.e., uncontracted) slot; for example, one may have the choose brach as $\ttB=(t_n^R,\cdots,t_3^L,t_2^R,t_1^L,t_0^R)$ \footnote{We focus primarily on mixed-branch temporal states with a single left or right slot at each time step, although both slots may be left open at selected time steps.}. For each consecutive pair of time steps $t_{k-1}$ and $t_k$, there are four possible configurations of free time slots\footnote{This should not be confused with the mixed-branch time slots of the full temporal state.} of $J[\cE_k]$. 
    \begin{itemize}
        \item $(t_{k-1}^L,t_k^L)$: this means that $(t_{k-1}^R,t_k^R)$ are the contracted time slots, and we have $J[\cE_k]\stlink J[\cE_{k-1}]\stlink J[\cE_{k+1}]$.
        \item $(t_{k-1}^L,t_k^R)$: this means that $(t_{k-1}^R,t_k^L)$ are the contracted time slots, and we have $J[\cE_{k+1}]\stlink J[\cE_{k}]\stlink J[\cE_{k-1}]$.
        \item $(t_{k-1}^R,t_k^L)$: this means that $(t_{k-1}^L,t_k^R)$ are the contracted time slots, and we have $J[\cE_{k-1}]\stlink J[\cE_{k}]\stlink J[\cE_{k+1}]$.
        \item $(t_{k-1}^R,t_k^R)$: this means that $(t_{k-1}^L,t_k^L)$ are the contracted time slots, and we have $J[\cE_{k-1}]\stlink J[\cE_{k+1}] \stlink J[\cE_k]$.
    \end{itemize}
    The contracted slots determine the ordering of the temporal link product; note that $J[\cE_{m}]\stlink J[\cE_{l}]= J[\cE_{l}]\stlink J[\cE_{m}]$ whenever $|l-m|\geq 2$. 
    A tensor-network representation, in which these structures become much more transparent, is provided in Section~\ref{sec:tensornet}.
\end{theorem}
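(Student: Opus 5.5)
The statement is really a claim that the mixed-branch temporal state $\overline{\Upsilon}^{\rm KD}_{\ttB}$ is a link product of $\varrho_{t_0}$ and the Jamio\l kowski operators $J[\cE_k]$, with the ordering of the factors fixed locally by the chiralities of consecutive free slots. I would prove it the same way the recursion \eqref{eq:recright} is established: by checking that the candidate operator reproduces the mixed-branch Kirkwood--Dirac distribution through the temporal Born rule \eqref{eq:tempborn}. By Remark~\ref{remk:equivalence}, agreement on all product test operators $X_n\otimes\cdots\otimes X_0$ then identifies the operator uniquely.

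\textbf{Step 1: target functional.} For a branch $\ttB$, the target is
\[
\Tr\big[\cE_n(\cdots\cE_1(L_0\varrho_{t_0}R_0)\cdots)\,\cdot\big],
\]
where at each $t_k$ exactly one of $L_k,R_k$ equals $X_k$ and the other is $\mathds{1}$. Using the adjoint maps $\cE_k^\dagger$ and cyclicity, I would rewrite this functional as a single trace: a word in $X_k$, $\varrho_{t_0}$ and the operators hidden in the channels, read around the closed contour. Concretely, the forward (left) branch runs up from $t_0$ to $t_n$ and the backward (right) branch returns down.

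\textbf{Step 2: local pairing at each link.} Expand $J[\cE_k]=\frac1d\sum_\mu\cE_k(\sigma_\mu)\otimes\sigma_\mu$ and use the pairing \eqref{eq:choipairing}. Within the padded product \eqref{eq:link}, $J[\cE_k]$ shares the slot $t_{k-1}$ with its neighbour and the slot $t_k$ with the next one. Whether $J[\cE_k]$ multiplies its neighbours on the left or on the right in the operator product corresponds to whether the free insertion at $t_{k-1}$ and $t_k$ sits on the $L$ or $R$ side of the evolving operator.

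I would verify the four cases by a three-link computation. Take $J[\cE_{k+1}]$, $J[\cE_k]$, $J[\cE_{k-1}]$ in the stated order, trace against $X_{k+1}\otimes X_k\otimes X_{k-1}$, and confirm that the contraction reproduces the corresponding segment of the contour word. For instance, $(L,L)$ gives $X_k\cE_k(X_{k-1}\,\cdot)$, and $(L,R)$ gives $\cE_k(X_{k-1}\,\cdot)X_k$. The commutation $J[\cE_m]\stlink J[\cE_l]=J[\cE_l]\stlink J[\cE_m]$ for $|l-m|\ge2$ is immediate, since the two factors act on disjoint tensor slots. This makes the global product well defined once each adjacent pair has been ordered.

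\textbf{Step 3: induction on $n$.} The base case is $\Upsr_{t_0}=\varrho_{t_0}$ (or its left version). Appending $t_{n}$ inserts $J[\cE_n]$ to the left or right of the existing product according to the chirality pair at $(t_{n-1},t_n)$. The Born-rule trace then factorizes as $\Tr[X_n\cE_n(\,\cdot\,)]$ or $\Tr[\cE_n(\,\cdot\,)X_n]$ applied to the previous partial contraction, which is exactly the next layer of the mixed-branch distribution. Trace normalization and the single-time marginals follow by setting every $X_j=\mathds{1}$ except one and invoking trace preservation of the $\cE_j$. The relation $\overline{\Upsilon}_{\ttB}=(\overline{\Upsilon}_{\ttB'})^\dagger$ follows by taking the adjoint, since $J[\cE]^\dagger=J[\cE]$ for Hermiticity-preserving $\cE$ and the adjoint reverses the product order, which swaps $L\leftrightarrow R$.

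\textbf{Main obstacle.} The hardest part is Step 2's bookkeeping: the statement's case list appears to assign orderings via the contracted rather than the free slots, and its indices ($k-1,k,k+1$) are loose. I would first fix conventions on the two-link case $n=2$ with all four branches, computing explicit matrix elements to pin down which order reproduces which chirality. Only then would I state the general rule and run the induction.
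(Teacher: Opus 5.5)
Your strategy is sound, and it differs from the paper's: identify the operator through the Born rule on product tests, fix the order of each adjacent pair of factors by a local pairing computation, use that factors on disjoint slots commute, and induct on $n$. The paper gives no separate proof of this statement. It relies on the tensor-network picture and, implicitly, on Corollary~\ref{cor:marginals}: one obtains $\overline{\Upsilon}_{\ttB}=\Tr_{\ttB^c}\Upsd$ from the doubled recursion of Theorem~\ref{thm:rec-doubled} and reads off the order using $\Tr_B[(A\otimes B)\mathbb{S}]=AB$ and $\Tr_A[(A\otimes B)\mathbb{S}]=BA$. At $t_j$ the cap $\mathbb{S}_{t_j}$ joins the $L_j$ leg of $J[\cE_j]$ (of $\varrho_{t_0}$ if $j=0$) to the $R_j$ leg of $J[\cE_{j+1}]$. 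Tracing out $R_j$ therefore puts $J[\cE_j]$ first, and tracing out $L_j$ puts $J[\cE_{j+1}]$ first. Your induction is self-contained and needs no doubled state; the doubled route yields the marginal corollary and the adjoint pairing at the same time. Both routes give the same local rule.

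The trouble is that your proposal stops exactly where the content of the statement lies, and the reconciliation you anticipate would not work. Write $J_j:=J[\cE_j]$ and $J_0:=\varrho_{t_0}$. Let $G$ be the operator on $t_j$ obtained from the product of $J_0,\dots,J_j$ by pairing $t_0,\dots,t_{j-1}$ with $X_0,\dots,X_{j-1}$. Your Step~2 then gives
\[
\Tr_{t_j}[(\mathds{1}\otimes X_j)\,J_{j+1}(\mathds{1}\otimes G)]=\cE_{j+1}(GX_j),\qquad
\Tr_{t_j}[(\mathds{1}\otimes X_j)(\mathds{1}\otimes G)\,J_{j+1}]=\cE_{j+1}(X_jG).
\]
So $X_j$ is inserted on the left if and only if $J_j$ stands to the left of $J_{j+1}$. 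This is the convention of Eq.~\eqref{eq:recright} and of $\Tr_R\Upsd=\Upsl$, and it yields
\begin{align*}
(t_{k-1}^L,t_k^L)&\mapsto J_{k-1}\star J_k\star J_{k+1}, & (t_{k-1}^L,t_k^R)&\mapsto J_{k-1}\star J_{k+1}\star J_k,\\
(t_{k-1}^R,t_k^L)&\mapsto J_k\star J_{k-1}\star J_{k+1}, & (t_{k-1}^R,t_k^R)&\mapsto J_{k+1}\star J_k\star J_{k-1}.
\end{align*}
The stated list agrees with this table only after flipping the label at $t_{k-1}$ alone. For $k=1$, the product the statement assigns to $(t_0^L,t_1^L)$ is $J[\cE_1]\star\varrho_{t_0}\star J[\cE_2]$. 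Its Born functional is $\Tr[X_2\cE_2(X_1\cE_1(\varrho_{t_0}X_0))]$, so it realizes $(t_0^R,t_1^L)$. Any global relabeling flips both labels, so it cannot repair a discrepancy in one label only. This rules out exchanging free and contracted slots, as you suggest. It also rules out the $L\leftrightarrow R$ swap implicit in the paper's Example, which calls $J[\cE_{t_1,t_2}]\star J[\cE_{t_0,t_1}]\star\rho_{t_0}$ ``left'', contrary to Eq.~\eqref{eq:recright}. What your argument can establish is the corrected table above, with conventions anchored to Eq.~\eqref{eq:recright} and Corollary~\ref{cor:marginals}, not the bullets as written.

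Step~3 also needs one correction. Whether $J_n$ is appended at the far left or far right of the existing product is dictated by the chirality at $t_{n-1}$ alone, not by the pair $(t_{n-1},t_n)$. Far left corresponds to $t_{n-1}^R$, and far right to $t_{n-1}^L$. The slot $t_n$ is touched only by $J_n$, so by cyclicity its chirality never affects the operator; branch assignments differing only at $t_n$ give the same state. State the inductive hypothesis as the operator identity for the partial contraction $G_{n-1}$ on slot $t_{n-1}$, as with $G_k$ in the proof of Theorem~\ref{thm:rec-doubled}. It follows from the scalar Born rule because the trace pairing on $t_{n-1}$ is nondegenerate.
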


The doubled state requires one additional ingredient: the swap operator $\mathbb{S}=\sum_{i,j}|i\rangle \langle j|\otimes |j\rangle \langle i|=\frac{1}{d}\sum_{\alpha}\sigma_{\alpha}\otimes \sigma_{\alpha}$ on $\cH^{L}\otimes\cH^{R}$, by definition $\mathbb{S}\ket{\psi}\otimes\ket{\varphi}=\ket{\varphi}\otimes\ket{\psi}$, and we have $\Tr[(X\otimes Y)\mathbb{S}]=\Tr[XY]$.  Geometrically, $\mathbb{S}$ plays the role of the ``cap'' that closes the Schwinger--Keldysh contour.

\begin{theorem}[Doubled  Kirkwood-Dirac temporal state]\label{thm:rec-doubled}
Let $\cH_{t_k}^{L}\otimes\cH_{t_k}^{R}$ be the doubled slot at time $t_k$ and define the \emph{doubled Jamio\l kowski operator}
\begin{equation}\label{eq:doubledchoi}
   \JJ[\cE]\;:=\;\left(J[\cE]^{\,L_k,\,R_{k-1}}\otimes\mathds{1}^{\,R_k,\,L_{k-1}}\right)
   \left(\mathbb{S}_{t_k}\otimes\mathds{1}_{t_{k-1}}\right)
\;=\;\sum_{i,j}\Big[\big(\cE(\ket{i}\!\bra{j})\otimes\mathds{1}\big)\mathbb{S}\Big]_{t_k}\otimes\Big[\mathds{1}\otimes\ket{j}\!\bra{i}\Big]_{t_{k-1}} .
\end{equation}
Then the doubled temporal KD state obeys
\begin{equation}\label{eq:recdoubled}
   \Upsd_{t_k\cdots t_0}=\JJ[\cE_k]\stlink \Upsd_{t_{k-1}\cdots t_0},
   \qquad \Upsd_{t_0}=(\varrho_{t_0}\otimes\mathds{1})\,\mathbb{S},
\end{equation}
and reproduces Eq.~\eqref{eq:QD} through the Born rule
\begin{equation}
   \Tr\!\left[\bigotimes_{k=0}^{n}\left(A_k\otimes B_k\right)\Upsd_{t_n\cdots t_0}\right]
   =\Tr\!\left[A_n\cE_n\!\left(\cdots A_1\cE_1(A_0\varrho_{t_0}B_0)B_1\cdots\right)B_n\right].
\end{equation}
\end{theorem}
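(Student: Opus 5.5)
Our plan is to prove the Born-rule identity by induction on $k$, following the same pattern as the proof of the single-branch recursion Eq.~\eqref{eq:recright}. We treat the recursion Eq.~\eqref{eq:recdoubled} as the definition of $\Upsd_{t_k\cdots t_0}$. The claim is then that, for every $k$ and all $A_j,B_j\in\BB(\cH_{t_j})$,
\[
\Tr\!\Big[\bigotimes_{j=0}^{k}(A_j\otimes B_j)\,\Upsd_{t_k\cdots t_0}\Big]=\Tr\!\big[A_k\cE_k(\cdots A_0\varrho_{t_0}B_0\cdots)B_k\big].
\]
Since product operators span $\BB(\cH^{\otimes})$, this also identifies $\Upsd$ with the Bloch-tomographic doubled state, i.e.\ the unique operator reproducing Eq.~\eqref{eq:QD}.

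\textbf{Base case.} We use the swap identity $\Tr[(X\otimes Y)\mathbb{S}]=\Tr[XY]$. This gives
\[
\Tr[(A_0\otimes B_0)(\varrho_{t_0}\otimes\I)\mathbb{S}]=\Tr[A_0\varrho_{t_0}B_0],
\]
as required.

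\textbf{Inductive step.} We first establish a one-step lemma, the doubled analogue of Eq.~\eqref{eq:choipairing}: for any operator $M$ on $\cH_{t_{k-1}}^{L}\otimes\cH_{t_{k-1}}^{R}\otimes(\text{earlier slots})$,
\[
\Tr\!\big[(A_k\otimes B_k\otimes X)\,(\JJ[\cE_k]\stlink M)\big]=\Tr\!\big[(\widetilde X)\,M\big].
\]
Here $\widetilde X$ is $X$ with the $t_{k-1}$ factor $A_{k-1}\otimes B_{k-1}$ replaced by a new operator $A'\otimes B'$ determined by $A_k,B_k$ and $\cE_k$. To compute it, expand $\JJ[\cE_k]=\sum_{i,j}[(\cE_k(|i\rangle\langle j|)\otimes\I)\mathbb{S}]_{t_k}\otimes[\I\otimes|j\rangle\langle i|]_{t_{k-1}}$. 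Trace the $t_k$ factor with the swap identity, which gives $\Tr[A_k\cE_k(|i\rangle\langle j|)B_k]=\Tr[B_kA_k\cE_k(|i\rangle\langle j|)]$. Resumming over $i,j$ reproduces the Heisenberg-picture dual: $\sum_{i,j}\langle j|\cE_k^{\dagger}(B_kA_k)|i\rangle\,|j\rangle\langle i|$ equals $\cE_k^{\dagger}(B_kA_k)$, up to a transpose convention that must be checked against the definition of $J$. The link product then multiplies this into the $R_{k-1}$ slot of $M$.

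Applying the induction hypothesis, the right-hand side becomes $\Tr[A_{k-1}(\cdots)B_{k-1}\,\cE_k^{\dagger}(B_kA_k)]$. By the duality $\Tr[Y\cE^\dagger(Z)]=\Tr[\cE(Y)Z]$ and cyclicity, this equals $\Tr[A_k\cE_k(A_{k-1}\cdots B_{k-1})B_k]$, which closes the induction.

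\textbf{Main obstacle.} The hardest part is the index and ordering bookkeeping in this lemma. Because the link product is an operator product rather than a contraction, the $t_{k-1}$ factor of $\JJ[\cE_k]$ sits to the left of $M$. We must verify that the factor $|j\rangle\langle i|$ on $R_{k-1}$, multiplied against $M$ on the left and then traced against $B_{k-1}$, reproduces $B_{k-1}\cE_k^\dagger(B_kA_k)$ in the correct order. Equivalently, we must check that the contour closing $\mathbb{S}_{t_{k-1}}$ inside $M$ correctly routes the $R_{k-1}$ leg back to the $L$ side. We would settle this first on the two-time case $\Upsd_{t_1t_0}$ by writing out matrix elements with $\mathbb{S}=\sum|i\rangle\langle j|\otimes|j\rangle\langle i|$ and confirming
\[
\Tr[(A_1\otimes B_1\otimes A_0\otimes B_0)\Upsd_{t_1t_0}]=\Tr[A_1\cE_1(A_0\varrho B_0)B_1].
\]
Once that case is fixed, the lemma generalizes verbatim, since all earlier slots are untouched spectators carried by the identity padding in Eq.~\eqref{eq:link}.
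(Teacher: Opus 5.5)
Your proposal is correct and follows essentially the paper's proof. Both use induction, the swap identity for the base case, and, in the step, tracing out the $t_k$ slot of $\JJ[\cE_k]$ so that its $t_{k-1}$ factor is absorbed into $B_{k-1}$ via $(A_{k-1}\otimes B_{k-1})(\I\otimes C)=A_{k-1}\otimes B_{k-1}C$. The paper resums on the Schr\"odinger side (with $C=\ket{j}\!\bra{i}$ and $Y=A_{k-1}G_{k-1}B_{k-1}$), whereas you resum on the Heisenberg side with $C=\cE_k^{\dagger}(B_kA_k)$.

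The checks you defer go through with no transpose. The coefficient $\langle j|\cE_k^{\dagger}(B_kA_k)|i\rangle$ multiplies $\ket{j}\!\bra{i}$, so the sum is exactly $\cE_k^{\dagger}(B_kA_k)$. Since the induction hypothesis holds for arbitrary $B_{k-1}$, replacing $B_{k-1}$ by $B_{k-1}\cE_k^{\dagger}(B_kA_k)$ closes the induction.
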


\begin{proof}
We prove that the functional realized by $\Upsd_{t_k,\cdots , t_0}$ is
\begin{equation}\label{eq:functional}
  \Tr\!\left[(A_k\otimes B_k\otimes\cdots\otimes A_0\otimes B_0)\Upsd_{t_k\cdots t_0}\right]=\Tr\!\left[A_kG_kB_k\right],
\end{equation}
where $G_k=\cE_k\!\left(A_{k-1}G_{k-1}B_{k-1}\right)$, $G_{-1}:=\varrho_{t_0}$, 
with the convention $G_0=A_0\varrho_{t_0}B_0$ read off from the base case.

\emph{Base case.} $\Tr[(A_0\otimes B_0)(\varrho\otimes\mathds{1})\mathbb{S}]=\Tr[(A_0\varrho\otimes B_0)\mathbb{S}]=\Tr[A_0\varrho B_0]$, using $\Tr[(X\otimes Y)\mathbb{S}]=\Tr[XY]$.

\emph{Induction step.} Assume Eq.~\eqref{eq:functional} at level $k-1$, we have
\begin{equation}
\begin{aligned}
     &\Tr\!\left[\bigotimes_{k=0}^{n}\left(A_k\otimes B_k\right)\Upsd_{t_n,\cdots ,t_0}\right]\\
     = & \Tr_{t_k} \sum_{i,j} (A_k\otimes B_k) \Big[\big(\cE(\ket{i}\!\bra{j})\otimes\mathds{1}\big)\mathbb{S}\Big]_{t_k}\\
    & \otimes \Tr_{t_{k-1},\cdots,t_0}
     \Big[ (A_{k-1}\otimes B_{k-1}\ket{j}\!\bra{i})_{t_{k-1}}  \otimes A_{k-2}\otimes B_{k-2}\cdots )\Big]  \Upsd_{t_{k-1},\cdots ,t_0}\\
     =&\Tr_{t_k} \sum_{i,j} (A_k\otimes B_k) \Big[\big(\cE(\ket{i}\!\bra{j})\otimes\mathds{1}\big)\mathbb{S}\Big]_{t_k}  \Tr\!\left[A_{k-1}G_{k-1}B_{k-1}\ket{j}\!\bra{i}\right]
\end{aligned}
\end{equation}
Summing over $i,j$ and using $\sum_{ij}\bra{i}Y\ket{j}\ket{i}\!\bra{j}=Y$ with $Y=A_{k-1}G_{k-1}B_{k-1}$ gives
$\Tr[A_k\,\cE_k(A_{k-1}G_{k-1}B_{k-1})\,B_k]=\Tr[A_kG_kB_k]$, as claimed.
\end{proof}

\begin{corollary}[Mixed-branch marginals]\label{cor:marginals}
The relations
$\Tr_{R}\Upsd=\Upsl$ and $\Tr_{L}\Upsd=\Upsr$
hold, where $\Tr_{L/R}$ denotes the partial trace over the left or right copy of every time slot. Consequently,
\begin{equation}
    \Tr_R\Upsd=(\Tr_L\Upsd)^\dagger.
\end{equation}
More generally, one may choose an independent branch assignment at each time slot, for example,
$\ttB=(t_0^L,t_1^R,t_2^R,t_3^L,\ldots,t_n^L)$,
giving a mixed-branch spatiotemporal state $\overline{\Upsilon}_B$. There are $2^{n+1}$ such branch assignments and corresponding spatiotemporal states. The $B$-marginal of the doubled Kirkwood--Dirac temporal state yields the corresponding mixed-branch state:
\begin{equation}
    \Tr_{\ttB^c}\Upsd=\overline{\Upsilon}_{\ttB}.
\end{equation}
and $\overline{\Upsilon}_{\ttB}^{\dagger}=\overline{\Upsilon}_{\ttB^c}$.
\end{corollary}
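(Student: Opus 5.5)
The proof goes through the Born-rule characterization rather than the explicit recursion. By Remark~\ref{remk:equivalence}, an operator on $\bigotimes_k\cH_{t_k}$ is fixed by its pairings with all product operators $X_n\otimes\cdots\otimes X_0$. So it suffices to show that the partial trace of $\Upsd$ and the claimed single-branch state give the same functional. The only input is Theorem~\ref{thm:rec-doubled}, which says
\begin{equation*}
\Tr\!\left[\bigotimes_{k}(A_k\otimes B_k)\,\Upsd\right]=\Tr\!\left[A_n\cE_n(\cdots A_0\varrho_{t_0}B_0\cdots)B_n\right]
\end{equation*}
for all $A_k,B_k$.

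\emph{Right and left marginals.} Take the partial trace over the left copy $L$ of every slot. For any product operator $X=\bigotimes_k X_k$ on the right copies,
\begin{equation*}
\Tr[X\,\Tr_L\Upsd]=\Tr\!\left[\bigotimes_k(\mathds{1}\otimes X_k)\Upsd\right].
\end{equation*}
Setting $A_k=\mathds{1}$ in the doubled Born rule turns the right-hand side into $\Tr[\cE_n(\cdots\cE_1(\varrho_{t_0}X_0)X_1\cdots)X_n]$. This is exactly the functional of $\Upsr$ stated after Eq.~\eqref{eq:recright}, so $\Tr_L\Upsd=\Upsr$.

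Symmetrically, setting $B_k=\mathds{1}$ gives the left functional $\Tr[X_n\cE_n(\cdots X_0\varrho_{t_0}\cdots)]$. To identify this with $\Upsr^\dagger$, use $\Tr[X\Upsr^\dagger]=\overline{\Tr[X^\dagger\Upsr]}$. Each $\cE_k$ is Hermiticity-preserving, so $\cE_k(Y)^\dagger=\cE_k(Y^\dagger)$, and $\varrho_{t_0}$ is Hermitian. Conjugating the right functional evaluated at $X_k^\dagger$ therefore gives the left functional, so $\Tr_R\Upsd=\Upsl=\Upsr^\dagger$. Combining the two marginals yields $\Tr_R\Upsd=(\Tr_L\Upsd)^\dagger$.

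\emph{Mixed branches.} For a branch assignment $\ttB$, set $A_k=\mathds{1}$ at slots where $\ttB$ keeps the right copy, and $B_k=\mathds{1}$ where it keeps the left copy. The doubled Born rule then produces the mixed-branch functional: operator $X_k$ is inserted on the side chosen at $t_k$. This is the functional defining $\overline{\Ups}_{\ttB}$, which gives $\Tr_{\ttB^c}\Upsd=\overline{\Ups}_{\ttB}$. The adjoint relation $\overline{\Ups}_{\ttB}^\dagger=\overline{\Ups}_{\ttB^c}$ follows from the same conjugation argument. Taking the adjoint of $A\,\cE(Y)\,B$ with $A,B$ replaced by their adjoints moves every insertion to the opposite side, because $(AYB)^\dagger=B^\dagger Y^\dagger A^\dagger$ and $\cE$ commutes with $\dagger$. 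Induction on the nesting depth turns the $\ttB$ functional into the conjugated $\ttB^c$ functional. The count of $2^{n+1}$ assignments is simply two choices per slot.

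\emph{Expected obstacle.} The main difficulty is bookkeeping rather than analysis. The mixed-branch state $\overline{\Ups}_{\ttB}$ was introduced through the link-product orderings of the preceding theorem, so that construction must be checked to match the Born-rule functional with one-sided insertions. The recursion route would need that ordering-by-ordering comparison, which the functional route sidesteps. If one insists on matching the recursive form directly, I would instead induct on $k$ using Eq.~\eqref{eq:doubledchoi}. Tracing out one copy at $t_k$ collapses $\mathbb{S}$ via $\Tr_{R}[(Y\otimes\mathds{1})\mathbb{S}]=Y$, or its $L$ analogue, and reduces $\JJ[\cE_k]$ to $J[\cE_k]$ multiplied on the appropriate side. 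This is what reproduces the case distinctions in that theorem.
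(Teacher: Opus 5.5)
Your proof is correct and follows the paper's implicit route. The paper gives no separate argument: it presents the corollary as an immediate consequence of the doubled Born rule in Theorem~\ref{thm:rec-doubled}, and later says the marginals can be read off the tensor-network picture of $\Upsd$, which is your fallback recursion argument; your conjugation argument for $\overline{\Upsilon}_{\ttB}^{\dagger}=\overline{\Upsilon}_{\ttB^c}$ supplies a step the paper leaves unstated. One refinement to your counting remark: assignments that differ only at $t_n$ give the same operator, since $\Tr[X_n\cE_n(Y)]=\Tr[\cE_n(Y)X_n]$, so $2^{n+1}$ counts assignments but yields at most $2^{n}$ distinct states (the paper's three-time example lists four).
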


%\begin{corollary}[Margenau--Hill temporal states]
The recursive expression for Kirkwood--Dirac spatiotemporal states naturally yields a temporal link product expression for the corresponding Margenau--Hill spatiotemporal states, since these are obtained via Hermitianization of the Kirkwood--Dirac temporal states.
%\end{corollary}

%\begin{corollary}[Recursive expression for LvN temporal state]
    The L\"uders-von Neumann temporal state is known to have the recursive expression \cite{Liu2025PDO,fullwood2022quantum}
    \begin{equation}
        \Ups^{\rm LvN}_{t_n,\cdots,t_0}=\frac{1}{2}\{J[\cE_k],\Ups^{\rm LvN}_{t_{k-1},\cdots,t_0}\}, \quad \Ups^{\rm LvN}_{t_0}=\varrho_{t_0}.
    \end{equation}
   where $\{\bullet,\bullet\}$ is a anticommutator. 
  It is an equal-weight mixture of all single-space mixed-branch Kirkwood-Dirac temporal states.
  For doubled Kirkwood-Dirac temporal state, we take all possible marginals and then mix them together to obtain the L\"uders-von Neumann temporal state.

\begin{example}
For a time evolution map $\mathcal{E}_{t_0,t}$ (with $t\in \Rbb$ and $t\geq t_0$) and initial state $\rho_{t_0}$, consider a three-step quantum process (which suffices to illustrate the main results). The left and right Kirkwood--Dirac temporal states are given by
\begin{equation}
     \Upsl^{\rm KD}_{t_2,t_1,t_0}= J[\mathcal{E}_{t_1,t_2}] \star J[\mathcal{E}_{t_0,t_1}] \star \rho_{t_0},\quad \Upsr^{\rm KD}_{t_2,t_1,t_0}=\rho_{t_0} \star J[\mathcal{E}_{t_0,t_1}] \star J[\mathcal{E}_{t_1,t_2}].
\end{equation}
The other two mixed-branch Kirkwood--Dirac temporal states take the form
\begin{equation}
    \overline{\Upsilon}^{\rm KD}_{t_2,t_1,t_0}=J[\mathcal{E}_{t_1,t_2}] \star \rho_{t_0} \star J[\mathcal{E}_{t_0,t_1}], \quad {\overline{\Upsilon'}}^{\rm KD}_{t_2,t_1,t_0}=J[\mathcal{E}_{t_0,t_1}] \star \rho_{t_0} \star J[\mathcal{E}_{t_1,t_2}].
\end{equation}
The L\"uders--von Neumann temporal state takes the form
\begin{equation}
\begin{aligned}
     \Upsilon^{\mathrm{LvN}}_{t_2,t_1, t_0} = &\frac{1}{4} \Big[
         J[\mathcal{E}_{t_1,t_2}] \star J[\mathcal{E}_{t_0,t_1}] \star \rho_{t_0} 
         + J[\mathcal{E}_{t_1,t_2}] \star \rho_{t_0} \star J[\mathcal{E}_{t_0,t_1}] \\
        & + J[\mathcal{E}_{t_0,t_1}] \star \rho_{t_0} \star J[\mathcal{E}_{t_1,t_2}] 
         + \rho_{t_0} \star J[\mathcal{E}_{t_0,t_1}] \star J[\mathcal{E}_{t_1,t_2}]
    \Big].
\end{aligned}
\end{equation}
As we will argue later, this should be regarded as an equal-weight probabilistic mixture of all mixed-branch Kirkwood--Dirac temporal states.
In contrast, the left and right Margenau--Hill states contain only a sum of two terms with a fixed product order:
\begin{equation}
    \Upsilon^{\mathrm{MH}}_{t_2,t_1, t_0} = \frac{1}{2} \left( 
        J[\mathcal{E}_{t_1,t_2}] \star J[\mathcal{E}_{t_0,t_1}] \star \rho_{t_0} 
        + \rho_{t_0} \star J[\mathcal{E}_{t_0,t_1}] \star J[\mathcal{E}_{t_1,t_2}]
    \right),
\end{equation}
We observe that the time-ordered link product yields the left or right Kirkwood--Dirac temporal state, while the non-time-ordered product gives the mixed-branch Kirkwood--Dirac temporal state. The L\"uders--von Neumann temporal state is then an equal-weighted mixture of all possible temporal link products with all possible time orderings.
\end{example}

For the single local space case (namely, when the temporal local space is not doubled), all the notions of spatiotemporal states introduced above can actually be unified into a single object as follows.

\begin{theorem}[Unification of spatiotemporal state through probabilistic mixture]\label{theorem:UnificationSingle}
   From the above discussion, it follows that for a given initial state $\varrho_{t_0}$ and evolution operators $J[\mathcal{E}_{t_{k-1},t_k}]$, one may form arbitrary temporal link products to obtain all mixed-branch spatiotemporal state, denoted by $\overline{\Upsilon}^{\mathrm{KD}, 1}_{t_n,\cdots,t_0},\cdots,  \overline{\Upsilon}^{\mathrm{KD}, 2m}_{t_n,\cdots,t_0}$ (Note that there are always a total of $2m=2^{n+1}$ such states, as each state and its Hermitian conjugate are both included. We label 1 as the left order, 2 as the right order (which is the Hermitian conjugate of 1), and so on.) These can then be mixed according to a probability distribution $\mathbb{P}(i)$ to yield the state
   \begin{equation}
       \Upsilon^{\mathbb{P}}_{t_n,\cdots,t_0}=\sum_i\mathbb{P}(i)\,\overline{\Upsilon}^{\mathrm{KD}, i}_{t_n,\cdots,t_0}.
   \end{equation}
   Evidently, all equal-time reduced states $\varrho_{t_k}$ can be recovered from this construction. The left, right, and mixed-branch Kirkwood--Dirac states, the left/right Margenau--Hill spatiotemporal states, and the L\"uders--von Neumann spatiotemporal states (pseudo-density operator) are all recovered as special cases corresponding to extreme choices of the probability distribution $\mathbb{P}$. Note that replacing $\mathbb{P}$ with a (even complex-valued) quasiprobability distribution $\mathbb{Q}$, we will also obtain an operator $\Ups^{\mathbb{Q}}_{t_n,\cdots,t_0}$ that has correct equal-time marginals $\rho_{t_k}$.
\end{theorem}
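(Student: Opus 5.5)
The plan is to reduce everything to two structural facts about the individual mixed-branch states $\overline{\Upsilon}^{\mathrm{KD},i}_{t_n,\cdots,t_0}$. First, each has unit trace. Second, each has the physical single-time marginals $\varrho_{t_k}$. Once these hold for every $i$, both properties pass to $\Upsilon^{\mathbb{P}}$ by linearity of the partial trace.

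\textbf{Step 1: marginals of each mixed-branch state.} By Corollary~\ref{cor:marginals}, every $\overline{\Upsilon}^{\mathrm{KD},i}$ is a branch marginal $\Tr_{\ttB^c}\Upsd$ of the doubled state. So
\[
\Tr_{t_n,\dots,\widehat{t_k},\dots,t_0}\overline{\Upsilon}^{\mathrm{KD},i}
\]
is obtained from the Born rule of Theorem~\ref{thm:rec-doubled} as follows. Set $A_j=B_j=\mathds{1}$ for $j\neq k$. At time $t_k$ put an arbitrary $X$ on the slot selected by $\ttB$ and $\mathds{1}$ on the other.

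The Born rule then gives $\Tr[X\,\cE_k\circ\cdots\circ\cE_1(\varrho_{t_0})]$ or $\Tr[\cE_k\circ\cdots\circ\cE_1(\varrho_{t_0})\,X]$, depending on the chirality. The later maps $\cE_{k+1},\dots,\cE_n$ drop out because they are trace preserving. Both expressions equal $\Tr[X\varrho_{t_k}]$ by cyclicity. Since $X$ is arbitrary, the $t_k$-marginal is $\varrho_{t_k}$, and taking $X=\mathds{1}$ gives unit trace.

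An alternative is to check the same statement directly on the link-product form, using the Jamio\l kowski pairing Eq.~\eqref{eq:choipairing}. Tracing out the output slot of $J[\cE]$ gives $\mathds{1}_{\rm in}$ (trace preservation), and tracing the input against a state gives $\cE(\varrho)$. This is independent of the order of the factors in the padded product $\stlink$, because trace cyclicity can be applied on each slot separately.

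\textbf{Step 2: linearity.} For $\Upsilon^{\mathbb{P}}=\sum_i\mathbb{P}(i)\overline{\Upsilon}^{\mathrm{KD},i}$, the partial trace gives $\sum_i\mathbb{P}(i)\varrho_{t_k}=\varrho_{t_k}$, using only $\sum_i\mathbb{P}(i)=1$. Positivity of $\mathbb{P}$ is never used. Hence the same conclusion holds for any complex quasiprobability $\mathbb{Q}$ with $\sum_i\mathbb{Q}(i)=1$, which covers the final claim of the theorem.

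\textbf{Step 3: special cases.} Each named state is obtained from a specific choice of $\mathbb{P}$:
\begin{itemize}
\item $\mathbb{P}=\delta_{i,1}$ or $\delta_{i,2}$ gives $\Upsl^{\rm KD}$ or $\Upsr^{\rm KD}$.
\item $\mathbb{P}=\delta_{i,j}$ gives any chosen mixed-branch state.
\item $\mathbb{P}=\tfrac12(\delta_{i,1}+\delta_{i,2})$ gives $\Ups^{\rm MH}$, by Eq.~\eqref{eq:adjointpair}. Similarly, $\tfrac12$ on an adjoint pair $(\ttB,\ttB^c)$ gives $\overline{\Upsilon}^{\rm MH}_{\ttB}$.
\item The uniform distribution $\mathbb{P}(i)=2^{-(n+1)}$ gives $\Ups^{\rm LvN}$.
\end{itemize}

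\textbf{Main obstacle.} The last identification is the hard step. It requires showing that iterating the anticommutator recursion $\Ups^{\rm LvN}_{t_k\cdots t_0}=\tfrac12\{J[\cE_k],\Ups^{\rm LvN}_{t_{k-1}\cdots t_0}\}$ expands into exactly the $2^{n+1}$ link-product orderings, each with weight $2^{-(n+1)}$, as in the three-time example.

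I would prove this by induction on $k$. Expanding $\tfrac12(J\stlink\Ups+\Ups\stlink J)$ doubles the number of terms. Each term is a placement of $J[\cE_k]$ to the left or right of the previous word. Because $J[\cE_m]$ commutes under $\stlink$ with $J[\cE_l]$ whenever $|l-m|\ge2$, a word is determined only by the relative order of each adjacent pair $(J[\cE_{k-1}],J[\cE_k])$. These relative orders are exactly the contracted-slot data in the mixed-branch theorem.

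The delicate point is the counting. The recursion produces $2^n$ words from $n$ binary choices, while there are $2^{n+1}$ branch assignments. This must be reconciled by noting that a branch assignment and its complement give the same word up to adjoint, and that the Hermitian part is what survives in the mixture. I expect this bookkeeping to need the most care.
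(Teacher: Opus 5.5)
Your Steps~1--2, and the delta and half-half choices in Step~3, are fine. They are also more explicit than the paper, whose proof only lists the distributions $\mathbb{P}$. The gap is at the counting step you flagged for the L\"uders--von Neumann case: the reconciliation you propose does not work. Complementation $\ttB\mapsto\ttB^c$ sends $\overline{\Upsilon}_{\ttB}$ to $\overline{\Upsilon}_{\ttB}^{\dagger}$, which is a \emph{different} word of the anticommutator expansion, not the same one. The $2^n$ words generated by $\tfrac12\{J[\cE_k],\,\cdot\,\}$ are already closed under adjoint, since both time-ordered products occur among them. Pairing complements therefore only shows that the uniform mixture is Hermitian. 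It tells you nothing about how often each word occurs among the $2^{n+1}$ labels, and Hermiticity of both sides cannot equate a $2^{n+1}$-term average with the $2^n$-term one.

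The missing fact is that the chirality at the last slot $t_n$ is immaterial. In the doubled Born rule, $(A_n,B_n)=(X,\mathds{1})$ and $(\mathds{1},X)$ give $\Tr[XY]=\Tr[YX]$ with $Y=\cE_n(\cdots)$. In link-product terms, the chirality at $t_k$ fixes the relative order of $J[\cE_k]$ and $J[\cE_{k+1}]$ (with $J[\cE_0]:=\varrho_{t_0}$) only for $k=0,\dots,n-1$; there is no $J[\cE_{n+1}]$ for the chirality at $t_n$ to order. So the map from branch assignments to words is exactly two-to-one, with fibres $\{\ttB,\ttB'\}$, where $\ttB'$ differs from $\ttB$ only at $t_n$. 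For $n=1$, $(t_1^L,t_0^L)$ and $(t_1^R,t_0^L)$ give the same operator, while the complement $(t_1^R,t_0^R)$ gives its adjoint. Your observation that the recursion places each $J[\cE_{k+1}]$ left or right of $J[\cE_k]$ independently with weight $\tfrac12$ is correct. Combined with the two-to-one map, it gives weight $2\cdot2^{-(n+1)}=2^{-n}$ per word on both sides, which is the identification you need. The theorem's own parenthetical, which attributes the count $2^{n+1}$ to adjoint pairs, makes the same slip: generically only $2^n$ of the mixed-branch operators are distinct.
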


\begin{proof}
    The proof is straightforward from the recursive expression derived above. For example, the left Kirkwood--Dirac temporal state corresponds to the probability distribution $\mathbb{P}(i)=(1,0,\cdots,0)$; the right state corresponds to $\mathbb{P}(i)=(0,1,0,\cdots,0)$. The Margenau--Hill state corresponds to $\mathbb{P}(i)=(1/2,1/2,0,\cdots,0)$, while the L\"uders--von Neumann temporal state corresponds to $\mathbb{P}(i)=(1/2m,1/2m,\cdots,1/2m)$.
\end{proof}

This unifies almost all existing single-local-space constructions of spatiotemporal states. For the two-time case, an even more general unification is possible; see Section~\ref{sec:TwoTimeState}.

We emphasize that general Kirkwood--Dirac spatiotemporal states are non-Hermitian and nonnormal operators. As a simple example, consider the two-time state $\Upsr_{t_1,t_0}$ associated with the identity channel $\mathcal{E}=\id$ and the initial state $\varrho_{t_0}$:
\begin{equation}
    \Upsr_{t_1,t_0}=\mathbb S (\I\otimes \varrho_{t_0} )
\end{equation}
One finds
\begin{equation}
    \Upsr_{t_1,t_0}\Upsr_{t_1,t_0}^{\dagger}
    =\varrho_{t_0}^2\otimes \mathds{1},
    \qquad
    \Upsr_{t_1,t_0}^{\dagger}\Upsr_{t_1,t_0}
    =\mathds{1}\otimes\varrho_{t_0}^2 .
\end{equation}
Since normality requires these two operators to coincide, $\Upsr_{t_1,t_0}$ is normal if and only if $\varrho_{t_0}^2\otimes\mathds{1}=\mathds{1}\otimes\varrho_{t_0}^2$, which holds precisely for the maximally mixed state $\varrho_{t_0}=\mathds{1}/d$.
The doubled spatiotemporal state may be nonnormal even on a fixed time slice $\Upsd_{t_k}$, one must then take the left or right marginals to obtain the corresponding density operator $\rho_{t_k}=\Tr_L\Upsd_{t_k}=\Tr_R\Upsd_{t_k}$.
A simplest example is
\begin{equation}\label{eq:one-time-doubled-state}
 \Upsd_{t_0} =(\varrho_{t_0}\otimes\mathds1)\mathbb S,
\end{equation}
which is nonnormal.

\subsubsection{Backward (Heisenberg) recursion}

Every forward (Schr\"odinger) recursion admits a dual backward (Heisenberg) recursion. Repeatedly applying
$\Tr[Y\cE(X)]=\Tr[\cE^{\dagger}(Y)X]$,
where $\cE^{\dagger}$ is the unital CP map describing the backward evolution of observables, to Eq.~\eqref{eq:QD} gives the following result.

\begin{proposition}[Backward (Heisenberg) recursion]\label{prop:backward}
For left- and right-branch operators
$A_{\alpha_0},B_{\beta_0}$, $\ldots$, $A_{\alpha_n},B_{\beta_n}$,
define the spatiotemporal effect operators recursively by
\begin{equation}
  M_{\alpha_n,\beta_n}=B_{\beta_n}A_{\alpha_n},
  \qquad
  M_{\alpha_n,\beta_n,\ldots,\alpha_{k-1},\beta_{k-1}}
  =
  B_{\beta_{k-1}}\,
  \cE_k^{\dagger}\!\left(
  M_{\alpha_n,\beta_n,\ldots,\alpha_k,\beta_k}
  \right)
  A_{\alpha_{k-1}},
  \quad k=n,\ldots,1.
\end{equation}
Then
\begin{equation}
  \Tr\!\left[
  \Upsd
  \bigotimes_{k=0}^n
  (A_{\alpha_k}\otimes B_{\beta_k})
  \right]
  =
  \Tr\!\left[
  M_{\alpha_n,\beta_n,\ldots,\alpha_0,\beta_0}
  \varrho_{t_0}
  \right].
\end{equation}
When $A_{\alpha_k}$ and $B_{\beta_k}$ are chosen as spatiotemporal frame operators, this yields the doubled spatiotemporal Kirkwood--Dirac distribution $\Qd_{\rm KD}(\alpha_n,\beta_n,\cdots,\alpha_0,\beta_0)$. The left-, right-, and mixed-branch cases are obtained by setting the appropriate $A_{\alpha_k}$ or $B_{\beta_k}$ to the identity. The spatiotemporal effect operators thus encode the temporal phase-space information.
\end{proposition}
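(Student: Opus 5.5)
The plan is a backward induction on the number of time steps, starting from the forward Born-rule identity of Theorem~\ref{thm:rec-doubled}. That theorem gives
\[
\Tr\!\left[\bigotimes_{k=0}^{n}(A_k\otimes B_k)\Upsd\right]=\Tr\!\left[A_n\cE_n\!\left(\cdots A_1\cE_1(A_0\varrho_{t_0}B_0)B_1\cdots\right)B_n\right].
\]
Write $G_k=\cE_k(A_{k-1}G_{k-1}B_{k-1})$ as in the proof of that theorem. The right-hand side is then $\Tr[A_nG_nB_n]$. Cyclicity of the trace turns this into $\Tr[B_nA_n\,G_n]=\Tr[M_{\alpha_n,\beta_n}G_n]$, which is the starting point of the recursion.

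The inductive claim, for $k=n,\dots,1$, is
\[
\Tr\!\left[M_{\alpha_n,\beta_n,\ldots,\alpha_k,\beta_k}\,G_k\right]=\Tr\!\left[M_{\alpha_n,\beta_n,\ldots,\alpha_{k-1},\beta_{k-1}}\,G_{k-1}\right],
\]
proved in three moves:
\begin{itemize}
\item Substitute $G_k=\cE_k(A_{k-1}G_{k-1}B_{k-1})$.
\item Apply the duality $\Tr[Y\cE(X)]=\Tr[\cE^\dagger(Y)X]$ with $Y=M_{\ldots,\alpha_k,\beta_k}$. This gives $\Tr[\cE_k^\dagger(M_{\ldots,\alpha_k,\beta_k})A_{k-1}G_{k-1}B_{k-1}]$.
\item Use cyclicity once more to move $B_{k-1}$ to the front, yielding $\Tr[B_{k-1}\cE_k^\dagger(M_{\ldots})A_{k-1}\,G_{k-1}]$. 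This is exactly the defining recursion for $M$.
\end{itemize}

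At $k=1$ the chain reaches $\Tr[M_{\ldots,\alpha_0,\beta_0}G_0]$ with $G_0=A_0\varrho_{t_0}B_0$. This does not yet match the stated endpoint, because the recursion stops at index $k-1=0$ and the prescription "$B_{\beta_{k-1}}\cdots A_{\alpha_{k-1}}$" at $k=1$ already absorbs $A_0$ and $B_0$. So the terminal expression is really $\Tr[M_{\ldots,\alpha_1,\beta_1}\,\cE_1(A_0\varrho_{t_0}B_0)]$. One more application of the duality and cyclicity converts it to $\Tr[B_0\cE_1^\dagger(M_{\ldots,\alpha_1,\beta_1})A_0\,\varrho_{t_0}]=\Tr[M_{\ldots,\alpha_0,\beta_0}\varrho_{t_0}]$, which is the claim.

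The only delicate point is therefore index bookkeeping. I would have to check that the recursion for $M$ is applied exactly $n$ times, once per channel $\cE_1,\dots,\cE_n$, and that $A_0$ and $B_0$ enter through the last step rather than being double-counted via $G_0$. To avoid this, I would phrase the induction with the partial objects $G'_k:=\cE_k(\cdots)$ taken before the $A_k,B_k$ insertion, with $G'_0=\varrho_{t_0}$. Then every step has the uniform form "cyclicity, then duality."

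The remaining claims are immediate specialisations:
\begin{itemize}
\item Choosing frame operators $A_{\alpha_k}=F_{\alpha_k}$ and $B_{\beta_k}=F'_{\beta_k}$ reproduces Eq.~\eqref{eq:QD}, so the construction yields $\Qd_{\rm KD}$.
\item Setting $B\equiv\I$, $A\equiv\I$, or a branch-dependent choice gives the left, right and mixed-branch cases, by the marginal relations of Corollary~\ref{cor:marginals}.
\end{itemize}
Unitality of $\cE^\dagger$ is not needed for the identity itself. It only guarantees that the identity insertions consistently reproduce the marginals.
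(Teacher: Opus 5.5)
Your proof is correct and takes the same route as the paper, which obtains the backward recursion by repeatedly applying $\Tr[Y\cE(X)]=\Tr[\cE^{\dagger}(Y)X]$, together with cyclicity of the trace, to the forward expression of Eq.~\eqref{eq:QD} supplied by Theorem~\ref{thm:rec-doubled}. Indexing by the pre-insertion states $G'_k$ with $G'_0=\varrho_{t_0}$ is the right bookkeeping, and it avoids the inconsistent $G_0=A_0\varrho_{t_0}B_0$ convention in the proof of Theorem~\ref{thm:rec-doubled}; you can therefore drop the corrective detour at $k=1$ and run the uniform ``cyclicity, then duality'' step for all $k=n,\ldots,1$.
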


Thus, the doubled spatiotemporal Kirkwood--Dirac distribution can be computed either by propagating the state forward with $\cE_{t_0,t}$ or by propagating the effect backward with $\cE_{t_0,t}^{\dagger}$. For a given spatiotemporal frame, the effect operator
$M_{\alpha_n,\beta_n,\ldots,\alpha_0,\beta_0}$
encodes the temporal correlations of the quantum process. Its marginals are obtained by summing over the corresponding indices $\alpha_k$ or $\beta_k$, yielding lower-order spatiotemporal effect operators. These properties were discussed in detail in Ref.~\cite{Jia2025TemporalKirkwoodDirac}. They are closely related to the nonclassicality of spatiotemporal Kirkwood--Dirac distributions, which will be discussed in Section~\ref{sec:quantumness}.

% when  $ M_{\alpha_n,\beta_n,\ldots,\alpha_0,\beta_0}\geq 0$, 

\subsubsection{Processes with memory}
\label{sec:recmemory}

For a process with memory, the same recursion holds on the dilated space. Since the spatiotemporal partial trace is well-defined, one may naturally trace out the environment to obtain the corresponding temporal states.

\begin{proposition}[Dilated recursion]\label{prop:recmemory}
Let the process be generated by $\varrho^{SE}_{t_0}$ and joint unitaries $\mathcal{U}_k$ on $\cH_S\otimes\cH_E$, with insertions acting on $S$ only. Then
\begin{equation}
   \Upsr^{SE}_{t_k\cdots t_0}=J[\mathcal{U}_k]\stlink\Upsr^{SE}_{t_{k-1}\cdots t_0},\qquad
   \Upsr^{S}_{t_n\cdots t_0}=\Tr_{E}\,\Upsr^{SE}_{t_n\cdots t_0},
\end{equation}
where $\Tr_E$ acts on the environment. The environment therefore enters only through the bond space of the recursion; the algebraic structure is unchanged.
\end{proposition}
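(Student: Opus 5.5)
The plan is to prove the dilated recursion in two stages. First I apply the memoryless recursion, Eq.~\eqref{eq:recright}, on the joint space $\cH_S\otimes\cH_E$. Then I show that tracing out the environment at the end reproduces the system-level Kirkwood--Dirac distribution of Eq.~\eqref{eq:temporal_quasiprobability_memory}.

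\emph{Step 1: recursion on the dilated space.} The joint process $(\varrho^{SE}_{t_0},\mathcal{U}_1,\dots,\mathcal{U}_n)$ is a memoryless process on $\cH_S\otimes\cH_E$, because each $\mathcal{U}_k$ is a CPTP (indeed unitary) map on that space. The statement quoted from Ref.~\cite{Jia2025TemporalKirkwoodDirac} therefore applies verbatim with $\cH\to\cH_S\otimes\cH_E$. It gives $\Upsr^{SE}_{t_k\cdots t_0}=J[\mathcal{U}_k]\stlink\Upsr^{SE}_{t_{k-1}\cdots t_0}$ with $\Upsr^{SE}_{t_0}=\varrho^{SE}_{t_0}$. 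It also gives the Born rule
\[
\Tr\!\big[\Upsr^{SE}(Y_n\otimes\cdots\otimes Y_0)\big]=\Tr\!\big[\mathcal{U}_n(\cdots\mathcal{U}_1(\varrho^{SE}_{t_0}Y_0)Y_1\cdots)Y_n\big]
\]
for all $Y_k\in\BB(\cH_S\otimes\cH_E)$. This identity follows inductively from the pairing property Eq.~\eqref{eq:choipairing} of $J[\mathcal{U}_k]$, exactly as in the proof of Theorem~\ref{thm:rec-doubled}, so nothing new is needed here.

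\emph{Step 2: tracing out the environment.} Here $\Tr_E$ means the partial trace over the environment factor at every time slot $t_0,\dots,t_n$, so that $\Upsr^S$ lives on $\bigotimes_k\cH_{S,t_k}$. For system insertions $X_k\in\BB(\cH_S)$ I use the defining identity of the partial trace:
\[
\Tr\!\big[\Tr_E(\Upsr^{SE})\textstyle\bigotimes_k X_k\big]=\Tr\!\big[\Upsr^{SE}\bigotimes_k(X_k\otimes\mathds{1}^E)\big].
\]
I then substitute $Y_k=X_k\otimes\mathds{1}^E$ into the Born rule of Step 1. The result is the right-branch version of Eq.~\eqref{eq:temporal_quasiprobability_memory}, in which the phase-space operations act as $\mathcal{P}\otimes\Id^E$, namely
\[
\Tr_{SE}\!\big[\mathcal{U}_n(\cdots\mathcal{U}_1(\varrho^{SE}_{t_0}(X_0\otimes\mathds{1}^E))(X_1\otimes\mathds{1}^E)\cdots)(X_n\otimes\mathds{1}^E)\big].
\]
Now choose the $X_k$ from an informationally complete frame. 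By Remark~\ref{remk:equivalence}, an operator is uniquely fixed by its Born-rule values on such a frame. Hence $\Tr_E\Upsr^{SE}$ is exactly the right Kirkwood--Dirac temporal state of the process with memory. The statement that the environment enters only through the bond space follows from the observation that in each $J[\mathcal{U}_k]$ the $E$-legs are contracted only through the operator products inside the link product.

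\emph{Main obstacle.} The delicate point is the bookkeeping in Step 2. In the dilated state $\Upsr^{SE}$, the environment copy at each intermediate slot $t_k$ is kept open as a tensor factor by the padded link product. Tracing it afterwards must give the same thing as the physical situation, where the environment is carried along as an operator and never measured.

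I would check this directly: inserting $\mathds{1}^E$ at an open $E$-slot and applying Eq.~\eqref{eq:choipairing} to $J[\mathcal{U}_{k+1}]$ reproduces $\mathcal{U}_{k+1}$ acting on the full $SE$ operator. This holds because $\Tr[(A\otimes\mathds{1}^E)\,\cdot\,]$ on the input leg corresponds to feeding in the unmeasured environment. If the identity-slot convention turns out to disagree with that physical picture, the fallback is to define the partial trace over $E$ only at the final slot and to contract the intermediate $E$ slots inside the link product. Either convention yields the same Born-rule values, and that equality is what Step 2 requires.
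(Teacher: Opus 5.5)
Your proposal is correct and follows the route the paper indicates in the sentence just before the proposition. You run the memoryless right recursion on $\cH_S\otimes\cH_E$, take $\Tr_E$ over every time slot, and identify the result through the Born rule with $Y_k=X_k\otimes\mathds{1}^E$; this Born-rule check, which you add explicitly, is left implicit in the paper. Your ``main obstacle'' is already settled by Step~1, because the dilated Born rule holds for arbitrary $Y_k\in\BB(\cH_S\otimes\cH_E)$, so the fallback convention is never needed.
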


Proposition~\ref{prop:recmemory} is what makes the temporal-entanglement bounds of Section~\ref{sec:STentropy} sharp: Markovianity is exactly the statement that the bond space is $\BB(\cH_S)$; memory enlarges it to $\BB(\cH_S\otimes\cH_E)$.

\subsubsection{Temporal tensor network representation}
\label{sec:tensornet}
It is convenient to represent temporal states using tensor network notation. A matrix can be depicted as 
\begin{equation}
    A_{ij}= \begin{aligned}
        \begin{tikzpicture}
\node[draw,line width=1pt, rectangle, minimum width=0.6cm, minimum height=0.6cm] (A) at (0,0) {$A$};
\draw[line width=1pt] (A.west) -- ++(-0.8,0) node[left] {$i$};
\draw[line width=1pt] (A.east) -- ++(0.8,0) node[right] {$j$};
\end{tikzpicture}
    \end{aligned}
\end{equation}
The identity operator is represented by a line without a box (occasionally, we may add a dot on the line for emphasis). Multiplication is represented by gluing two tensors along a shared leg. For operators carrying an index, such as $\sigma_{\mu}$ (or $F_{\alpha}$), we represent the additional index with an extra leg:
\begin{equation}
       \sigma_{\mu}= \begin{aligned}
        \begin{tikzpicture}
\node[draw,line width=1pt, rectangle, minimum width=0.6cm, minimum height=0.6cm] (A) at (0,0) {$A$};
\draw[line width=1pt] (A.west) -- ++(-0.8,0) node[left] {$i$};
\draw[line width=1pt] (A.east) -- ++(0.8,0) node[right] {$j$};
\draw[line width=1pt, red] (A.south) -- ++(0,-0.5) node[right] {$\mu$};
\end{tikzpicture}
    \end{aligned}.
\end{equation}
The swap operator $\langle ij|\mathbb{S}|kl\rangle=\delta_{il}\delta_{jk}$ is represented as 
\begin{equation}
    \mathbb{S}_{ij,kl}=\begin{aligned}
\begin{tikzpicture}
\draw[line width=1pt] node[left] {$j$} (0,0) -- ++(1,1)  node[right] {$k$};
\draw[line width=1pt]  (0,1) -- ++(0.47,-0.47) ;
\draw[line width=1pt]  (1,0) -- ++(-0.47,0.47);
\node at (-0.2,1) {$i$};
\node at (1.2,0) {$l$};
\end{tikzpicture}
    \end{aligned} \label{eq:swaptensor}
\end{equation}
The identities $\Tr_B (A\otimes B) \mathbb{S} =AB$ and $\Tr_A (A\otimes B) \mathbb{S} =BA$ can be expressed diagrammatically as
\begin{equation}
 \Tr_B[ (A\otimes B) \mathbb{S}]=  \begin{aligned}
        \begin{tikzpicture}
\node[draw,line width=1pt, rectangle, minimum width=0.6cm, minimum height=0.6cm] (A) at (0,0) {$A$};
\draw[line width=1pt] (A.west) -- ++(-0.8,0) ;
\draw[line width=1pt] (A.east) -- ++(0.8,0) ;
\node[draw,line width=1pt, rectangle, minimum width=0.6cm, minimum height=0.6cm] (B) at (0,1) {$B$};
\draw[line width=1pt] (B.west) -- ++(-0.8,0) ;
\draw[line width=1pt] (B.east) -- ++(0.8,0) ;
\draw[line width=1pt] (B.east)++(0.8,0) --  ++(0.47,-0.47);
\draw[line width=1pt] (B.east)++(0.8,0)++(0.47,-0.47)++(0.06,-0.06) -- ++(0.5,-0.5)  ;
\draw[line width=1pt] (A.east)++(0.8,0) -- ++(0.98,0.98) ;
\draw[line width=1pt] (B.west) ++(-0.8,0)  -- ++(0,0.4) -- ++(3.2,0)-- ++ (0,-0.42);
\end{tikzpicture}
    \end{aligned},
\quad
 \Tr_A [(A\otimes B) \mathbb{S}]=  \begin{aligned}
        \begin{tikzpicture}
\node[draw,line width=1pt, rectangle, minimum width=0.6cm, minimum height=0.6cm] (A) at (0,0) {$A$};
\draw[line width=1pt] (A.west) -- ++(-0.8,0) ;
\draw[line width=1pt] (A.east) -- ++(0.8,0) ;
\node[draw,line width=1pt, rectangle, minimum width=0.6cm, minimum height=0.6cm] (B) at (0,1) {$B$};
\draw[line width=1pt] (B.west) -- ++(-0.8,0) ;
\draw[line width=1pt] (B.east) -- ++(0.8,0) ;
\draw[line width=1pt] (B.east)++(0.8,0) --  ++(0.47,-0.47);
\draw[line width=1pt] (B.east)++(0.8,0)++(0.47,-0.47)++(0.06,-0.06) -- ++(0.5,-0.5)  ;
\draw[line width=1pt] (A.east)++(0.8,0) -- ++(0.98,0.98) ;
\draw[line width=1pt] (A.west) ++(-0.8,0)  -- ++(0,-0.4) -- ++(3.26,0)-- ++ (0,0.4);
\end{tikzpicture}
    \end{aligned}.
\end{equation}
This tensor network representation simplifies the structure of temporal states and renders it more intuitive.

We adopt the convention that time flows upwards. The tensor network representation of temporal quasiprobability distributions can be given as follows (example of doubled and mixed-branch Kirkwood-Dirac quasiprobablity distributions)
\begin{align}
\begin{aligned}
     &\,\,\,\,\,\quad    \begin{aligned}
        \begin{tikzpicture}
\node[draw,line width=1pt, rectangle, minimum width=0.6cm, minimum height=0.6cm] (A) at (0,0) {$\varrho_{t_0}$};
\draw[line width=1pt] (A.west) -- ++(-0.5,0) -- ++(0,4);
\draw[line width=1pt] (A.east) -- ++(0.5,0) -- ++(0,4);
            \node[ line width=1pt, dashed, draw opacity=0.5] (a) at (-0.9,4.4){$\vdots$};
            \node[ line width=1pt, dashed, draw opacity=0.5] (a) at (0.9,4.4){$\vdots$};
\draw[line width=1pt] (-0.9,4.4) -- ++(0,2);
\draw[line width=1pt] (0.9,4.4) -- ++(0,2)--++(-1.8,0);
%%%%%%%%%%%F1%%%%
\node[draw, line width=1pt, circle, minimum size=0.6cm, inner sep=0pt,fill=white] (F1) at (-.9,0.6) {$F_{\alpha_0}$};
\draw[line width=1pt,red] (F1.west) -- ++(-0.5,0) ;
%%%%%%E1%%%%%%
 \node[draw, line width=1pt, rectangle, 
        minimum width=3cm, minimum height=0.6cm, 
        fill=white] (E1) at (0,1.6) {$\cE_{t_0,t_1}$};
%%%%%%%%%%%F2%%%%%%%%
\node[draw, line width=1pt, circle, minimum size=0.6cm, inner sep=0pt,fill=white] (F2) at (-.9,2.5) {$F_{\alpha_1}$};
\draw[line width=1pt,red] (F2.west) -- ++(-0.5,0) ;
%%%%%%E2%%%%%%
 \node[draw, line width=1pt, rectangle, 
        minimum width=3cm, minimum height=0.6cm, 
        fill=white] (E2) at (0,3.5) {$\cE_{t_1,t_2}$};
%%%%%%En%%%%%%
 \node[draw, line width=1pt, rectangle, 
        minimum width=3cm, minimum height=0.6cm, 
        fill=white] (E2) at (0,4.9) {$\cE_{t_1,t_2}$};
%%%%%%%%%%%Fn%%%%%%%%
\node[draw, line width=1pt, circle, minimum size=0.6cm, inner sep=0pt,fill=white] (Fn) at (-.9,5.7) {$F_{\alpha_n}$};
\draw[line width=1pt,red] (Fn.west) -- ++(-0.5,0) ;
%%%%%%%%%%%Fnr%%%%%%%%
\node[draw, line width=1pt, circle, minimum size=0.6cm, inner sep=0pt,fill=white] (Fnr) at (.9,5.7) {$F_{\beta_n}$};
\draw[line width=1pt,red] (Fnr.east) -- ++(0.5,0) ;
%%%%%%%%%%%F2r%%%%%%%%
\node[draw, line width=1pt, circle, minimum size=0.6cm, inner sep=0pt,fill=white] (F2r) at (.9,2.5) {$F_{\beta_1}$};
\draw[line width=1pt,red] (F2r.east) -- ++(0.5,0) ;
%%%%%%%%%%%F1r%%%%
\node[draw, line width=1pt, circle, minimum size=0.6cm, inner sep=0pt,fill=white] (F1r) at (.9,0.6) {$F_{\beta_0}$};
\draw[line width=1pt,red] (F1r.east) -- ++(0.5,0) ;
\end{tikzpicture} 
\end{aligned}\\
&\Qd_{\rm KD}(\alpha_n,\cdots,\alpha_0;\beta_n,\cdots,\beta_0)
\end{aligned}
\quad 
\begin{aligned}
     &    \begin{aligned}
        \begin{tikzpicture}
\node[draw,line width=1pt, rectangle, minimum width=0.6cm, minimum height=0.6cm] (A) at (0,0) {$\varrho_{t_0}$};
\draw[line width=1pt] (A.west) -- ++(-0.5,0) -- ++(0,4);
\draw[line width=1pt] (A.east) -- ++(0.5,0) -- ++(0,4);
            \node[ line width=1pt, dashed, draw opacity=0.5] (a) at (-0.9,4.4){$\vdots$};
            \node[ line width=1pt, dashed, draw opacity=0.5] (a) at (0.9,4.4){$\vdots$};
\draw[line width=1pt] (-0.9,4.4) -- ++(0,2);
\draw[line width=1pt] (0.9,4.4) -- ++(0,2)--++(-1.8,0);
%%%%%%%%%%%F1%%%%
\node[draw, line width=1pt, circle, minimum size=0.6cm, inner sep=0pt,fill=white] (F1) at (-.9,0.6) {$F_{\alpha_0}$};
\draw[line width=1pt,red] (F1.west) -- ++(-0.5,0) ;
%%%%%%E1%%%%%%
 \node[draw, line width=1pt, rectangle, 
        minimum width=3cm, minimum height=0.6cm, 
        fill=white] (E1) at (0,1.6) {$\cE_{t_0,t_1}$};
% %%%%%%%%%%%F2%%%%%%%%
% \node[draw, line width=1pt, circle, minimum size=0.6cm, inner sep=0pt,fill=white] (F2) at (-.9,2.5) {$F_{\alpha_1}$};
% \draw[line width=1pt,red] (F2.west) -- ++(-0.5,0) ;
%%%%%%E2%%%%%%
 \node[draw, line width=1pt, rectangle, 
        minimum width=3cm, minimum height=0.6cm, 
        fill=white] (E2) at (0,3.5) {$\cE_{t_1,t_2}$};
%%%%%%En%%%%%%
 \node[draw, line width=1pt, rectangle, 
        minimum width=3cm, minimum height=0.6cm, 
        fill=white] (E2) at (0,4.9) {$\cE_{t_1,t_2}$};
%%%%%%%%%%%Fn%%%%%%%%
\node[draw, line width=1pt, circle, minimum size=0.6cm, inner sep=0pt,fill=white] (Fn) at (-.9,5.7) {$F_{\alpha_n}$};
\draw[line width=1pt,red] (Fn.west) -- ++(-0.5,0) ;
% %%%%%%%%%%%Fnr%%%%%%%%
% \node[draw, line width=1pt, circle, minimum size=0.6cm, inner sep=0pt,fill=white] (Fnr) at (.9,5.7) {$F_{\beta_n}$};
% \draw[line width=1pt,red] (Fnr.east) -- ++(0.5,0) ;
%%%%%%%%%%%F2r%%%%%%%%
\node[draw, line width=1pt, circle, minimum size=0.6cm, inner sep=0pt,fill=white] (F2r) at (.9,2.5) {$F_{\beta_1}$};
\draw[line width=1pt,red] (F2r.east) -- ++(0.5,0) ;
%%%%%%%%%%%F1r%%%%
% \node[draw, line width=1pt, circle, minimum size=0.6cm, inner sep=0pt,fill=white] (F1r) at (.9,0.6) {$F_{\beta_0}$};
% \draw[line width=1pt,red] (F1r.east) -- ++(0.5,0) ;
\end{tikzpicture} 
\end{aligned}\\
&\quad  \overline{Q}_{\rm KD}(\alpha_n,\cdots,\beta_1,\alpha_0)
\end{aligned}
\end{align}
The left Kirkwood--Dirac temporal state $\Upsl^{\rm KD}_{t_n,\cdots,t_0}$ can be represented diagrammatically as follows (recall that $J[\mathcal{E}]=(\mathcal{E}\otimes \mathrm{id})(\mathbb{S})$, which admits a more explicit tensor network depiction based on Eq.~\eqref{eq:swaptensor})
\begin{equation} \label{eq:MPOdoub}
    \begin{aligned}
     &    \begin{aligned}
        \begin{tikzpicture}
\draw[line width=1pt] (0,0) -- ++(9,0) ; 
\node at (-.2,0) {$t_0$};
\draw[line width=1pt] (0,1) -- ++(9,0) ; 
\node at (-.2,1) {$t_1$};
\draw[line width=1pt] (0,2) -- ++(9,0) ;  
\node at (-.2,2) {$t_2$};
\draw[line width=1pt] (0,4) -- ++(9,0) ;  
\node at (-.2,4) {$t_n$};
%%%Jn
 \node[draw, line width=1pt, rectangle, 
        minimum width=0.6cm, minimum height=1.5cm, 
        fill=white] (Jn) at (1.8,3.6) {$J[\cE_{n}]$};   
%%%J2
 \node[draw, line width=1pt, rectangle, 
        minimum width=0.6cm, minimum height=1.5cm, 
        fill=white] (Jn) at (3.8,1.6) {$J[\cE_{2}]$};  
        %%%J1
 \node[draw, line width=1pt, rectangle, 
        minimum width=0.6cm, minimum height=1.5cm, 
        fill=white] (Jn) at (5.8,0.5) {$J[\cE_{1}]$}; 
\node[draw,line width=1pt, rectangle, minimum width=0.6cm, minimum height=0.6cm,fill= white] (Rho) at (7.3,0) {$\varrho_{t_0}$};
\node at (3.8,3) {$\vdots$};
\end{tikzpicture} 
\end{aligned}
\end{aligned}
\end{equation}
The mixed-branch and right Kirkwood-Dirac temporal state can be represented similarly by changing the horizontal ordering of the boxes in the tensor network representations.

The doubled  Kirkwood-Dirac temporal state $\Upsd$ can be represented as follows
\begin{equation}
\begin{aligned}
\begin{tikzpicture}[x=1cm,y=1cm]
%%%%%%%%%%%%%%%%%%%%%%%%%%%%%%%%%%%%%%%%%%%%%%%%%%%%%%%%%%%%
% Upper piece
%%%%%%%%%%%%%%%%%%%%%%%%%%%%%%%%%%%%%%%%%%%%%%%%%%%%%%%%%%%%
% four horizontal wires

\draw[line width=1pt] (0,7.5) -- (1.5,7.5);
\node at (-.3,7.5) {$t_n^L$};

\draw[line width=1pt] (0,6.8) -- (1.37,6.8);
\node at (-.3,6.8) {$t_n^R$};

\draw[line width=1pt] (0,6.1) -- (1.37,6.1);
\node at (-.35,6.1) {$t_{n-1}^L$};

\draw[line width=1pt] (0,5.4) -- (1.5,5.4);
\node at (-.35,5.4) {$t_{n-1}^R$};
% J[E_n]
\node[
    draw,
    line width=1pt,
    rectangle,
    minimum width=0.8cm,
    minimum height=2.5cm,
    fill=white
] (Jn-top) at (2,6.45)
{$J[\cE_n]$};

% outgoing wires
\draw[line width=1pt] (2.55,7.5) -- (4.8,7.5);
\draw[line width=1pt] (2.62,6.8) -- (4.8,6.8);
\draw[line width=1pt] (2.62,6.1) -- (4,6.1);
\node at (4.4,6.1) {$\cdots$};
\draw[line width=1pt] (4.8,6.1) -- (8.7,6.1);
\draw[line width=1pt] (2.55,5.4) -- (4,5.4);
\node at (4.4,5.4) {$\cdots$};
\draw[line width=1pt] (4.8,5.4) -- (8.7,5.4);

% swap on upper two lines
\draw[line width=1pt] (4.8,6.8) -- (5.55,7.5);
\draw[line width=1pt] (4.8,7.5) -- ++(0.35,-0.33);
\draw[line width=1pt] (5.55,6.8) -- ++(-0.35,0.33);

% continue wires after swap
\draw[line width=1pt] (5.55,7.5) -- (8.7,7.5);
\draw[line width=1pt] (5.55,6.8) -- (8.7,6.8);

%%%%%%%%%%%%%%%%%%%%%%%%%%%%%%%%%%%%%%%%%%%%%%%%%%%%%%%%%%%%
% vertical dots
%%%%%%%%%%%%%%%%%%%%%%%%%%%%%%%%%%%%%%%%%%%%%%%%%%%%%%%%%%%%

\node at (2.0,4.75) {$\vdots$};

%%%%%%%%%%%%%%%%%%%%%%%%%%%%%%%%%%%%%%%%%%%%%%%%%%%%%%%%%%%%
% Lower/main piece
%%%%%%%%%%%%%%%%%%%%%%%%%%%%%%%%%%%%%%%%%%%%%%%%%%%%%%%%%%%%

% six incoming wires with vertical spacing 0.7
\draw[line width=1pt] (0,4.2) -- (3.8,4.2);
\node at (-.3,4.2) {$t_2^L$};

\draw[line width=1pt] (0,3.5) -- (3.64,3.5);
\node at (-.3,3.5) {$t_2^R$};

\draw[line width=1pt] (0,2.8) -- (3.64,2.8);
\node at (-.3,2.8) {$t_1^L$};

\draw[line width=1pt] (0,2.1) -- (5.1,2.1);
\node at (-.3,2.1) {$t_1^R$};

\draw[line width=1pt] (0,1.4) -- (5.1,1.4);
\node at (-.3,1.4) {$t_0^L$};

\draw[line width=1pt] (0,0.7) -- (6.0,0.7);
\node at (-.3,0.7) {$t_0^R$};
%%%%%%%%%%%%%%%%%%%%%%%%%%%%%%%%%%%%%%%%%%%%%%%%%%%%%%%%%%%%
% J[E_2]
%%%%%%%%%%%%%%%%%%%%%%%%%%%%%%%%%%%%%%%%%%%%%%%%%%%%%%%%%%%%
\node[
    draw,
    line width=1pt,
    rectangle,
    minimum width=0.8cm,
    minimum height=2.5cm,
    fill=white
] (Jn) at (4.25,3.18)
{$J[\cE_2]$};

% wires leaving J[E_2]
\draw[line width=1pt] (4.8,4.2) -- (6.2,4.2);
\draw[line width=1pt] (4.85,3.5) -- (6.2,3.5);

\draw[line width=1pt] (4.85,2.8) -- (5.5,2.8);
%\draw[line width=1pt] (4.8,2.1) -- (5.5,2.1);

% swap after J[E_2]
% broken strand
\draw[line width=1pt] (6.2,4.2) -- ++(0.33,-0.33);
\draw[line width=1pt] (6.9,3.5) -- ++(-0.33,0.33);

% continuous strand
\draw[line width=1pt] (6.2,3.5) -- (6.9,4.2);

% wires after swap
\draw[line width=1pt] (6.9,4.2) -- (8.7,4.2);
\draw[line width=1pt] (6.9,3.5) -- (8.7,3.5);
%%%%%%%%%%%%%%%%%%%%%%%%%%%%%%%%%%%%%%%%%%%%%%%%%%%%%%%%%%%%
% J[E_1]
%%%%%%%%%%%%%%%%%%%%%%%%%%%%%%%%%%%%%%%%%%%%%%%%%%%%%%%%%%%%

\node[
    draw,
    line width=1pt,
    rectangle,
    minimum width=0.8cm,
    minimum height=2.5cm,
    fill=white
] (J1) at (5.7,1.7)
{$J[\cE_1]$};

% wires from J[E_1]
\draw[line width=1pt] (6.22,2.8) -- (8.7,2.8);
\draw[line width=1pt] (6.3,2.1) -- (8.7,2.1);

\draw[line width=1pt] (6.3,1.4) -- (8,1.4);
\draw[line width=1pt] (6.22,0.7) -- (8,0.7);

%%%%%%%%%%%%%%%%%%%%%%%%%%%%%%%%%%%%%%%%%%%%%%%%%%%%%%%%%%%%
% swap after J[E_1]
%%%%%%%%%%%%%%%%%%%%%%%%%%%%%%%%%%%%%%%%%%%%%%%%%%%%%%%%%%%%

%%%%%%%%%%%%%%%%%%%%%%%%%%%%%%%%%%%%%%%%%%%%%%%%%%%%%%%%%%%%
% rho_{t_0}
%%%%%%%%%%%%%%%%%%%%%%%%%%%%%%%%%%%%%%%%%%%%%%%%%%%%%%%%%%%%

\node[
    draw,
    line width=1pt,
    rectangle,
    minimum width=0.7cm,
    minimum height=0.55cm,
    fill=white
] (rho) at (7,1.4)
{$\varrho_{t_0}$};

%%%%%%%%%%%%%%%%%%%%%%%%%%%%%%%%%%%%%%%%%%%%%%%%%%%%%%%%%%%%
% bottom swap
%%%%%%%%%%%%%%%%%%%%%%%%%%%%%%%%%%%%%%%%%%%%%%%%%%%%%%%%%%%%
% swap on bottom two lines
% broken strand
\draw[line width=1pt] (8.0,1.4) -- ++(0.33,-0.33);
\draw[line width=1pt] (8.7,0.7) -- ++(-0.33,0.33);

% continuous strand
\draw[line width=1pt] (8.0,0.7) -- (8.7,1.4);
\end{tikzpicture}
\end{aligned}
\end{equation}
Based on this, it is easy to see $\Tr_L \Upsd=\Upsr$ and $\Tr_R \Upsd=\Upsl$.

The recursion in Eq.~\eqref{eq:recright} represents the temporal state as a matrix product operator (MPO) whose sites correspond to time slices and whose bond space is the operator algebra $\BB(\cH_{t_k})$ transported through the slice. This establishes a precise duality between a $(0+1)$-dimensional temporal problem and a $1$-dimensional spatial one: the temporal transfer operator $J[\cE]$ acts as a spatial MPO tensor, the temporal state as a one-dimensional chain wavefunction, and the temporal entanglement is precisely the entanglement entropy of that chain across a cut. For Floquet or time-translation-invariant processes, the tensor is identical at every slice, and the temporal state becomes a uniform MPO (up to the initial state, or in the long-time limit).

\subsection{Out-of-time-ordered correlators and folded contours: $2m$-branch temporal states}
\label{sec:mfold}

Out-of-time-ordered correlators (OTOCs) \cite{larkin1969otoc,Roberts2015otoc} play a crucial role in quantum gravity, information scrambling, and related areas. It has been shown that the Kirkwood--Dirac distribution is closely related to OTOCs, as the correlation functions can be expressed as through phase-space distributions \cite{Halpern2018otoc,Alonso2019otocKD}. 
A characterization of quantum chaos at the level of temporal states has also attracted considerable attention \cite{Dowling2023scrabling,Dowling2024chaosEnt}; it has been argued that quantum chaos is related to the entanglement structure of temporal states.
The doubled temporal state is associated with a two-branch contour.  More
generally, The OTOC require contours with
additional folds.  This inspired us to introduce $2m$-fold correlators and 
$2m$-branch temporal states.

Let $m\geq1$ and consider the $2m$-branch contour (see Figure~\ref{fig:2mcontour})
correlator
\begin{equation}\label{eq:mfoldcorr}
  C^{[2m]}\!\left(\{A_k^{(s)}\}\right)
  =\Tr\!\left[
    \varrho_{t_0}\,
    \mathfrak A_1\mathfrak A_2\cdots\mathfrak A_{2m}
  \right],
  \quad
  \mathfrak A_s=
  \begin{cases}
    A_0^{(s)}(t_0)A_1^{(s)}(t_1)\cdots A_n^{(s)}(t_n),
      &s\ \text{odd},\\[2pt]
    A_n^{(s)}(t_n)\cdots A_1^{(s)}(t_1)A_0^{(s)}(t_0),
      &s\ \text{even}.
  \end{cases}
\end{equation}
where for unitary dynamics,
$\mathcal U_{t_0,t}(\bullet)=U_{t_0,t}\bullet U_{t_0,t}^{\dagger}$ and
$A(t)=U_{t_0,t}^{\dagger}AU_{t_0,t}$.  Thus the contour consists of $m$
forward--backward pairs between $t_0$ and $t_n$.  The case $m=1$ is the
ordinary Schwinger--Keldysh contour and reproduces the doubled Kirkwood-Dirac
construction.  The correlators in Eq.~\eqref{eq:mfoldcorr} determine a
$2m$-branch temporal Bloch tensor $T^{\bm \mu, \bm \nu, \cdots}$ and hence a temporal state $\Ups^{[2m]}$.

\begin{figure}
    \centering
    \includegraphics[width=0.8\linewidth]{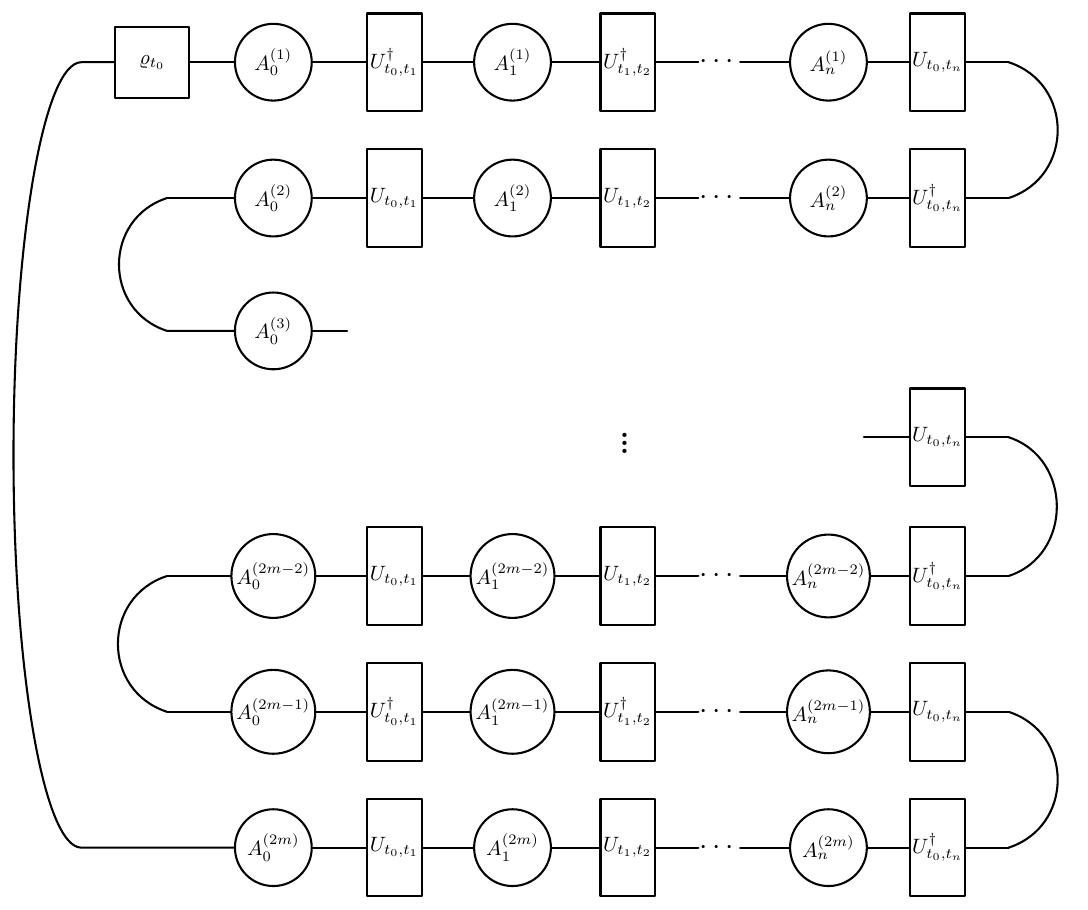}
    \caption{The $2m$-fold zigzag contour for correlation functions.}
    \label{fig:2mcontour}
\end{figure}

The $2m$-branch correlators can be transformed into two-branch correlators with out-of-time-ordered evolutions,
\begin{equation}
    C^{[2m]}\!\left(\{A_k^{(s)}\}\right)
  =\Tr\!\left[
 \mathfrak A_{m+1} \cdots  \mathfrak A_{2m}  \varrho_{t_0}\,
    \mathfrak A_1\cdots \fA_m
  \right],
\end{equation}
namely, we first evolve forward for one round, then backward for one round, then forward again, then backward again, and so on (this can also be generalized to quantum channel evolutions). Using this, we can obtain a recursive expression based on that of doubled Kirkwood--Dirac temporal but with $n\times m$ time slots.
From the MPO representation of this temporal state (Eq.~\eqref{eq:MPOdoub}), folding the tensors for different values of $s$ yields a temporal state with effectively $n$ evolutions with each a package of evolution between $t_{k}$ to $t_{k+1}$ as different fold branch (this is also clear from Figure~\ref{fig:2mcontour}).
The corresponding temporal quasiprobability distribution is a out-of-time-ordered Kirkwood-Dirac distribution. All information about the processes, whether integrable or chaotic, is encoded in the informationally complete out-of-time-ordered distributions or equivalently in the corresponding temporal state.

For two times $t_0$ and $t_1=t$, let $V$ and $W$ be Hermitian operators.  Set
\begin{equation}
  A_0^{(1)}=A_0^{(3)}=V,
  \qquad
  A_1^{(1)}=A_1^{(3)}=W,
\end{equation}
and set all remaining insertions in Eq.~\eqref{eq:mfoldcorr} equal to
$\mathds{1}$.  Then
\begin{equation}
  C^{[4]}
  =\Tr\!\left[\varrho_{t_0}V W(t)V W(t)\right]
  =F(t),
\end{equation}
the OTOC in the ordering used here.  Thus $F(t)$ is a linear functional of
$\Ups^{[4]}$. This can naturally be generalized to the case with more fold branches.

\subsection{Quantumness of spatiotemporal quasiprobability distributions and temporality of spatiotemporal states}
\label{sec:quantumness}

The quantumness of a quasiprobability distribution can be characterized by its deviation from a genuine probability distribution. In the classical world, there are no uncertainty constraints on observables, so the phase-space distribution is a well-defined probability distribution. In the quantum setting, however, if one attempts to define a joint distribution for incompatible observables such as $x$ and $p$, negativity or complex values inevitably arise, yielding a quasiprobability distribution.
For the temporal Kirkwood--Dirac distribution
$Q_{\rm KD}(\alpha_n,\cdots,\alpha_0)$, we define its \emph{quantumness} (or \emph{temporal negativity}) as \cite{Jia2025TemporalKirkwoodDirac}
\begin{equation}\label{eq:quantumness}
    \mathfrak{N}\!\left[Q_{\rm KD}\right]
    =\frac{1}{2}\left(
    \sum_{\alpha_0,\cdots,\alpha_n}
    \left|Q_{\rm KD}(\alpha_n,\cdots,\alpha_0)\right|-1\right).
\end{equation}
When $Q_{\rm KD}$ takes values in $[0,1]$, so that it becomes a genuine real-valued probability distribution, the quantumness vanishes. For the temporal Margenau--Hill distribution, this quantity reduces to (twice) the negativity of the distribution. More generally, the quantumness of the temporal Kirkwood--Dirac distribution contains two contributions: the negativity of its real part, and the total absolute weight of its imaginary part.

From Remark~\ref{remk:equivalence}, it is clear that, for a quantum system, the quantumness of a temporal quasiprobability distribution can be characterized via the temporal state. We define the temporality of $\Upsilon$ as\footnote{Note that the equal-time reduced states are spatial density operators, which are positive semidefinite; hence, deviation from positive semidefiniteness can serve as a measure of temporality. See \cite{Song2024causal}.}
\begin{equation}\label{eq:temporality}
 \mathfrak{T}(\Ups):=\frac{1}{2}( \|\Ups\|_1-1)
 =\frac{1}{2}(\Tr\sqrt{\Ups^\dagger\Ups}-1),
 \qquad
 \mathfrak L_T(\Ups):=\log\|\Ups\|_1
 =\log\!\bigl(1+2\mathfrak{T}(\Ups)\bigr).
\end{equation}
We call $\mathfrak{T}$ the \emph{temporality}, and
$\mathfrak L_T$ the \emph{logarithmic temporality}.  The name refers specifically
to failure of positivity on the tensor product of time slots.  It should not be
confused with temporal entanglement, which concerns operator Schmidt
correlations across a temporal cut and can be nonzero even for a positive
history state.\footnote{Every one-time marginal of a valid temporal state is an
ordinary positive density operator.  Thus nonpositivity can occur only in the
multi-time assembly and supplies a natural witness of temporality; compare the
causal classification of Ref.~\cite{Song2024causal}.}

\begin{theorem}[Quantumness is bounded by the temporality]
\label{thm:quantumness}
Let $\Ups$ be the temporal state  generating $Q_{\rm KD}$ through the Born rule $Q_{\rm KD}(a_n,\dots,a_0)=\Tr[(\Pi_{a_n}\otimes\cdots\otimes\Pi_{a_0})\Ups]$ for rank-one projective insertions. Then for every choice of local bases
\begin{equation}
    \mathfrak{N}[Q_{\rm KD}]\le \mathfrak{T}(\Upsilon).
\end{equation} 
Moreover, for any observables with $\|O_{A_k}\|_\infty\le 1$,
\begin{equation}\label{eq:holderbound}
  \left|\left\langle \{O_{A_n}(t_n),\dots,O_{A_0}(t_0)\}\right\rangle_W\right|
  \;\le\; (2\mathfrak{T}(\Upsilon)+1)\prod_k\|O_{A_k}\|_{\infty}.
\end{equation}
\end{theorem}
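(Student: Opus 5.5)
The plan is to reduce both claims to the trace-norm/operator-norm duality $|\Tr[X\Ups]|\le\|X\|_\infty\|\Ups\|_1$, combined with the unit-trace property $\Tr\Ups=1$ established earlier (through the recursion and its marginals).

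For the first inequality, fix local orthonormal bases and write $P_{\vec a}=\Pi_{a_n}\otimes\cdots\otimes\Pi_{a_0}$. These are rank-one orthogonal projectors on $\cH_{t_n}\otimes\cdots\otimes\cH_{t_0}$ that sum to the identity. Let $|\vec a\rangle$ be the corresponding product basis vectors, so that $Q_{\rm KD}(\vec a)=\langle\vec a|\Ups|\vec a\rangle$, the diagonal of $\Ups$ in a product basis. Then
\begin{equation*}
\sum_{\vec a}|Q_{\rm KD}(\vec a)|=\sum_{\vec a}\Tr[P_{\vec a}\Ups u_{\vec a}],
\end{equation*}
where $u_{\vec a}$ is a phase chosen so that each term is real and nonnegative. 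This sum equals $\Tr[W\Ups]$ with $W=\sum_{\vec a}u_{\vec a}P_{\vec a}$, which is unitary (diagonal with unimodular entries). Hence $\|W\|_\infty=1$, and Hölder gives $\sum_{\vec a}|Q_{\rm KD}(\vec a)|\le\|\Ups\|_1$. Subtracting $1$ and halving yields $\mathfrak N[Q_{\rm KD}]\le\mathfrak T(\Ups)$. Equivalently, one can use the known fact that the $\ell^1$-norm of the diagonal in any orthonormal basis is dominated by the trace norm; the pinching map is trace-norm contractive. No positivity or Hermiticity of $\Ups$ is needed, which matters because KD states are nonnormal.

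For the correlation bound, use the temporal Born rule and the Bloch construction. By Eq.~\eqref{eq:KDaveWightman}, the Wightman correlator equals the phase-space average $\sum_{\vec a}a_n\cdots a_0\,Q_{\rm KD}(\vec a)$. This equals $\Tr[(O_{A_n}\otimes\cdots\otimes O_{A_0})\Ups]$, obtained by linearity of the Born rule after expanding each $O_{A_k}=\sum_a a\,\Pi_a$. Hölder then gives
\begin{equation*}
|\langle\cdots\rangle_W|\le\big\|\textstyle\bigotimes_k O_{A_k}\big\|_\infty\|\Ups\|_1=\prod_k\|O_{A_k}\|_\infty\,(2\mathfrak T(\Ups)+1),
\end{equation*}
using multiplicativity of the operator norm over tensor products and the definition $\|\Ups\|_1=1+2\mathfrak T$.

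The main point requiring care is the identification of the Wightman correlator with the tensor-product pairing against $\Ups$ for the specific branch, left versus right, that generates $Q_{\rm KD}$. This is guaranteed by the defining property of the temporal state, Eq.~\eqref{eq:tempborn}, together with the recursion result stated after Eq.~\eqref{eq:recright}. I would invoke those rather than rederive them. Everything else is a direct application of Hölder's inequality.
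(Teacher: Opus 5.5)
Your argument is correct and follows the paper's proof. The paper also builds the diagonal unitary $X=\sum_a e^{-i\theta_a}\Pi_{a_n}\otimes\cdots\otimes\Pi_{a_0}$ from the phases of $Q_{\rm KD}$, applies H\"older's inequality $\sum_a|Q_{\rm KD}(a)|=\Tr[X\Ups]\le\|X\|_\infty\|\Ups\|_1$, and then uses H\"older again with $\bigotimes_k O_{A_k}$ to get the correlator bound.
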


\begin{proof}
Choose phases $e^{i\theta_{a}}$ with $e^{-i\theta_a}Q_{\rm KD}(a)=|Q_{\rm KD}(a)|$ for each multi-index $a=(a_n,\dots,a_0)$, and set $X=\sum_a e^{-i\theta_a}\Pi_{a_n}\otimes\cdots\otimes\Pi_{a_0}$. Since the $\Pi$'s are rank-one projectors onto product bases, $X$ is a unitary (diagonal in that product basis), hence $\|X\|_\infty=1$. Then
$\sum_a|Q_{\rm KD}(a)|=\Tr[X\Ups]\le\|X\|_\infty\|\Ups\|_1=\|\Ups\|_1$ by H\"older's inequality for Schatten norms. The normalization $\sum_aQ_{\rm KD}(a)=\Tr\Ups=1$ gives the second statement. Equation~\eqref{eq:holderbound} is H\"older's inequality applied to $X=\bigotimes_kO_{A_k}$, using $\|\bigotimes_kO_{A_k}\|_\infty=\prod_k\|O_{A_k}\|_\infty$.
\end{proof}

Theorem~\ref{thm:quantumness} corrects and sharpens the naive bound $\langle\cdots\rangle_W/\prod_k\|O_{A_k}\|\le\|\Ups\|$: the correct norm on the right-hand side is the \emph{trace} (Schatten-$1$) norm of the temporal state, which for non-Hermitian $\Ups$ is the sum of singular values. Since $\|\Ups\|_1=1$ if and only if $\Ups$ is a genuine density operator on the temporal Hilbert space, the excess $\|\Ups\|_1-1$ is an ordering-independent measure of temporal nonclassicality --- the exact temporal analogue of Wigner negativity, and the quantity that upper-bounds the quantum advantage obtainable from the process in metrological and thermodynamic tasks.

\begin{proposition}
\label{prop:temporality-properties}
The temporality satisfies the following properties:
\begin{enumerate}
\item \emph{Non-negativity and faithfulness to positivity.}
\begin{equation}\label{eq:temporality-faithful}
 \mathfrak{T}(\Ups)\ge0,
 \qquad
 \mathfrak{T}(\Ups)=0\quad\Longleftrightarrow\quad \Ups\ge0.
\end{equation}
Thus temporality vanishes exactly when the normalized temporal operator is a
genuine density operator.  Positivity implies nonnegative Born probabilities
for every history test.  The converse need not follow from one restricted
family of local product tests, because such tests need not detect every
nonpositive direction of a multipartite operator.

\item \emph{Convexity and stability.}
\begin{align}
 \mathfrak{T}\!\left(\lambda\Ups+(1-\lambda)\Omega\right)
 &\le \lambda\mathfrak{T}(\Ups)+(1-\lambda)\mathfrak{T}(\Omega),
 \label{eq:temporality-convexity}
\end{align}
In particular, the temporality is convex in the initial state
and in each channel separately.  With all other data fixed,
\begin{align}
 \mathfrak{T}\!\left(\Ups[\lambda\varrho_{t_0}+(1-\lambda)\omega_{t_0}]\right)
 &\le \lambda\mathfrak{T}(\Ups[\varrho_{t_0}])
 +(1-\lambda)\mathfrak{T}(\Ups[\omega_{t_0}]),\label{eq:temp-convex-state}\\
 \mathfrak{T}\!\left(\Ups[\ldots,\lambda\cE_j+(1-\lambda)\mathcal K_j,\ldots]\right)
 &\le \lambda\mathfrak{T}(\Ups[\ldots,\cE_j,\ldots])
 +(1-\lambda)\mathfrak{T}(\Ups[\ldots,\mathcal K_j,\ldots]).
 \label{eq:temp-convex-channel}
\end{align}
The qualification ``separately'' is essential: dependence on several time-step
channels is multilinear.  A probabilistic mixture of complete processes is,
of course, convex and also satisfies Eq.~\eqref{eq:temporality-convexity}.

\item \emph{Tensor products and logarithmic additivity.}
For independent processes whose initial states, dynamics, and temporal
insertions factorize $\Ups_1\otimes\Ups_2$, and
\begin{align}
 \mathfrak{T}(\Ups_1\otimes\Ups_2)
 &=\mathfrak{T}(\Ups_1)+\mathfrak{T}(\Ups_2)
 +2\mathfrak{T}(\Ups_1)\mathfrak{T}(\Ups_2),
 \label{eq:temporality-product}\\
 \mathfrak L_T(\Ups_1\otimes\Ups_2)
 &=\mathfrak L_T(\Ups_1)+\mathfrak L_T(\Ups_2).
 \label{eq:log-temporality-additive}
\end{align}

\item \emph{Branch and basis symmetries.}
Temporality is invariant under unitary conjugation, Hermitian conjugate, and full transpose:
\begin{equation}\label{eq:temporality-symmetries}
 \mathfrak{T}(U\Ups U^\dagger)
 =\mathfrak{T}(\Ups^\dagger)
 =\mathfrak{T}(\Ups^{\mathsf T})
 =\mathfrak{T}(\Ups).
\end{equation}
Consequently the left and right (also for mixed-branch and its Hermitian conjugate) Kirkwood--Dirac states have equal temporality.  Moreover,
for the Margenau--Hill state
$\Ups^{\rm MH}=(\Upsr+\Upsl)/2$,
\begin{equation}\label{eq:mh-temporality}
 \mathfrak{T}(\Ups^{\rm MH})\le\mathfrak{T}(\Upsr)
 =\mathfrak{T}(\Upsl),
\end{equation}
so Hermitianization cannot increase temporality.

\item \emph{Temporality and distance to the positive semidefinite operator space.}
If $\Ups=\Ups^\dagger$ and
$\Ups=\Ups_+-\Ups_-$ with $\Ups_+=(|\Ups|+\Ups)/2$ and $\Ups_-=(|\Ups|-\Ups)/2$, then
\begin{equation}\label{eq:temporality-jordan}
 \mathfrak{T}(\Ups)
 =\Tr\Ups_-
 =\frac{1}{2}
 \inf_{\rho\ge0,\,\Tr\rho=1}\|\Ups-\rho\|_1.
\end{equation}
For a general non-Hermitian temporal state,
\begin{equation}\label{eq:temporality-distance}
 \mathfrak{T}(\Ups)
 \le \frac{1}{2} \inf_{\rho\ge0,\,\Tr\rho=1}\|\Ups-\rho\|_1,
\end{equation}
but equality need not hold.  Thus for Hermitian temporal states temporality is
exactly the trace-distance from the positive semidefinite operators; for non-Hermitian states it
is a computable lower bound on that violation.

\item \emph{Monotonicity under loss of temporal resolution.}
For every completely positive trace-preserving map $\Phi$ on the temporal
operator space,
\begin{equation}\label{eq:temporality-monotonicity}
 \mathfrak{T}\!\left(\Phi(\Ups)\right)\le\mathfrak{T}(\Ups).
\end{equation}
This includes local noise on any collection of time slots, dephasing, classical
relabeling, etc.  When applied to partial trace,  we see every marginal temporal state obeys
\begin{equation}\label{eq:temporality-marginal}
 \mathfrak{T}\!\left(\Tr_{\mathsf A^c}\Ups\right)\le\mathfrak{T}(\Ups).
\end{equation}

\item \emph{Operational witnesses.}
For every local temporal frame $\mathcal{B}=\{F_{\alpha_k}\}_{k=0}^n$,
\begin{equation}\label{eq:temporality-envelope}
 \mathfrak N[Q_{\mathcal B}]\le\mathfrak{T}(\Ups),
 \qquad
 \sup_{\mathcal B\in{\rm product\ bases}}\mathfrak N[Q_{\mathcal B}]
 \le\mathfrak{T}(\Ups).
\end{equation}
If arbitrary global temporal frame  on the full temporal Hilbert space are
allowed, the bound is tight:
\begin{equation}\label{eq:temporality-global-witness}
 \sup_{\mathcal B\in{\rm global\ bases}}\mathfrak N[Q_{\mathcal B}]
 =\mathfrak{T}(\Ups).
\end{equation}
The globally optimal basis can be entangled across time slots and therefore
need not correspond to an implementable sequence of local measurements.
\end{enumerate}
\end{proposition}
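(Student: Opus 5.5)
The plan is to reduce almost every item to standard facts about the Schatten-$1$ norm, using only two structural inputs: $\Tr\Ups=1$ (Theorem~\ref{theorem:UnificationSingle} and the recursions) and multilinearity of $\Ups$ in $(\varrho_{t_0},J[\cE_1],\dots,J[\cE_n])$, which follows from the link-product recursion~\eqref{eq:recright}.

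\textbf{Item 1.} Non-negativity is immediate from $1=|\Tr\Ups|\le\|\Ups\|_1$. For faithfulness, the direction $\Ups\ge0\Rightarrow\|\Ups\|_1=\Tr\Ups=1$ is trivial. For the converse I would write the polar decomposition $\Ups=V|\Ups|$, so that $\Tr\Ups=\Tr\bigl[(|\Ups|^{1/2}V^{\dagger})^{\dagger}|\Ups|^{1/2}\bigr]$. The Cauchy--Schwarz inequality for the Hilbert--Schmidt product then gives $|\Tr\Ups|\le\|\Ups\|_1$. Equality forces $V^{\dagger}$ to act as a phase $c$ on the range of $|\Ups|^{1/2}$, hence $\Ups=c|\Ups|$, and $\Tr\Ups=1$ fixes $c=1$.

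\textbf{Items 2--4.} Convexity~\eqref{eq:temporality-convexity} is the triangle inequality for $\|\cdot\|_1$ together with affinity of the constant $-1$ under normalized weights. The state and channel versions~\eqref{eq:temp-convex-state}--\eqref{eq:temp-convex-channel} follow once I note that $\Ups$ depends linearly on each argument separately. Item 3 uses $\|\Ups_1\otimes\Ups_2\|_1=\|\Ups_1\|_1\|\Ups_2\|_1$, since the singular values of a tensor product are the products of singular values. Substituting $\|\Ups_i\|_1=1+2\mathfrak T(\Ups_i)$ gives~\eqref{eq:temporality-product}, and taking logarithms gives~\eqref{eq:log-temporality-additive}. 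Item 4 holds because singular values are invariant under unitary conjugation, adjoint and transpose, and the trace is preserved by each of these operations. The Margenau--Hill bound~\eqref{eq:mh-temporality} is then convexity with $\lambda=1/2$ combined with $\mathfrak T(\Upsl)=\mathfrak T(\Upsr)$, which uses~\eqref{eq:adjointpair}.

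\textbf{Item 5.} For Hermitian $\Ups$, I would combine $\|\Ups\|_1=\Tr\Ups_++\Tr\Ups_-$ with $\Tr\Ups_+-\Tr\Ups_-=1$ to obtain $\mathfrak T=\Tr\Ups_-$. For the distance formula I would use that a traceless Hermitian $Y$ satisfies $\|Y\|_1=2\max_{0\le P\le\I}\Tr[PY]$. Choosing $Y=\rho-\Ups$ and $P$ the projector onto the negative eigenspace of $\Ups$ gives $\tfrac12\|\Ups-\rho\|_1\ge\Tr[P\rho]+\Tr\Ups_-\ge\Tr\Ups_-$. The infimum is attained at $\rho=\Ups_+/\Tr\Ups_+$, where a direct computation gives $\|\Ups-\rho\|_1=(\Tr\Ups_+-1)+\Tr\Ups_-=2\Tr\Ups_-$. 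For non-Hermitian $\Ups$, the triangle inequality $\|\Ups\|_1\le\|\rho\|_1+\|\Ups-\rho\|_1=1+\|\Ups-\rho\|_1$ yields~\eqref{eq:temporality-distance}.

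\textbf{Item 6.} I need trace-norm contractivity of a CPTP map $\Phi$ on arbitrary, not necessarily Hermitian, operators. I would prove it by duality: $\|\Phi(\Ups)\|_1=\sup_{\|X\|_\infty\le1}|\Tr[X\Phi(\Ups)]|=\sup_{\|X\|_\infty\le1}|\Tr[\Phi^{\dagger}(X)\Ups]|$. The adjoint $\Phi^{\dagger}$ is unital and completely positive, so the Russo--Dye theorem gives $\|\Phi^{\dagger}(X)\|_\infty\le1$. Since $\Phi$ also preserves the trace, $\mathfrak T$ decreases. The partial-trace statement~\eqref{eq:temporality-marginal} is the special case $\Phi=\Tr_{\mathsf A^c}$.

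\textbf{Item 7.} The bound~\eqref{eq:temporality-envelope} generalizes the proof of Theorem~\ref{thm:quantumness} from projectors to POVM frames. I would take $X=\sum_\alpha e^{-i\theta_\alpha}F_\alpha$ and show $\|X\|_\infty\le1$ by the estimate
\[
|\langle u|X|v\rangle|\le\sum_\alpha\langle u|F_\alpha|u\rangle^{1/2}\langle v|F_\alpha|v\rangle^{1/2}\le1,
\]
where the first step is Cauchy--Schwarz for each positive $F_\alpha$ and the second is Cauchy--Schwarz over $\alpha$ with $\sum_\alpha F_\alpha=\I$. H\"older's inequality then finishes the argument exactly as in Theorem~\ref{thm:quantumness}.

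\textbf{Main obstacle: tightness~\eqref{eq:temporality-global-witness}.} For normal, in particular Hermitian, $\Ups$ the global eigenbasis gives $\sum_a|Q(a)|=\sum_a|\lambda_a|=\|\Ups\|_1$. For nonnormal $\Ups$, however, a single orthonormal basis only yields $\sum_a|\langle a|\Ups|a\rangle|$. That quantity is generally strictly smaller than $\|\Ups\|_1$, because the diagonal of $\Ups$ lies in its numerical range rather than reflecting its singular values.

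I would first try to close the gap with a Kirkwood--Dirac-type global frame made of two bases. Take the SVD $\Ups=\sum_a s_a|u_a\rangle\langle v_a|$ and use insertions built from $|v_a\rangle\langle u_a|$, so that the Born rule returns $\Tr[|v_a\rangle\langle u_a|\Ups]=s_a$. The realized quasiprobabilities $s_a$ then sum to $\|\Ups\|_1$, which would give equality. If such a pair-of-bases frame is not admitted as a ``global basis'' in the statement, I expect equality to hold only for normal $\Ups$. In that case I would record the non-Hermitian case as an inequality, with the SVD two-basis frame as the witness that saturates it.
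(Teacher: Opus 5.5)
Items 1--6 and the upper bound in item 7 are correct and follow essentially the paper's route. For item 5 the paper also uses the reverse triangle inequality together with the Hahn--Jordan minimizer $\Ups_+/\Tr\Ups_+$. For item 6 it also uses trace-norm duality plus operator-norm contractivity of the unital CP adjoint; you cite Russo--Dye where the paper uses the Kadison--Choi inequality. Your Cauchy--Schwarz proof that $\|\sum_\alpha e^{-i\theta_\alpha}F_\alpha\|_\infty\le 1$ for POVM frames supplies exactly what the paper's one-line justification leaves implicit.

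The genuine gap is the tightness claim~\eqref{eq:temporality-global-witness}. You conclude that equality probably holds only for normal $\Ups$, and you fall back on an SVD ``frame'' $\{|v_a\rangle\langle u_a|\}$. That fallback is not admissible: these operators are not positive, so they are neither rank-one projectors nor POVM elements, and the numbers they produce lie outside the domain on which $\mathfrak N$ is defined. More importantly, the conclusion is wrong. The obstruction you describe, that diagonal entries lie in the numerical range, is real for a fixed basis but disappears once you optimize over bases.

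The missing idea is that every unitary is normal. Choosing phases entrywise gives $\sum_a|\langle a|\Ups|a\rangle|=\max_\theta\operatorname{Re}\sum_a e^{-i\theta_a}\langle a|\Ups|a\rangle$. The operators $X=\sum_a e^{-i\theta_a}|a\rangle\langle a|$, as $\{|a\rangle\}$ ranges over all orthonormal bases, exhaust all unitaries. Hence the supremum over global bases equals $\sup_{X\ \mathrm{unitary}}\operatorname{Re}\Tr(X\Ups)=\|\Ups\|_1$.

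Concretely, write the polar decomposition $\Ups=V|\Ups|$ with $V$ unitary (in finite dimension the partial isometry can be extended). Diagonalize $V=\sum_a e^{i\theta_a}|a\rangle\langle a|$. Then $\sum_a|\langle a|\Ups|a\rangle|\ge\bigl|\sum_a e^{-i\theta_a}\langle a|\Ups|a\rangle\bigr|=|\Tr(V^\dagger\Ups)|=\Tr|\Ups|=\|\Ups\|_1$. Combined with your upper bound, this gives equality, attained, for every $\Ups$, normal or not.

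For instance, take the nonnormal $\Upsr_{t_1,t_0}=\mathbb S(\I\otimes\varrho_{t_0})$. Its polar unitary is $\mathbb S$ and $|\Upsr_{t_1,t_0}|=\I\otimes\varrho_{t_0}$. In any eigenbasis of $\mathbb S$ built from symmetric and antisymmetric vectors (entangled across the two time slots), one gets $\sum_a|Q(a)|=\Tr(\I\otimes\varrho_{t_0})=d=\|\Upsr_{t_1,t_0}\|_1$. This is precisely the time-entangled optimal basis the statement alludes to.
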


The proofs are provided in Section~\ref{app:proofs} in Appendix.

%%%%%%%%%%%%%%%===================
\section{Two-time temporal state}
\label{sec:TwoTimeState}
The study of two-time temporal states has a longer history in quantum information theory and exhibits a rich structure; see, e.g., \cite{Leifer2013toward,Parzygnat2023pdo,fitzsimons2015quantum,Song2025twotime,zhao2018geometry}. In quantum field theory, the study of two-time temporal states is naturally related to two-time correlation functions (Green functions), which serve as the fundamental building blocks for other observables. Thus, it is worthwhile to discuss the two-time temporal state in greater detail.
For a two-time process $0\rightarrow t$, we consider the temporal state $\Upsilon_{t,0}$ and define the correlation function on $\mathcal{H}_t\otimes\mathcal{H}_0$ as
\begin{equation}
    \langle A(t)B(0)\rangle_{\Upsilon_{t,0}}:=\Tr[ (A\otimes B)\Upsilon_{t,0}].
\end{equation}
The corresponding Green function is then $G_{AB}^{\Upsilon_{t,0}}(t)=-i\langle A(t)B(0)\rangle_{\Upsilon_{t,0}}$. The temporal kernel is the Fourier transform of this Green function. In this section, we give a more general expression for two-time state and discuss its applications.

\subsection{A more general unification of two-time temporal state}

Recall that the two-time left and right Kirkwood--Dirac temporal states take the form
\begin{equation}
    \Upsl_{t_1,t_0}^{\rm KD}=\sum_{i,j}\mathcal{E}_{t_0,t_1}(|i\rangle \langle j|)\otimes |j\rangle \langle i|\varrho_{t_0},\quad \Upsr_{t_1,t_0}^{\rm KD}=\sum_{i,j}\mathcal{E}_{t_0,t_1}(|i\rangle \langle j|)\otimes \varrho_{t_0} |j\rangle \langle i|.
\end{equation}
They are also referred to as the left and right Bloom states in \cite{Parzygnat2023pdo,fullwood2023quantum}; see also \cite{lie2025probingquantumstatesspacetime,Jia2025TemporalKirkwoodDirac}.
The two-time left/right Margenau–Hill state coincides with pseudo-density operators \cite{fitzsimons2015quantum,Parzygnat2023pdo,fullwood2023quantum,Jia2025TemporalKirkwoodDirac}.

The left and right Kirkwood--Dirac temporal states can be generalized to a one-parameter family involving both left and right parts \cite{Parzygnat2023pdo}:
\begin{equation}\label{eq:TemporalTwos}
    \Ups^{(s)}_{t_1,t_0} =\sum_{i,j}\mathcal{E}_{t_0,t_1}(|i\rangle \langle j|)\otimes \varrho_{t_0}^{s}|j\rangle \langle i|\varrho_{t_0}^{1-s}=(\I\otimes \varrho^{s})J[\cE_{t_0,t_1}](\I\otimes \varrho^{1-s}) ,  
\end{equation}
where $s\in [0,1]$. It is clear that $\Tr_{t_0}\Ups^{(s)}_{t_1,t_0}=\mathcal{E}_{t_0,t_1}(\varrho_{t_0})=\varrho_{t_1}$ and $\Tr_{t_1}\Ups^{(s)}_{t_1,t_0}=\varrho_{t_0}$.
Note that we can equivalently write
\begin{equation}\label{eq:TemStas}
    \Ups^{(s)}_{t_1,t_0} =\sum_{i,j}\mathcal{E}_{t_0,t_1}(\varrho_{t_0}^{1-s}|i\rangle \langle j|\varrho_{t_0}^{s})\otimes |j\rangle \langle i|.
\end{equation}
This family proves useful in the study of quantum conditional expectations and quantum Bayesian inference \cite{Parzygnat2023pdo}. In the special case $s=1/2$, it reduces to the Leifer--Spekkens causal state \cite{Leifer2013toward}.

Given a (quasi)probability\footnote{We emphasize that using (even complex-valued) quasiprobability distribution, we can also obtain temporal state that gives the correct marginals.} distribution $\mathbb{P}(s)$, we may form the mixture
\begin{equation}\label{eq:TwoTimePs}
  \Ups^{\mathbb{P}}_{t_1,t_0}=  \int_{[0,1]}\Ups^{(s)}_{t_1,t_0} d\mathbb{P}(s) ,
\end{equation}
which provides a more general unified definition of a two-time temporal state than the one in Theorem~\ref{theorem:UnificationSingle}, where only a mixture of the $s=0$ and $s=1$ states was considered.

\begin{example}[closed-system two-time temporal state]\label{exp:pure}
    For unitary evolution $U$ which maps the initial state $|\psi\rangle$ to $U|\psi\rangle=|\varphi\rangle$, we have
    \begin{equation}
        \Upsilon^{(0)}=\sum_j |\varphi\rangle \langle j|U^{\dagger} \otimes |j\rangle \langle \psi|,\quad  \Upsilon^{(1)} =\sum_i U|i\rangle \langle \varphi | \otimes |\psi\rangle \langle i|.
    \end{equation}
    These are simply coupled transition matrices.
    For $r\neq 0,1$, we have $\Upsilon^{(s)}=|\varphi\rangle \langle \varphi|\otimes |\psi\rangle \langle \psi|$, which exhibits no temporal correlation. Thus, a general two-time state for a closed system is a probabilistic mixture of the form
    \begin{equation}
        \Upsilon^{\mathbb{P}}=\mathbb{P}(0) \Upsilon^{(0)}
        +\mathbb{P}(1)\Upsilon^{(1)}
        +\mathbb{P}(1/2)\Upsilon^{( 1/2)}.
    \end{equation}
\end{example}

\begin{remark}[Multi-time $s$-parameterized temporal state]
For a fixed mixed branch
$\ttB=(\ldots,t_1^R,t_0^L)$,
we define the temporal link product of the corresponding Jamio{\l}kowski operators by
\begin{equation}
    \mathcal{J}_{\ttB}
    :=
    J[\cE_{t_{k-1},t_k}]
    \star_{\ttB}\cdots\star_{\ttB}
    J[\cE_{t_0,t_1}].
\end{equation}
The associated $s$-parameterized temporal state is then defined as
\begin{equation}
    \Ups^{(s)}_{\ttB}
    :=
    \bigl(\I\otimes\cdots\otimes\I\otimes\varrho^s\bigr)
    \mathcal{J}_{\ttB}
    \bigl(\I\otimes\cdots\otimes\I\otimes\varrho^{1-s}\bigr).
\end{equation}
More generally, given a probability measure $\mathbb{P}$ on the parameter $r$, we may consider the probabilistic mixture
\begin{equation}
    \Ups^{\mathbb{P}}_{\ttB}
    :=
    \int \Ups^{(s)}_{\ttB}\,d\mathbb{P}(s).
\end{equation}
Here the mixed branch $\ttB$ is kept fixed throughout the construction.
We can then introduce a probabilistic distribution $p(\ttB)$ for braches, then obtain 
\begin{equation}
    \Ups=\sum_{\ttB} p(\ttB)  \Ups^{\mathbb{P}}_{\ttB}
\end{equation}
which gives a very general unification of all spatiotemporal state.
\end{remark}

\subsection{Petz time reversal of temporal states}

For a quantum channel $\mathcal{E}$ and an input state $\varrho$, let $\omega=\mathcal{E}(\varrho)$. Define the two-time state $\Upsilon^{(s)}\big[\varrho,\cE\big]$ for the pair $(\cE,\varrho)$. One may naturally regard time reversal as an interchange of the two time steps, which is implemented by the swap operator
\begin{equation}
  \mathbb{S}\,\Ups^{(s)}\big[\varrho,\cE\big]\,\mathbb{S}.  
\end{equation}
The corresponding initial state for this reversed process should be $\omega$; the question is then to find a channel that realizes this as a temporal state. For $s=1/2$, it turns out that the answer is given by the Petz recovery map \cite{Petz1986petzmap}, which for a fixed state $\varrho$ and quantum channel $\mathcal{E}$ is defined as 
\begin{equation}
    \mathcal{R}_{\varrho,\cE}(\bullet)=\varrho^{1/2}\cE^{\dagger}({\mathcal{E}(\varrho)}^{-1/2}\bullet\,{\mathcal{E}(\varrho)}^{-1/2})\varrho^{1/2}.
\end{equation}
We observe that $\mathcal{R}_{\varrho,\cE}(\mathcal{E}(\varrho))=\varrho$. We can then construct a temporal state for which the initial state is $\omega$ and
\begin{equation}
\Ups^{(1/2)}\big[\omega,\mathcal{R}_{\varrho,\cE}\big]=\mathbb{S}\,\Ups^{(1/2)}\big[\varrho,\cE\big]\,\mathbb{S}.
\end{equation}
This construction can be generalized to more general temporal states, yielding the following.

\begin{theorem}[Petz time reversal]\label{thm:petz}
For $s \in [0,1]$, let \( \Ups^{(s)} \) be the temporal state such that
\begin{equation}
  \Tr\!\left[(A\otimes B)\,\Ups^{(s)}\right]
  =
  \Tr\!\left[A\,\cE\bigl(\varrho^{1-s}B\varrho^{s}\bigr)\right].
\end{equation}
Let \( \omega = \cE(\varrho) \) and define the \(s\)-weighted Petz recovery map 
\begin{equation}
  \mathcal{R}^{(s)}_{\varrho,\cE}(\bullet)
  =
  \varrho^{s}\,
  \cE^{\dagger}\!\left(
    \omega^{-(1-s)}\,\bullet\,\omega^{-s}
  \right)
  \varrho^{1-s},
\end{equation}
with inverses taken as Moore–Penrose pseudo-inverses on the support of \(\omega\).
Then, for every \( s \in [0,1] \),
\begin{equation}
    \Ups^{(s)}\bigl[\omega,\mathcal{R}^{(s)}_{\varrho,\cE}\bigr]
    =
    \mathbb{S}\,
    \Ups^{(s)}\bigl[\varrho,\cE\bigr]\,
    \mathbb{S}
\end{equation}
i.e., every member of the one-parameter temporal state family is covariant under its corresponding \(s\)-weighted Petz time reversal, up to exchanging the two temporal slots.
Also note that $\mathcal{R}^{(s)}_{\varrho,\cE}$ is a CPTP map only when $s=1/2$, meaning $\Ups^{(1/2)}$ admits a physical Petz reversal.
\end{theorem}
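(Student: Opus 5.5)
The plan is to prove the operator identity weakly: two operators on $\cH\otimes\cH$ agree iff $\Tr[(A\otimes B)\,\cdot\,]$ agrees on every product $A\otimes B$, because products span $\BB(\cH\otimes\cH)$. The defining pairing in the statement, which is also Eq.~\eqref{eq:TemStas} combined with Eq.~\eqref{eq:choipairing}, gives both sides explicitly. For the right-hand side I use $\mathbb{S}(A\otimes B)\mathbb{S}=B\otimes A$ and cyclicity of the trace:
\begin{equation}
  \Tr\!\left[(A\otimes B)\,\mathbb{S}\Ups^{(s)}[\varrho,\cE]\mathbb{S}\right]
  =\Tr\!\left[B\,\cE(\varrho^{1-s}A\varrho^{s})\right].
\end{equation}
For the left-hand side, the defining formula with initial state $\omega$ and channel $\mathcal{R}^{(s)}_{\varrho,\cE}$ gives
\begin{equation}
  \Tr\!\left[A\,\mathcal{R}^{(s)}_{\varrho,\cE}(\omega^{1-s}B\omega^{s})\right]
  =\Tr\!\left[A\,\varrho^{s}\cE^{\dagger}\!\left(\omega^{-(1-s)}\omega^{1-s}B\omega^{s}\omega^{-s}\right)\varrho^{1-s}\right].
\end{equation}
With pseudo-inverses, $\omega^{-(1-s)}\omega^{1-s}=\omega^{s}\omega^{-s}=P$, the support projector of $\omega$. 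Cyclicity and duality $\Tr[Y\cE(X)]=\Tr[\cE^\dagger(Y)X]$ then reduce the left-hand side to $\Tr[\cE(\varrho^{1-s}A\varrho^{s})\,PBP]$.

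It therefore remains to show $P\,\cE(\varrho^{1-s}A\varrho^{s})\,P=\cE(\varrho^{1-s}A\varrho^{s})$. I expect this support argument to be the only nontrivial step. Let $\Pi_\varrho$ be the support projector of $\varrho$; then $X:=\varrho^{1-s}A\varrho^{s}$ satisfies $X=\Pi_\varrho X\Pi_\varrho$ for $s\in(0,1)$. Write $\cE(\bullet)=\sum_k K_k\bullet K_k^\dagger$. The range of $K_k\Pi_\varrho$ equals the range of $K_k\varrho^{1/2}$, which lies in $\mathrm{supp}\,\omega$ because $\omega=\sum_k K_k\varrho K_k^\dagger\ge K_k\varrho K_k^\dagger$. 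Hence $P K_k\Pi_\varrho=K_k\Pi_\varrho$, and so $P\cE(X)P=\cE(X)$.

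At the endpoints I adopt the convention $\varrho^0=\omega^0=\I$ and $\omega^{-0}=\I$. Then only one side is constrained: for $s=0$, $X=\varrho A$ is left-supported on $\Pi_\varrho$, and only the left projector $P$ appears since $\omega^{-s}=\I$. The case $s=1$ is symmetric. The same Kraus argument handles both. If $\varrho$ is not full rank, I also need to note that the defining pairing is insensitive to this convention, which should be checked but seems harmless.

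For the final remark, I compute the trace: $\Tr\mathcal{R}^{(s)}(Y)=\Tr[\varrho\,\cE^\dagger(\omega^{-(1-s)}Y\omega^{-s})]=\Tr[PY]$, so the map is trace preserving on $\mathrm{supp}\,\omega$ for every $s$. Complete positivity is the real constraint. At $s=1/2$ the map is a composition of the CP maps $Y\mapsto\omega^{-1/2}Y\omega^{-1/2}$, $\cE^\dagger$, and $Z\mapsto\varrho^{1/2}Z\varrho^{1/2}$, hence CP. For $s\neq1/2$ the map is generically not even Hermiticity preserving, since $\mathcal{R}^{(s)}(Y)^\dagger=\mathcal{R}^{(1-s)}(Y^\dagger)$. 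To exhibit this I would use a qubit example with noncommuting $\varrho$ and $\cE=\Id$, which gives $\mathcal{R}^{(s)}(Y)=\varrho^{s-1}Y\varrho^{-s}$ restricted to the support. This map is not of Kraus form $\sum_k L_k\bullet L_k^\dagger$, which shows the "only when" direction.
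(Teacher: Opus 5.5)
Your proof is correct and follows the paper's argument: pair both sides with $A\otimes B$, cancel the $\omega$-powers, and finish with cyclicity and $\Tr[Y\cE(X)]=\Tr[\cE^{\dagger}(Y)X]$. The paper simply replaces $\omega^{-(1-s)}\omega^{1-s}$ and $\omega^{s}\omega^{-s}$ by $\I$, which is valid only for faithful $\omega$. Your Kraus-range argument for $P\,\cE(\varrho^{1-s}A\varrho^{s})\,P=\cE(\varrho^{1-s}A\varrho^{s})$, together with the uniform endpoint convention $\varrho^{0}=\omega^{0}=\omega^{-0}=\I$, is what is needed to cover the pseudo-inverse case the statement advertises.

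There is one slip in your last paragraph. For $\cE=\Id$ and faithful $\varrho$ one gets $\mathcal{R}^{(s)}(Y)=\varrho^{2s-1}Y\varrho^{1-2s}$; you dropped the outer factors $\varrho^{s}$ and $\varrho^{1-s}$. Its Choi matrix is $(\varrho^{2s-1}\otimes\I)|\Omega\rangle\langle\Omega|(\varrho^{1-2s}\otimes\I)$ with $|\Omega\rangle=\sum_i|i\rangle\otimes|i\rangle$. This is rank one and positive only if $\varrho^{2s-1}\propto\I$, so your non-CP conclusion still holds for $s\neq\tfrac12$ and $\varrho\neq\I/d$.
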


\begin{proof}
It suffices to show that both sides yield the same expectation value for arbitrary Hermitian \(A\) and \(B\). 
For the left-hand side, using the trace formula,
\begin{align}
  \Tr\!\bigl[(A\otimes B)\,\Ups^{(s)}[\omega,\mathcal{R}^{(s)}]\bigr]
  &=
  \Tr\!\bigl[A\,\mathcal{R}^{(s)}\!\left(\omega^{1-s}B\omega^{s}\right)\bigr] \\
  &=
  \Tr\!\Bigl[
    A\,\varrho^{s}\,
    \cE^{\dagger}\!\left(
      \omega^{-(1-s)}
      \bigl(\omega^{1-s}B\omega^{s}\bigr)
      \omega^{-s}
    \right)
    \varrho^{1-s}
  \Bigr] \\
  &=
  \Tr\!\bigl[
    A\,\varrho^{s}\,
    \cE^{\dagger}(B)\,
    \varrho^{1-s}
  \bigr] \\
  &=
  \Tr\!\bigl[
    \varrho^{1-s} A \varrho^{s}\,
    \cE^{\dagger}(B)
  \bigr] \\
  &=
  \Tr\!\bigl[
    B\,\cE\!\left(\varrho^{1-s} A \varrho^{s}\right)
  \bigr].
\end{align}
For the right-hand side, the swap operator \(\mathbb{S}\) exchanges the two Hilbert-space slots, so
\begin{align}
  \Tr\!\bigl[(A\otimes B)\,\mathbb{S}\,\Ups^{(s)}[\varrho,\cE]\,\mathbb{S}\bigr]
  &=
  \Tr\!\bigl[(B\otimes A)\,\Ups^{(s)}[\varrho,\cE]\bigr] \\
  &=
  \Tr\!\bigl[
    B\,\cE\!\left(\varrho^{1-s} A \varrho^{s}\right)
  \bigr].
\end{align}
The two expressions are identical. Since this holds for all \(A\) and \(B\), the operator equality follows.
\end{proof}

\subsection{KMS temporal states}
\label{subsec:KMS}

Now let us consider an application in thermal states.
Recall that the KMS condition provides a rigorous mathematical characterization of thermal equilibrium in terms of correlation functions \cite{kubo1957statistical,martin1959KMStheory,Haag1967KMS}. 
In the Heisenberg picture, for a thermal state \(\varrho = e^{-\beta H}/Z\), it reads
\[
\langle A(t)B(0)\rangle_{\beta} = \langle B(0) A(t+i\beta)\rangle_{\beta},
\]
where \(A(t)=e^{iHt}Ae^{-iHt}\).
This can be extended to $s$-parametrized temporal states.

For thermal state $\varrho=e^{-\beta H}/Z$ and unitary evolution $e^{-iHt}(\bullet)e^{iHt}$, consider the \(s\)-parameterized temporal state $ \Ups^{(s)}_{t_1,t_0} =\sum_{i,j}\mathcal{E}_{t_0,t_1}(|i\rangle \langle j|)\otimes \varrho_{t_0}^{s}|j\rangle \langle i|\varrho_{t_0}^{1-s}$.
We call $\Ups^{(s)}$ \emph{KMS temporal states}.
The family satisfies
\begin{equation}\label{eq:kms-adjoint-reflection}
\bigl(\Ups^{(s)}_{t,0}\bigr)^\dagger
=
\Ups^{(1-s)}_{t,0}.
\end{equation}
Hence the midpoint $\Ups^{(1/2)}_{t,0}$ is Hermitian.
The resulting temporal correlation functions is
\begin{equation}
\begin{aligned}
    F^{(s)}_{AB}(t)=&\langle A(t)B(0)\rangle_{\Ups^{(s)}} \\
    =  & \Tr [(A\otimes B) {\Upsilon^{(s)}}] \\
    =&
\Tr\!\left[ A\, e^{-iHt}\varrho^{1-s}B\varrho^{s}e^{iHt}\right]=\langle B(0)A(t+i s\beta)\rangle_{\beta}.
\end{aligned}
\label{eq:ims}
\end{equation}
The endpoint are consequently
\begin{equation}
F^{(0)}_{AB}(t)=\langle B(0) A(t)\rangle_\beta,
\qquad
F^{(1)}_{AB}(t)=\langle A(t)B(0)\rangle_\beta.
\end{equation}
Let $H|n\rangle=E_n|n\rangle$ and
$p_n=e^{-\beta E_n}/Z(\beta)$. Then
\begin{equation}\label{eq:kms-lehmann-s}
F^{(s)}_{AB}(t)
=
\sum_{m,n}
p_m^s p_n^{1-s}
A_{mn}B_{nm}\,
e^{i(E_m-E_n)t}.
\end{equation}
With the Fourier convention
$\widetilde F^{(s)}_{AB}(\omega)
=\int_{\mathbb R}dt\,e^{i\omega t}F^{(s)}_{AB}(t)$,
this yields, as an identity of spectral distributions,
\begin{equation}\label{eq:kms-s-frequency}
\widetilde F^{(s)}_{AB}(\omega)
=
e^{s\beta\omega}\widetilde F^{(0)}_{AB}(\omega)
=
e^{-(1-s)\beta\omega}\widetilde F^{(1)}_{AB}(\omega).
\end{equation}
In particular, we obtain the KMS condition
\begin{equation}\label{eq:kms-detailed-balance}
\widetilde F^{(1)}_{AB}(\omega)
=
e^{\beta\omega}\widetilde F^{(0)}_{AB}(\omega).
\end{equation}
For $B=A^\dagger$, every coefficient in
Eq.~\eqref{eq:kms-lehmann-s} is nonnegative, so
$\widetilde F^{(s)}_{A A^\dagger}(\omega)$ is a positive measure.

Using the notation introduced in Section~\ref{sec:keldysh}
\begin{align}
       G^{>}(\omega)
= \int_{-\infty}^{\infty} dt \, e^{i\omega t} \, G^{>}(t),
\qquad 
G^{>}(t) = -i\langle A(t)B(0)\rangle_{\beta}=-i F^{(1)}_{AB}(t),\\
    G^{<}(\omega)
= \int_{-\infty}^{\infty} dt \, e^{i\omega t} \, G^{<}(t),
\qquad 
G^{<}(t) = -i\langle B(0)A(t)\rangle_{\beta}=-i F^{(0)}_{AB}(t).
\end{align}
With this convention, the KMS condition for $\Ups^{(s)}$ reads \(G^{<}(\omega) = e^{-s\beta\omega}G^{>}(\omega)\).
Let \(G^{\rm sp}(\omega)=G^{>}(\omega)-G^{<}(\omega)\) be the spectral function and \(G^{K}(\omega)=G^{>}(\omega)+G^{<}(\omega)\) the Keldysh function. Then by KMS condition, we have
\begin{align}\label{eq:kmskernel}
  G^{\rm sp}(\omega) =(1-e^{\beta s \omega}) G^{>}(\omega),\\
   G^{\rm K}(\omega) =(1+e^{\beta s \omega}) G^{>}(\omega).
\end{align}

The thermal unitary setting permits an explicit characterization
of the temporal spectrum.
Let $\mathbb S$ be swap operator, we have
\begin{equation}\label{eq:kms-swap-form}
\Ups^{(s)}_{t,0}
=
(U_t\otimes I)
(\varrho^{1-s}\otimes\varrho^s)\mathbb S
(U_t^\dagger\otimes I).
\end{equation}
In particular, its trace is one and both one-time marginals are
$\varrho$.
The entire family is similar to its
Hermitian midpoint:
\begin{equation}\label{eq:kms-midpoint-similarity}
\Ups^{(s)}_{t,0}
=
L_s\Ups^{(1/2)}_{t,0}L_s^{-1},
\qquad
L_s=\varrho^{1/2-s}\otimes I.
\end{equation}
Consequently, every member is diagonalizable and has a real
spectrum, even though it need not be normal.
Explicitly,
\begin{equation}\label{eq:kms-temporal-spectrum}
\operatorname{spec}\Ups^{(s)}_{t,0}
=
\{p_i\}_{i=1}^{d}
\;\cup\;
\bigl\{
+\sqrt{p_i p_j},-\sqrt{p_i p_j}
\bigr\}_{i<j},
\end{equation}
counting multiplicities.
To see this directly, remove the unitary conjugation in
Eq.~\eqref{eq:kms-swap-form}. Each vector $|ii\rangle$ has
eigenvalue $p_i$, while on
$\operatorname{span}\{|ij\rangle,|ji\rangle\}$ the operator has
the matrix
\begin{equation}
\begin{pmatrix}
0 & p_i^{1-s}p_j^s\\
p_j^{1-s}p_i^s & 0
\end{pmatrix},
\end{equation}
whose eigenvalues are $\pm\sqrt{p_i p_j}$.
Thus the eigenvalues are independent of both $s$ and $t$.
For a faithful state with $d>1$, negative eigenvalues are
unavoidable, including at the Hermitian midpoint.

In contrast, the singular values depend on $s$.
Equation~\eqref{eq:kms-swap-form} gives
\begin{equation}\label{eq:kms-magnitude}
\bigl|\Ups^{(s)}_{t,0}\bigr|
=
\varrho^s\otimes\varrho^{1-s},
\end{equation}
and hence
\begin{equation}\label{eq:kms-trace-norm}
\bigl\|\Ups^{(s)}_{t,0}\bigr\|_1
=
\Tr\varrho^s\,\Tr\varrho^{1-s}
=
\frac{Z(s\beta)Z((1-s)\beta)}{Z(\beta)}.
\end{equation}
For a non-maximally-mixed faithful $\varrho$, the operator is
normal precisely at $s=1/2$.
Indeed, normality of each two-dimensional block above requires
$p_i^{1-s}p_j^s=p_j^{1-s}p_i^s$ for every pair $i,j$.
Cauchy--Schwarz inequality (applied to spectral eigenvalues of $\varrho^{1-s}$ and $\varrho^s$) further implies
\begin{equation}\label{eq:kms-minimal-temporality}
\bigl\|\Ups^{(s)}_{t,0}\bigr\|_1
\ge
\bigl(\Tr\sqrt{\varrho}\bigr)^2
=
\bigl\|\Ups^{(1/2)}_{t,0}\bigr\|_1.
\end{equation}
Equality holds only at $s=1/2$, unless
$\varrho=I/d$, in which case the family is independent of $s$.
Thus the midpoint minimizes the temporality within the KMS family.
The spectral absolute weight is independent of $s$:
\begin{equation}
\sum_j|\lambda_j|
=
\bigl(\Tr\sqrt{\varrho}\bigr)^2.
\end{equation}
Accordingly, the excess trace norm away from the midpoint is
entirely associated with nonnormality (see Eq.~\eqref{eq:temporality-decomposition} for definition of $\mathfrak T_{\rm nn}$):
\begin{equation}\label{eq:kms-nonnormal-excess}
\mathfrak T_{\rm nn}\bigl(\Ups^{(s)}_{t,0}\bigr)
=
\frac{
\Tr\varrho^s\,\Tr\varrho^{1-s}
-
(\Tr\sqrt{\varrho})^2
}{2}.
\end{equation}
The KMS family therefore separates a fixed signed spectrum from
an $s$-dependent nonnormal contribution to temporality.

When $s=1/2$, the ordinary Petz reverse of
$\cE_t=U_t(\bullet)U_t^{\dagger}$ is
\begin{align}
\mathcal R_{\varrho,\cE_t}(X)
&=
\varrho^{1/2}
\cE_t^\dagger\bigl(
\varrho^{-1/2}X\varrho^{-1/2}
\bigr)
\varrho^{1/2}
\nonumber\\
&=
U_t^\dagger XU_t
=
\cE_{-t}(X)
\label{eq:kms-petz-inverse}
\end{align}
which is nothing but the backward evolution.
The corresponding temporal operators obey (see Section~\ref{app:proofs} in Appendix for proof.)
\begin{equation}\label{eq:kms-petz-reflection}
\mathbb S\,
\Ups^{(s)}[\varrho,\cE_t]\,
\mathbb S
=
\Ups^{(1-s)}[\varrho,\cE_{-t}].
\end{equation}
Petz reversal therefore exchanges the temporal slots and reflects
$s\mapsto1-s$ across the midpoint of the KMS strip.
The ordinary Petz map implements this relation for every $s$;
the midpoint is distinguished because its ordering parameter is
unchanged by reflection.

\section
{Vectorized spatiotemporal state, entanglement and entropy}
\label{sec:STentropy}

There are two complementary ways to understand spatiotemporal entanglement through the entropy of a temporal state. (i) One may regard the spatiotemporal state of interest as a reduced state of a larger spatiotemporal state. From this perspective, the entropy of the reduced state characterizes the spatiotemporal entanglement between the chosen spatiotemporal region and its complement. One may further investigate the entanglement structure within the chosen region itself. (ii) Alternatively, one may take the temporal state associated with a chosen spatiotemporal region as the given state and focus directly on the entanglement structure within that region. In this section, we adopt the second point of view and postpone the more general discussion of the first perspective to the next section.
We will focus on the temporal case, as the generalization to the spatiotemporal setting is straightforward.

\subsection{Vectorized spatiotemporal state and exact normalization}

A spatiotemporal state supplies a natural way to define entanglement and entropy across arbitrary regions of spacetime. A  spatiotemporal state is an operator rather than, in general, a positive density matrix. Its direct partial trace gives the physical state at the retained times, not a measure of correlations with the discarded
times, and its eigenvalue entropy can be complex. Vectorization resolves both problems\footnote{The general theory of operator entanglement is discussed, e.g., in \cite{Zanardi2001operatorEnt,Prosen2007operatorEnt}.}: every nonzero spatiotemporal operator defines a normalized pure state on a doubled operator Hilbert space. Ordinary entanglement and mutual information
can then be applied without any positivity assumption on the original temporal operator. 

Fix the column-vectorization convention
\begin{equation}
 |X\rangle\!\rangle=\sum_{ij}X_{ij}|i\rangle\otimes|j\rangle,
 \qquad
 \langle\!\langle X|Y\rangle\!\rangle=\Tr(X^\dagger Y).
\end{equation}
For the Hilbert--Schmidt basis, introduce the normalized vectorization
\begin{equation}
 |\widehat{\sigma}_\mu\rangle\!\rangle
 :=\frac{1}{\sqrt d}|\sigma_\mu\rangle\!\rangle .
\end{equation}
The normalized vectorization of the left Kirkwood--Dirac temporal state is
\begin{align}
 |\widehat{\Upsl}_{t_n,\cdots ,t_0}\rangle\!\rangle
 &:=
 \mathcal N_L\frac{1}{d^{\,n+1}}
 \sum_{\mu_n,\ldots,\mu_0}T_L^{\mu_n\cdots\mu_0}
 |\sigma_{\mu_n}\rangle\!\rangle\otimes\cdots\otimes
 |\sigma_{\mu_0}\rangle\!\rangle \label{eq:normalized-vectorization}\\
 &=\frac{1}{\sqrt{\sum_{\boldsymbol\mu}|T_L^{\boldsymbol\mu}|^2}}
 \sum_{\boldsymbol\mu}T_L^{\boldsymbol\mu}
 |\widehat\sigma_{\mu_n}\rangle\!\rangle\otimes\cdots\otimes
 |\widehat\sigma_{\mu_0}\rangle\!\rangle ,
\end{align}
where
\begin{equation}\label{eq:vector-normalization}
 \mathcal N_L=\frac{1}{\|\Upsl\|_2}
 =\left(\frac{d^{\,n+1}}
 {\sum_{\boldsymbol\mu}|T_L^{\boldsymbol\mu}|^2}\right)^{1/2}.
\end{equation}
Similarly, for any spatiotemporal state \(\Ups\), define
\begin{equation}\label{eq:vectorized-density}
 |\widehat\Ups\rangle\!\rangle
 \langle\!\langle\widehat\Ups|,
 \qquad
 |\widehat\Ups\rangle\!\rangle=\frac{|\Ups\rangle\!\rangle}{\|\Ups\|_2}.
\end{equation}
For a spacetime region \(\mathsf A\subseteq\{(x_n,t_n),\ldots,(x_0,t_0)\}\), the subsystem
is the doubled operator space
\begin{equation}
 \widetilde\cH_{\mathsf A}
 =\bigotimes_{(x_k,t_k)\in\mathsf A}
 \left(\cH_{(x_k,t_k)}\otimes\cH_{(x_k,t_k)}^{*}\right),
\end{equation}
and the reduced vectorized state is
\begin{equation}\label{eq:vector-reduced}
 \rho_{\mathsf A}^{\rm vec}(\Ups)
 :=\Tr_{\mathsf{A}^c} |\widehat\Ups\rangle\!\rangle
 \langle\!\langle\widehat\Ups|.
\end{equation}
This is quite different from \(\Tr_{\mathsf A^c}\Ups\): the latter is another temporal operator, while Eq.~\eqref{eq:vector-reduced} is a positive density matrix containing the operator correlations between \(\mathsf A\) and
\(\mathsf A^c\).

\begin{definition}[Vectorized spatiotemporal entropy]\label{def:vec-entropy}
For a spatiotemporal state $\Upsilon$ and a spatiotemporal region $\mathsf A$, 
we define the vectorized von Neumann, R\'enyi, and Tsallis entropies, respectively, by
\begin{align}
 S_{\rm vec}(\mathsf A)_\Upsilon
 &:=-\Tr\!\left[
 \rho_{\mathsf A}^{\rm vec}\log\rho_{\mathsf A}^{\rm vec}
 \right],\\
 S_{\rm vec}^{(q)}(\mathsf A)_\Upsilon
 &:=\frac{1}{1-q}\log\Tr\!\left[
 \big(\rho_{\mathsf A}^{\rm vec}\big)^q
 \right],
 \qquad q>0,\quad q\ne1,\\
 T_{\rm vec}^{(q)}(\mathsf A)_\Upsilon
 &:=\frac{1}{q-1}\left(
 1-\Tr\!\left[
 \big(\rho_{\mathsf A}^{\rm vec}\big)^q
 \right]\right),
 \qquad q>0,\quad q\ne1.
\end{align}
For \(m\) spacetime regions, the total mutual entropy
\begin{equation}\label{eq:vec-total-correlation}
 I_{\rm vec}(\mathsf A_1:\cdots:\mathsf A_m)
 :=\sum_{j=1}^mS(\rho_{\mathsf A_j}^{\rm vec})
 -S(\rho_{\mathsf A_1\cdots\mathsf A_m}^{\rm vec})
 =D\!\left(\rho_{\mathsf A_1\cdots\mathsf A_m}^{\rm vec}
 \middle\|\bigotimes_j\rho_{\mathsf A_j}^{\rm vec}\right)\ge0
\end{equation}
vanishes exactly when the chosen regions are mutually product.
\end{definition}

Split \(\boldsymbol\mu=(\boldsymbol\mu_{\mathsf A},
\boldsymbol\mu_{\mathsf A^c})\) and flatten the normalized spatiotemporal Bloch tensor,
\begin{equation}\label{eq:coefficient-flattening}
 C^{(\mathsf A)}_{\boldsymbol\mu_{\mathsf A},
 \boldsymbol\mu_{\mathsf A^c}}
 :=\frac{T^{\boldsymbol\mu_{\mathsf A},
 \boldsymbol\mu_{\mathsf A^c}}}
 {\sqrt{\sum_{\boldsymbol\mu}|T^{\boldsymbol\mu}|^2}} .
\end{equation}
For every nonzero spatiotemporal operator \(\Ups\),
\begin{equation}\label{eq:vec-gram}
 \rho_{\mathsf A}^{\rm vec}
 =C^{(\mathsf A)}C^{(\mathsf A)\dagger},\qquad
 \rho_{\mathsf A^c}^{\rm vec}
 =C^{(\mathsf A)\dagger}C^{(\mathsf A)}.
\end{equation}
The common nonzero eigenvalues are
\begin{equation}\label{eq:operator-schmidt-prob}
 p_\alpha=s_\alpha(C^{(\mathsf A)})^2
 =\frac{\lambda_\alpha^2}{\|\Ups\|_2^2},
\end{equation}
where $s_\alpha(C^{(\mathsf A)})$ are singular values, \(\lambda_\alpha\) are the operator Schmidt values of \(\Ups\) \footnote{In the temporal setting, for a cut separating the future block $\mathsf{F}=\{t_n,\dots,t_{k+1}\}$ from the past block $\mathsf{P}=\{t_k,\dots,t_0\}$, write
\begin{equation}
   \Ups_{t_n\cdots t_0}=\sum_{\alpha}\lambda_\alpha\;\mathcal{F}_\alpha\otimes\mathcal{P}_\alpha,
   \qquad \lambda_1\ge\lambda_2\ge\cdots\ge0,
\end{equation}
with $\{\mathcal{F}_\alpha\}$, $\{\mathcal{P}_\alpha\}$ orthonormal in the Hilbert--Schmidt inner product. The temporal Schmidt rank is $\chi_T(k)=\#\{\alpha:\lambda_\alpha>0\}$, and the temporal entanglement entropy is
\begin{equation}\label{eq:STdef}
   S_T(\sP):=S_{\rm vec}(\mathsf P)_\Ups
   =-\sum_\alpha p_\alpha\log p_\alpha,\qquad
   p_\alpha=\frac{\lambda_\alpha^2}{\sum_\beta\lambda_\beta^2}.
\end{equation}
R\'enyi and Tsallis entropies are defined in the obvious way.
}.
Hence \(S_{\rm vec}(\mathsf A)=S_{\rm vec}(\mathsf A^c)\), and the entropy
is obtained directly by an SVD of a flattening of the measured Bloch tensor,
without diagonalizing the generally non-Hermitian \(\Ups\).

Notice that the superdensity operator, which is also constructed from the doubled Wightman correlation tensor $T^{\bm\mu;\bm\nu}$, admits a related vectorization interpretation \cite{jia2024spatiotemporal}. In this construction, however, vectorization is applied only to the basis observables: the observables associated with the left half are vectorized into kets, while those associated with the right half are vectorized into bras. Since $T^{\bm\mu;\bm\nu}$ is positive semidefinite with respect to the left index $\bm\mu$ and the right index $\bm\nu$, the resulting operator is positive semidefinite and, after normalization, defines a density operator known as the superdensity operator \cite{cotler2018superdensity}. A detailed comparison between the superdensity operator and our vectorization formalism, together with their connections, is provided in Section~\ref{app:superDens}.

Let the physical Hilbert-space dimension of region \(\mathsf A\) be
\(d_{\mathsf A}\), so its operator-space dimension is \(d_{\mathsf A}^2\).
Then:
\begin{enumerate}
\item
\(0\le S_{\rm vec}(\mathsf A)_\Ups
\le\log\chi_{\mathsf A}
\le2\log\min\{d_{\mathsf A},d_{\mathsf A^c}\}\).
\item \(S_{\rm vec}(\mathsf A)_\Ups=0\) if and only if
\(\Ups=X_{\mathsf A}\otimes Y_{\mathsf A^c}\). 
\item The entropy is unchanged by nonzero scalar rescaling of \(\Ups\), by
local changes of Hilbert--Schmidt basis, and by local left--right unitary
actions
\[
 \Ups\mapsto(U_{\mathsf A}\otimes U_{\mathsf A^c})\,
 \Ups\,(V_{\mathsf A}\otimes V_{\mathsf A^c}).
\]
\item It is additive:
\[
 S_{\rm vec}(\mathsf A_1\mathsf A_2)_{\Ups_1\otimes\Ups_2}
 =S_{\rm vec}(\mathsf A_1)_{\Ups_1}
 +S_{\rm vec}(\mathsf A_2)_{\Ups_2}.
\]
\item Adding a region \(\mathsf R\) changes entropy by at most $\log d_{\mathsf R}$,
\begin{equation}\label{eq:local-entropy-growth}
 \left|S_{\rm vec}(\mathsf A\mathsf R)-S_{\rm vec}(\mathsf A)\right|
 \le2\log d_{\mathsf R}.
\end{equation}
\item If
\( \delta
  :=
  \frac12
 \|
  |\widehat{\Ups}\rangle\!\rangle
  \langle\!\langle\widehat{\Ups}|
  -
  |\widehat{\widetilde\Ups}\rangle\!\rangle
  \langle\!\langle\widehat{\widetilde\Ups}|
  \|_1
  \le
  1-d_{\mathsf A}^{-2}\), then
\begin{equation}\label{eq:vec-continuity}
 |S_{\rm vec}(\mathsf A)_\Ups-
 S_{\rm vec}(\mathsf A)_{\widetilde\Ups}|
 \le\delta\log(d_{\mathsf A}^2-1)+h_2(\delta).
\end{equation}
Here \(h_2(x)=-x\log x-(1-x)\log(1-x)\).
\end{enumerate}

Item  1-4 are pure-state entropy properties applied to \(|\widehat\Ups\rangle\!\rangle\).
Item 5 is derived by $S_{\rm vec}(\sA)+S_{\rm vec}(\sR)\geq S_{\rm vec}(\sA\sR) \geq |S_{\rm vec}(\sA)-S_{\rm vec}(\sR)|$ 
and \(S(\rho_{\mathsf R}^{\rm vec})\le\log d_{\mathsf R}^2\).

Item 6 can be proven as follows.
Define the reduced vectorized states
\begin{equation}
 \rho_{\mathsf A}^{\Ups}
 :=
 \Tr_{{\mathsf A}^c}
 \left(
 |\widehat{\Ups}\rangle\!\rangle
 \langle\!\langle\widehat{\Ups}|
 \right),
 \qquad
 \rho_{\mathsf A}^{\widetilde\Ups}
 :=
 \Tr_{{\mathsf A}^c}
 \left(
 |\widehat{\widetilde\Ups}\rangle\!\rangle
 \langle\!\langle\widehat{\widetilde\Ups}|
 \right).
\end{equation}
By contractivity of the trace norm under partial trace,
\begin{equation}
 \delta_{\mathsf A}
 :=
 \frac12
 \left\|
 \rho_{\mathsf A}^{\Ups}
 -
 \rho_{\mathsf A}^{\widetilde\Ups}
 \right\|_1
 \le \delta.
\end{equation}
Since the vectorized Hilbert space associated with $\mathsf A$ has
dimension $d_{\mathsf A}^2$, the Audenaert--Fannes inequality \cite{Fannes1973entroy,Audenaert2007entropy} gives
\begin{equation}
 \left|
 S_{\rm vec}(\rho_{\mathsf A}^{\Ups})
 -
 S_{\rm vec}(\rho_{\mathsf A}^{\widetilde\Ups})
 \right|
 \le
 \delta_{\mathsf A}\log(d_{\mathsf A}^2-1)
 +
 h_2(\delta_{\mathsf A}).
\end{equation}
The function
\begin{equation}
 f_D(x):=x\log(D-1)+h_2(x)
\end{equation}
is monotonically increasing for
$0\le x\le1-D^{-1}$, since
\begin{equation}
 f_D'(x)
 =
 \log\frac{(D-1)(1-x)}{x}
 \ge0
\end{equation}
on this interval. Therefore, using
$\delta_{\mathsf A}\le\delta\le1-d_{\mathsf A}^{-2}$ with
$D=d_{\mathsf A}^2$, we obtain Eq.~\eqref{eq:vec-continuity}.

The recursion of Section~\ref{sec:recursion} and its MPO representation immediately yield bounds on the temporal Schmidt rank $\chi_T$ and temporal entanglement entropy $S$. Let $\Ups$ be a temporal state; we then have the following:
\begin{enumerate}
\item If the process is memoryless with system dimension $d_S$, then for every temporal cut $\sP|\sF$ (past and future),
$\chi_T(\sP)\le d_S^{2}$ and hence $S_{T}(\sP)\le 2\log d_S$, independently of the number of time steps $n$.
\item If the process has memory generated by a dilation with environment dimension $d_E$, then $\chi_T(\sP)\le (d_Sd_E)^{2}$ and $S_{T}(\sP)\le2\log d_S+2\log d_E$. 
This implies that temporal entanglement measures quantum memory. The excess $S_{\rm vec}(\sP)- 2\log d_S$ is nonzero only when the process is non-Markovian, and is bounded by the entropy of the environmental degrees of freedom that actually mediate correlations across the cut.
\item For the $2m$-fold state (if we regard as a $2m$-fold $n$ step temporal process), $\chi_T(\sP)\le d_S^{2m}$ and $S_{T}(\sP)\le 2m\log d_S$.
\end{enumerate}

Therefore, the temporal state can be viewed as analogous to a gapped one-dimensional spatial state, suggesting the absence of long-range entanglement in the purely temporal setting without memory.
Note that the doubled Kirkwood--Dirac/Margenau--Hill temporal state has the same bond bound $d_S^2$ as the single-branch states, even though its local slot dimension is $d_S^2$; this is clear from the MPO representations. This constitutes the state-level statement that the Keldysh doubling costs nothing in temporal complexity for a Markovian process, and explains why influence-matrix methods can work with the doubled contour at no extra bond cost.

\subsection{Two-time vectorized temporal entropy}

For a two-time process, the temporal operator entanglement can be computed more explicitly. 
For initial state $\varrho$ evolution $\cE$ and probability measure $\mathbb{P}(s)$,
consider the temporal state defined in Eq.~\eqref{eq:TwoTimePs}.
Every member has the physical marginals
$\Tr_{t_1}\Ups^{\mathbb P}=\varrho$ and
$\Tr_{t_0}\Ups^{\mathbb P}=\cE(\varrho)$.
Define the map (which is generally not a CPTP map, but only a tool for constructing temporal states)
\begin{equation}\label{eq:modular-mixture-map}
 \mathcal M_{\mathbb P,\varrho}(X)
 :=\int_{[0,1]}\varrho^sX\varrho^{1-s}\,\mathbb P(s) ds,
 \qquad
 \mathcal M_{0,\varrho}=R_\varrho,
 \quad \mathcal M_{1,\varrho}=L_\varrho,
\end{equation}
where the displayed endpoint identities specify the convention when
$\varrho$ is not faithful.  Then
\begin{equation}\label{eq:general-mixture-choi}
 \Ups^{\mathbb P}_{t_1t_0}
 =J[\cE\circ\mathcal M_{\mathbb P,\varrho}],
\end{equation}
and its operator Schmidt values across $t_1|t_0$ are exactly the singular
values of
\begin{equation}\label{eq:general-mixture-superoperator}
 \widehat\cE\widehat{\mathcal M}_{\mathbb P,\varrho}
\end{equation}
with respect to the Hilbert--Schmidt inner product \footnote{This follows from the fact that $\widehat{\cE}\widehat{\mathcal M}_{\mathbb P,\varrho}$ gives the coefficient matrix of the temporal state with respect to the Hilbert--Schmidt basis $\{\sigma_{\mu}\otimes\sigma_{\nu}\}_{\mu,\nu}$.}.  Consequently,
\begin{equation}\label{eq:general-mixture-rank}
 \chi_T(\cE,\varrho,\mathbb P)
 =\rank\!\left(
 \widehat\cE\,\widehat{\mathcal M}_{\mathbb P,\varrho}
 \right).
\end{equation}
More explicitly, in eigenbasis of initial state
$\varrho=\sum_a\lambda_a\ket{a}\!\bra{a}$,
$\mathcal M_{\mathbb P,\varrho}$ reads
\begin{align}
 \mathcal M_{\mathbb P,\varrho}(\ket{a}\!\bra{b})
 &=m^{\mathbb P}_{ab}\ket{a}\!\bra{b},\label{eq:modular-weights}\\
 m^{\mathbb P}_{ab}
 &=p_0\lambda_b+p_1\lambda_a
 +\int_{(0,1)}\lambda_a^s\lambda_b^{1-s}\,\mathbb P(s) ds,\nonumber
\end{align}
where $p_0=\mathbb P(\{0\})$ and $p_1=\mathbb P(\{1\})$.
We see $\widehat{\mathcal M}_{\mathbb P,\varrho}$ is diagonal in the basis $||a\rangle \langle b|\rrangle$.
If $\varrho$ is invertible, every $m^{\mathbb P}_{ab}$ is strictly positive,
so $\mathcal M_{\mathbb P,\varrho}$ is invertible and
\begin{equation}\label{eq:faithful-general-rank}
 \chi_T(\cE,\varrho,\mathbb P)=\rank\widehat\cE
 \qquad\text{for every }\mathbb P.
\end{equation}
If $\varrho$ has support projector $P$ and rank $q<d$, the visible operator
space is exactly $\operatorname{range}\mathcal M_{\mathbb P,\varrho}:=\{ \cM_{\mathbb P,\varrho}(X)|X\in \mathbf{B}(\cH)\}$:
\begin{equation}\label{eq:visible-operator-spaces}
 \operatorname{range}\mathcal M_{\mathbb P,\varrho}
 =\begin{cases}
 P\BB(\cH)P, & p_0=p_1=0,\\
 \BB(\cH)P, & p_0>0,\ p_1=0,\\
 P\BB(\cH), & p_0=0,\ p_1>0,\\
 P\BB(\cH)+\BB(\cH)P, & p_0>0,\ p_1>0.
 \end{cases}
\end{equation}
Its dimension is respectively $q^2$, $qd$, $qd$, or $2qd-q^2$.
Thus the invisible dynamics is determined by
$\ker\mathcal M_{\mathbb P,\varrho}$, not merely by the orthogonal complement
of $\operatorname{supp}\varrho$. See~Section~\ref{app:MrhoP} in Appendix for more details.

Several consequences follow immediately (we use $S_T$ to denote vectorized entanglement entropy):
\begin{enumerate}

\item
For invertible $\varrho$, the vectorized temporal entanglement entropy $S_T=0$ for one, and hence every, choice of
$\mathbb P$ if and only if $\widehat\cE$ has rank one.  For a
quantum channel this is equivalent to a replacement channel,
\begin{equation}
 \cE(\bullet)=\Tr[\bullet]\,\sigma.
\end{equation}
For  $\varrho$ that is not of full rank, the vectorized temporal entanglement entropy $S_T=0$ if and only if
$\cE$ is replacement on the visible space in
Eq.~\eqref{eq:visible-operator-spaces}, namely
$\cE(X)=\Tr[X]\,\sigma$ for every
$X\in\operatorname{range}\mathcal M_{\mathbb P,\varrho}$.

\item If $\cE=\mathcal U$ is unitary channel, then $\widehat{\mathcal U}$ is a
Hilbert--Schmidt unitary and the operator Schmidt values are simply the
numbers $m^{\mathbb P}_{ab}$ in Eq.~\eqref{eq:modular-weights}.  Hence the temporal entanglement entropy is given by
\begin{equation}\label{eq:unitary-general-spectrum}
 p_{ab}=\frac{(m^{\mathbb P}_{ab})^2}
 {\sum_{c,d}(m^{\mathbb P}_{cd})^2},
 \qquad
 S_T=-\sum_{a,b}p_{ab}\log p_{ab}.
\end{equation}
Thus, for unitary evolution, the temporal entanglement spectrum is controlled entirely by the initial state and the ordering measure.
For a point measure $\mathbb P=\delta_s$, we have
$m_{ab}^{(s)}=\lambda_a^s\lambda_b^{1-s}$.  Hence
$(m_{ab}^{(s)})^2=\lambda_a^{2s}\lambda_b^{2(1-s)}$, and their
normalization factor factorizes:
\begin{equation}
 \sum_{a,b}(m_{ab}^{(s)})^2
 =
 \left(\sum_a\lambda_a^{2s}\right)
 \left(\sum_b\lambda_b^{2(1-s)}\right)
 =
 Z_sZ_{1-s},
\end{equation}
where
\begin{equation}
 \pi_a^{(\alpha)}
 :=
 \frac{\lambda_a^{2\alpha}}{Z_\alpha},
 \qquad
 Z_\alpha:=\sum_a\lambda_a^{2\alpha}.
\end{equation}
The normalized temporal Schmidt spectrum therefore factorizes:
\begin{equation}\label{eq:unitary-s-spectrum}
 p_{ab}
 =
 \pi_a^{(s)}\pi_b^{(1-s)}.
\end{equation}
By
additivity of the Shannon entropy,
\begin{align}
 S_T(\Ups^{(s)})
 &=
 -\sum_{a,b}
 \pi_a^{(s)}\pi_b^{(1-s)}
 \log\!\left[
 \pi_a^{(s)}\pi_b^{(1-s)}
 \right]\nonumber\\
 &=
 H(\pi^{(s)})+H(\pi^{(1-s)}).
 \label{eq:unitary-s-entropy}
\end{align}

For $0<s<1$, both $\pi^{(s)}$ and $\pi^{(1-s)}$ are supported on
$\operatorname{supp}\varrho$.  If $q=\rank\varrho$, each distribution has
$q$ nonzero components, and hence the temporal operator-Schmidt rank is
$\chi_T=q^2$.
At the endpoints, we use the one-sided convention
$\varrho^0=\mathds{1}$, including on $\ker\varrho$.  Consequently,
$\pi_a^{(0)}=1/d$ for all $a$, whereas $\pi^{(1)}$ has support of size
$q$.  It follows that $\chi_T=dq$ at both $s=0$ and $s=1$.  In
particular, if $\varrho$ is of full rank $q=d$, then
$\chi_T=d^2$ for every $s\in[0,1]$.
The symmetric ordering $s=\tfrac12$ is especially simple.  Since
$Z_{1/2}=\sum_a\lambda_a=1$, one has
$\pi_a^{(1/2)}=\lambda_a$, and therefore
\begin{equation}
 S_T(\Ups^{(1/2)})
 =
 2H(\{\lambda_a\})
 =
 2S(\varrho).
\end{equation}
\item The channel entanglement is a special case of temporal state entanglement. For an arbitrary channel $\cE$ and $\varrho=\I/d$, the temporal
Schmidt spectrum is the singular-value spectrum of $\widehat\cE/d$, so
$S_T$ is exactly the channel operator entanglement.

\item Postcomposition cannot increase the temporal Schmidt rank.  For every
$\varrho$ and $\mathbb P$,
\begin{equation}\label{eq:general-composition-rank}
 \chi_T(\cE'\circ\cE,\varrho,\mathbb P)
 \le\min\!\left\{
 \rank\widehat\cE',\,
 \chi_T(\cE,\varrho,\mathbb P)
 \right\}.
\end{equation}
For invertible $\varrho$ this becomes
$\chi_T(\cE'\circ\cE)\le
\min\{\chi_T(\cE'),\chi_T(\cE)\}$.
 A coherent unitary channel
preserves every operator direction visible through
$\mathcal M_{\mathbb P,\varrho}$ and saturates the corresponding Schmidt-rank
bound, but it is maximally entangling only when the weights are
uniform, as for $\varrho=\I/d$.  At the opposite extreme, a replacement
channel makes every member of the two-time family a product.  More general
noise can only reduce the temporal Schmidt rank under postcomposition, although it need not
decrease the normalized temporal entanglement entropy. 
\end{enumerate}

\begin{example}[Pure closed-system two-time state]
For the temporal state $\Ups^{\mathbb P}$ discussed in Example~\ref{exp:pure}, suppose $|0\rangle =|\psi\rangle$ is initial state, its nonzero operator Schmidt values $m_{ab}^{\mathbb{P}}$ are
\begin{equation}
m_{00}^{\mathbb{P}}= 1;\qquad m_{a0}^{\mathbb{P}}=
 \underbrace{p_0,\ldots,p_0}_{d-1}, a\neq 0; \qquad \qquad m_{0,b}^{\mathbb{P}}=
 \underbrace{p_1,\ldots,p_1}_{d-1}, b\neq 0.
\end{equation}
Writing $D=1+(d-1)(p_0^2+p_1^2)$, the temporal entanglement entropy is given by
\begin{equation}
 S_T=\log D-\frac{2(d-1)}{D}
 \left(p_0^2\log p_0+p_1^2\log p_1\right),
\end{equation}
with $0\log0:=0$. Thus, for a pure initial state evolving unitarily, either endpoint
prescription, $\mathbb P=\delta_0$ or $\mathbb P=\delta_1$, gives
$\chi_T=d$ and $S_T=\log d$. By contrast, every
$\mathbb P=\delta_r$ with $0<r<1$ gives $\chi_T=1$ and $S_T=0$.
\end{example}

%==================================================================
\section
{Spatiotemporal entanglement, entropy and mutual entropy: positivity, separability, and spectral structure}
\label{sec:entropy}

In contrast to the operational approach adopted in Section~\ref{sec:STentropy}, since a spatial state, represented by a density operator, can be regarded as a spatiotemporal state restricted to a fixed time slice, we seek to define notions of entanglement, entropy, and related quantities for spatiotemporal states in such a way that, upon restriction to a fixed time slice, they naturally reduce to the corresponding standard definitions for density operators.
A spatiotemporal operator is normalized, $\Tr\Ups=1$, but need not be positive, Hermitian, or even normal. Consequently, there is no unique extension of the standard notions of entropy and entanglement that simultaneously preserves all of their familiar properties and interpretations. We first consider their extension to Margenau--Hill spatiotemporal states, whose Hermiticity makes them more amenable to analysis, and then turn to general Kirkwood--Dirac spatiotemporal states.

\subsection{Hermitian spatiotemporal states: positivity and temporal separability}
\label{sec:mh-separability}

For any spatiotemporal state, Hermitianization yields a Hermitian spatiotemporal
state. Since such states give rise to real-valued spatiotemporal
quasiprobability distributions through the spatiotemporal Born rule, we refer
to them as Margenau--Hill spatiotemporal states\footnote{This is a more general definition than that given in Section~\ref{sec:tempstate}.}.

Let $\Ups_{\sA\sB}$ denote the Margenau--Hill spatiotemporal operator associated
with a bipartition of the sampled spacetime slots into $\sA|\sB$.
Hermiticity and unit trace imply that $\Ups_{\sA\sB}$ belongs to the
affine space
\begin{equation}
 \operatorname{Herm}_1(\sA\sB)
 :=
 \{X=X^\dagger:\Tr X=1\},
\end{equation}
but not necessarily to the set of density operators
$\operatorname{Pos}_1(\sA\sB)$. This distinction is essential.

\begin{definition}[Spatiotemporal separability]
\label{def:mh-separable}
A Hermitian Margenau--Hill spatiotemporal operator is said to be
spatiotemporally separable across the bipartition $\sA|\sB$ if it admits a
decomposition
\begin{equation}\label{eq:mh-separable}
 \Ups_{\sA\sB}
 =
 \sum_j p_j\,\rho_j^{\sA}\otimes\tau_j^{\sB},
 \qquad
 p_j\geq0,\quad \sum_j p_j=1,
 \qquad
 \rho_j^{\sA},\tau_j^{\sB}\geq0,
 \quad
 \Tr\rho_j^{\sA}=\Tr\tau_j^{\sB}=1.
\end{equation}
The resulting compact convex set is denoted by
$\Sep(\sA{:}\sB)$.
The definition extends straightforwardly to a multipartition
$\sA_1|\sA_2|\cdots|\sA_n$. In this case, different notions of
spatiotemporal separability form a natural hierarchy, ranging from full
separability across all spacetime regions to partial separability, where
entanglement may persist within subsets of the partition.
\end{definition}

This is precisely the standard density-operator notion of separability. An important advantage of this definition is that it naturally reduces to the usual notion of separability for spatial states, without requiring any additional assumptions or modifications.

Every $\Ups\in\operatorname{Herm}_1(\sA\sB)$ lies in exactly one of
the following regions:
\begin{align}
 &\Ups\in\Sep(\sA{:}\sB),
 &&\text{positive and spatiotemporally separable},\nonumber\\
 &\Ups\in\operatorname{Pos}_1(\sA\sB)\setminus
   \Sep(\sA{:}\sB),
 &&\text{positive and spatiotemporally entangled},\nonumber\\
 &\Ups\notin\operatorname{Pos}_1(\sA\sB),
 &&\text{spatiotemporally nonpositive}.\label{eq:mh-three-regions}
\end{align}
In the first two regions temporality $\mathfrak{T}(\Ups)=0$; in the third, $\mathfrak{T}(\Ups)>0$.

\begin{figure}[h]
    \centering
    \includegraphics[width=0.35\linewidth]{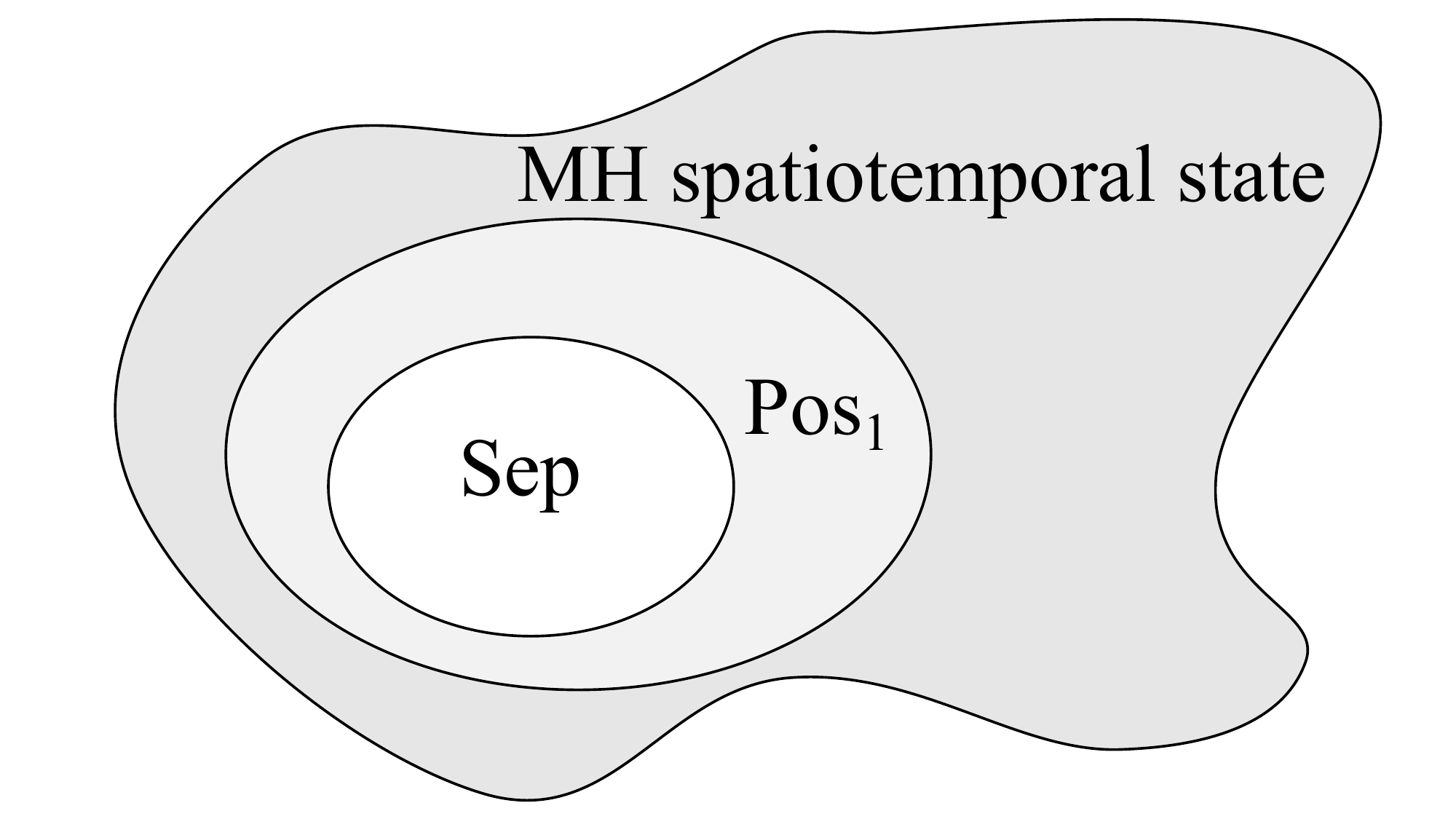}
   \caption{Classification of Hermitian spatiotemporal states, where spatiotemporal entanglement is characterized by the deviation from the set of separable states.}
    \label{Fig:SEPclassification}
\end{figure}

\subsubsection{Spatiotemporal entanglement robustness beyond the positive cone}
\label{sec:STentRobust}
Although conventional entanglement measures are defined for positive operators, the separable cone provides a natural extension of the notion of entanglement to arbitrary Hermitian spatiotemporal operators.

\begin{definition}[Spatiotemporal entanglement robustness]
\label{def:sep-base-norm}
For a spatiotemporal state $\Ups\in\operatorname{Herm}_1(\sA\sB)$, we define the separable base norm by \footnote{It can be proven to satisfy the axioms of a norm over $\operatorname{Herm}(\sA\sB)$, see Section~\ref{app:proofofTandR} in Appendix.}
\begin{equation}
\label{eq:sep-base-norm}
 \|\Ups\|_{\rm SEP}
 :=
 \inf\left\{
 \sum_j |c_j| \,:\,
 \Ups=\sum_j c_j\,\rho_j^{\sA}\otimes\tau_j^{\sB},
 c_j\in\mathbb{R},
 \rho_j^{\sA}\in\operatorname{Pos}_1(\sA),
 \tau_j^{\sB}\in\operatorname{Pos}_1(\sB)
 \right\},
\end{equation}
where the coefficients $c_j$ form a quasiprobability distribution; the separable base norm thus corresponds to the optimal quasiprobability mixture of product states with minimal negativity.
The entanglement robustness is then defined by
\begin{equation}
\label{eq:signed-sep-robustness}
 \mathfrak{R}_{\rm SEP}(\Ups)
 :=
 \frac{\|\Ups\|_{\rm SEP}-1}{2},
\end{equation}
which quantifies the negativity of the optimal coefficients $c_j$.
Equivalently, $\mathfrak{R}_{\rm SEP}(\Ups)$ can be expressed as \cite{Vidal1999robustness}
 \begin{equation}
  \mathfrak{R}_{\rm SEP}(\Ups) =  \inf_{\alpha} \{A=\frac{\Ups+\alpha Y}{1+\alpha} \in \Sep(\sA{:}\sB), Y\in \Sep(\sA{:}\sB)\}.
 \end{equation}
It is clear that $\mathfrak{R}_{\rm SEP}(\Ups)\ge0$, with equality if and only if
$\Ups\in\Sep(\sA{:}\sB)$.
 The spatiotemporal entanglement is then given by $\mathfrak{R}_{\rm SEP}(\Ups)\geq 0$.
\end{definition}

Thus, Margenau--Hill spatiotemporal states can be classified into the following three categories (see Figure~\ref{Fig:SEPclassification}):
\begin{itemize}
\item \emph{Positive and spatiotemporally separable}:
$\fT(\Ups)=0$ and $\mathfrak{R}_{\rm SEP}(\Ups)=0$.
\item \emph{Positive and spatiotemporally entangled}: 
$\fT(\Ups)=0$ and $\mathfrak{R}_{\rm SEP}(\Ups)>0$.
\item \emph{Nonpositive and spatiotemporally entangled}: 
$\fT(\Ups)>0$ and $\mathfrak{R}_{\rm SEP}(\Ups)>0$.
\end{itemize}
For every finite-dimensional Margenau--Hill spatiotemporal state $\Ups$, the spatiotemporal entanglement robustness inherits many of the standard properties of entanglement robustness for density operators. We do not review these properties here and instead refer to Ref.~\cite{Vidal1999robustness} and the general framework of the resource theory of entanglement.

\begin{proposition}\label{prop:TandR}
     In the spatiotemporal setting, we emphasize the following properties:
\begin{enumerate}
\item The partial-transpose negativity, based on the Peres--Horodecki PPT
criterion, is a standard diagnostic of spatial entanglement. For a
nonpositive spatiotemporal operator, however, the partial transpose may be
positive, so partial-transpose negativity alone does not detect all departures
from the separable-state cone. Temporality and partial-transpose negativity
capture logically distinct obstructions. Neither condition implies the other.
We define the extended partial-transpose negativity
\begin{equation}\label{eq:extended-pt-negativity}
 \fN_{\rm PT}(\Ups):=\frac{\|\Ups^{T_{\sB}}\|_1-1}{2}.
\end{equation}
The spatiotemporal entanglement robustness obeys the computable lower bound
\begin{equation}\label{eq:robustness-lower-bounds}
 \max\left\{
 \mathfrak{T}(\Ups),
 \fN_{\rm PT}(\Ups)
 \right\}
 \leq \mathfrak{R}_{\rm SEP}(\Ups).
\end{equation}
Accordingly, the gap
$\mathfrak{R}_{\rm SEP}(\Ups)-\mathfrak{T}(\Ups)$
measures the nonseparability beyond the minimal contribution enforced by
nonpositivity. It therefore provides a useful diagnostic of spatiotemporal
entanglement, although the gap itself need not be monotonic.
\item Let $\Ups=\Ups_+-\Ups_-$ be the Hahn-Jordan decomposition and set
$r:=\Tr \Ups_-=\mathfrak{T}(\Ups)$.  Then
\begin{equation}\label{eq:temporality-saturation}
 \mathfrak{R}_{\rm SEP}(\Ups)=\mathfrak{T}(\Ups)
\end{equation}
if and only if $\Ups_+/(1+r)\in\Sep(\sA{:}\sB)$ and, when $r>0$,
$\Ups_-/r\in\Sep(\sA{:}\sB)$.  Hence the strict gap
$\mathfrak{R}_{\rm SEP}(\Ups)-\mathfrak{T}(\Ups)$ is present precisely when the
positive or negative Hahn-Jordan sector cannot itself be normalized to a separable
state.
\item The separable base norm has the witness dual
\begin{equation}\label{eq:sep-base-dual}
 \|\Ups\|_{\rm SEP}
 =\sup_{W\in \operatorname{Herm}}\left\{\Tr(W\Ups):
 \big|\Tr[W(\rho_{\sA}\otimes\tau_{\sB})]\big|\le1
 \ \text{for all }\rho_{\sA}\in\operatorname{Pos}_1(\sA),\
 \tau_{\sB}\in\operatorname{Pos}_1(\sB)\right\}.
\end{equation}
Every feasible $W$ can be regarded as a spatiotemporal entanglement witness.
\end{enumerate} 
\end{proposition}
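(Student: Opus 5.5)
The plan is to prove the three items in reverse logical order. The core tool is the Hahn--Jordan decomposition together with the trace-norm characterization of temporality from Proposition~\ref{prop:temporality-properties}, item~5.

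\emph{Lower bounds, Eq.~\eqref{eq:robustness-lower-bounds}.} Take any admissible decomposition $\Ups=\sum_j c_j\rho_j^{\sA}\otimes\tau_j^{\sB}$ with product density operators. Splitting $c_j$ into positive and negative parts gives $\Ups=P-N$, where $P,N\ge0$, $\Tr P=\sum_{c_j>0}c_j$ and $\Tr N=\sum_{c_j<0}|c_j|$. Since $\Tr\Ups=1$, we have $\sum_j|c_j|=1+2\Tr N$.
\begin{itemize}
\item For temporality, the triangle inequality gives $\|\Ups\|_1\le \Tr P+\Tr N=\sum_j|c_j|$. Taking the infimum yields $\|\Ups\|_1\le\|\Ups\|_{\rm SEP}$, hence $\mathfrak T\le\mathfrak R_{\rm SEP}$.
\item For $\fN_{\rm PT}$, apply $T_{\sB}$ to the same decomposition. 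Each $\rho_j\otimes\tau_j^{T}$ is still a density operator with unit trace norm, so $\|\Ups^{T_\sB}\|_1\le\sum_j|c_j|$ by the same argument.
\end{itemize}
For the claim that neither obstruction implies the other, I would give two qubit examples. First, a positive NPT state (e.g.\ a Bell state) has $\mathfrak T=0<\fN_{\rm PT}$. Second, a nonpositive Hermitian operator whose partial transpose is positive: the two-time LvN/MH state of the identity channel on the maximally mixed qubit is $\mathbb S/d$ up to normalization. It has negative eigenvalues, while its partial transpose is proportional to the projector onto the maximally entangled state, which is positive.

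\emph{Saturation criterion, Eq.~\eqref{eq:temporality-saturation}.} Use the standard normalization $\Tr\Ups_+=1+r$ and $\Tr\Ups_-=r$.
\begin{itemize}
\item ($\Leftarrow$) If both normalized sectors are separable, write each as a convex mixture of product states. This gives an admissible decomposition with $\sum|c_j|=(1+r)+r=\|\Ups\|_1$, so $\mathfrak R_{\rm SEP}\le\mathfrak T$, and equality follows from the first part.
\item ($\Rightarrow$) Suppose $\mathfrak R=\mathfrak T$. Assume first that the infimum is attained, which holds by compactness in finite dimension via Carath\'eodory on the compact set of product states. Then there is a decomposition $\Ups=P-N$ with $P,N$ separable and $\Tr P+\Tr N=\|\Ups\|_1$. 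Equality in $\|P-N\|_1\le\Tr P+\Tr N$ forces $P$ and $N$ to have orthogonal supports. To see this, let $\Pi_\pm$ be the Jordan projectors. Then $\Tr\Pi_+(P-N)-\Tr\Pi_-(P-N)=\Tr P+\Tr N$ forces $\Tr\Pi_-P=\Tr\Pi_+N=0$. By uniqueness of the Jordan decomposition, $P=\Ups_+$ and $N=\Ups_-$, which are therefore separable after normalization.
\end{itemize}
The main delicacy is this attainment/uniqueness step. I would handle it carefully by noting that $P$ is supported in $\operatorname{ran}\Pi_+$ and $N$ in $\operatorname{ran}\Pi_-$, so $P=\Pi_+\Ups\Pi_+=\Ups_+$.

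\emph{Witness dual, Eq.~\eqref{eq:sep-base-dual}.} Let $K$ be the convex hull of $\pm$ product states. $K$ is compact, convex, symmetric, and absorbing in $\operatorname{Herm}(\sA\sB)$, since product states span the Hermitian space. So $\|\cdot\|_{\rm SEP}$ is its Minkowski gauge, i.e.\ a norm. By the bipolar theorem (Hahn--Banach in finite dimension), the gauge equals the support function of the polar set $K^\circ=\{W:|\Tr W(\rho\otimes\tau)|\le1\ \forall\rho,\tau\}$, which is exactly the stated supremum. Weak duality is immediate: $\Tr W\Ups\le\sum|c_j|\,|\Tr W(\rho_j\otimes\tau_j)|\le\sum|c_j|$. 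Strong duality follows from separating $\Ups/\lambda\notin K$ for $\lambda<\|\Ups\|_{\rm SEP}$.
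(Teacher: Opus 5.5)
Your proposal is correct and follows essentially the paper's route: triangle-inequality lower bounds (the paper writes the decomposition as $(1+s)\sigma_+-s\sigma_-$); attainment of the optimum plus equality in the trace-norm triangle inequality, which forces orthogonal supports and hence $P=\Ups_+$, $N=\Ups_-$ (you use the Jordan projectors $\Pi_\pm$ where the paper uses an optimal dual Hermitian $H$); and finite-dimensional norm duality for the unit ball $\operatorname{conv}(\cP\cup-\cP)$. Your explicit examples — a Bell state, and $\mathbb{S}/d$ with $\mathbb{S}^{T_{\sB}}/d=|\Phi^+\rangle\langle\Phi^+|$ — together with your compactness argument for attainment supply points that the paper only asserts.
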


The proofs are given in Section~\ref{app:proofofTandR} in Appendix.

For Hermitian spatiotemporal states, the entropy introduced in Ref.~\cite{jia2023quantumspace} based on singular-value decomposition (SVD) applies directly, and all of its established properties carry over without modification.

%%%%%%%%%%%%%
\subsection{Spectral and SVD entropies}
We now turn to the general Kirkwood-Dirac spatiotemporal state which may be non-Hermitian and even nonnormal.
There are three natural, but inequivalent, extensions of the von Neumann
entropy.  They retain, respectively, the eigenvalue spectrum, the normalized
singular-value spectrum, and the scale of the magnitude operator.  Keeping the
three notions separate is important whenever $\Ups$ is nonpositive or
nonnormal.

We will mainly use the spectral value $\lambda_j$ and singular values $s_j$ of spatiotemporal state to define entropies, thus we first establish the following result:

\begin{proposition}[Unit-disk spectral bound of spatiotemporal state]\label{prop:temporal-spectral-disk}
Let $\Ups$ be a finite-dimensional spatiotemporal state obtained from density
operators and CPTP maps by the left, right, mixed-branch, doubled, or convexly
mixed recursions of Section~\ref{sec:recursion}.  Then
\begin{equation}\label{eq:temporal-contraction}
 \|\Ups\|_\infty\le1,
 \qquad s_j(\Ups)\in[0,1],
 \qquad |\lambda_j(\Ups)|\le1.
\end{equation}
In particular, the spectrum of a general Kirkwood--Dirac state lies in the closed unit
disk.  If $\Ups$ is Hermitian, as for a Margenau--Hill or
L\"uders--von Neumann state, then $\lambda_j\in[-1,1]$.
\end{proposition}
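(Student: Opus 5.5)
The plan is to show that every building block of the recursions in Section~\ref{sec:recursion} has operator norm at most one. The bound then propagates through the temporal link product, because $\stlink$ is an ordinary operator product of padded factors. Submultiplicativity gives $\|N\stlink M\|_\infty\le\|N\otimes\I\|_\infty\|\I\otimes M\|_\infty=\|N\|_\infty\|M\|_\infty$, since padding by an identity does not change the operator norm.

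The base objects are handled as follows:
\begin{itemize}
\item $\|\varrho_{t_0}\|_\infty\le1$, since $\varrho_{t_0}\ge0$ and $\Tr\varrho_{t_0}=1$.
\item The swap $\mathbb S$ is unitary, so $\|(\varrho_{t_0}\otimes\I)\mathbb S\|_\infty\le1$. This covers the doubled base case in Theorem~\ref{thm:rec-doubled}.
\item Once $\|J[\cE]\|_\infty\le1$ is established, the doubled Jamio\l kowski operator $\JJ[\cE]=(J[\cE]\otimes\I)(\mathbb S\otimes\I)$ also satisfies $\|\JJ[\cE]\|_\infty\le1$.
\end{itemize}

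The key step, and the only nontrivial one, is the bound $\|J[\cE]\|_\infty\le1$ for a CPTP map $\cE$. Note that $J[\cE]$ is not positive; it is a partial-transpose-like variant of the Choi operator. I would use a Kraus decomposition $\cE(X)=\sum_k K_kXK_k^\dagger$ with $\sum_kK_k^\dagger K_k=\I$. A direct computation from the definition gives
\begin{equation*}
 J[\cE]=\sum_{i,j}\sum_k K_k|i\rangle\langle j|K_k^\dagger\otimes|j\rangle\langle i|
 =\sum_k(K_k\otimes\I)\,\mathbb S\,(K_k^\dagger\otimes\I)
 =\Big(\sum_k(\I\otimes K_k^\dagger)(K_k\otimes\I)\Big)\mathbb S ,
\end{equation*}
using $\mathbb S(K^\dagger\otimes\I)=(\I\otimes K^\dagger)\mathbb S$. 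Since $\mathbb S$ is unitary, it suffices to bound $\sum_kB_kA_k$ with $B_k=\I\otimes K_k^\dagger$ and $A_k=K_k\otimes\I$. The operator Cauchy--Schwarz (row--column) inequality gives
\begin{equation*}
\Big\|\sum_kB_kA_k\Big\|_\infty\le\Big\|\sum_kB_kB_k^\dagger\Big\|_\infty^{1/2}\Big\|\sum_kA_k^\dagger A_k\Big\|_\infty^{1/2}.
\end{equation*}
Here $\sum_kB_kB_k^\dagger=\I\otimes\sum_kK_k^\dagger K_k=\I$ and $\sum_kA_k^\dagger A_k=\sum_kK_k^\dagger K_k\otimes\I=\I$, so $\|J[\cE]\|_\infty\le1$. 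This uses trace preservation only, and the ordering $B_kA_k$ rather than $A_kB_k$ matters: the other ordering would bring in $\sum_kK_kK_k^\dagger$, which is bounded only for unital maps. The main care is in getting this ordering and the swap identity right.

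With these bounds the rest follows quickly:
\begin{itemize}
\item The left and right states (Eq.~\eqref{eq:recright} and its adjoint) are products of factors of norm at most one, and so are the mixed-branch states of the theorem on mixed-branch Kirkwood--Dirac temporal states. Those states are the same factors $J[\cE_k]$ and $\varrho_{t_0}$ multiplied in a different order.
\item The doubled state is a product of $\JJ[\cE_k]$ and the doubled base case, so the same argument applies.
\item For convex (or any probabilistic) mixtures, as in Theorem~\ref{theorem:UnificationSingle} and the Margenau--Hill and L\"uders--von Neumann states, the triangle inequality gives $\|\sum_i\mathbb P(i)\Ups_i\|_\infty\le\sum_i\mathbb P(i)=1$.
\item The singular values satisfy $s_j\le\|\Ups\|_\infty\le1$. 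Each eigenvalue obeys $|\lambda_j|\le\|\Ups\|_\infty$, because the spectral radius is bounded by any operator norm.
\item For Hermitian $\Ups$ the eigenvalues are real, hence lie in $[-1,1]$.
\end{itemize}
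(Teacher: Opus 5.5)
Your proof is correct and follows the paper's own argument. Each factor is a contraction: $\varrho_{t_0}$, $J[\cE]$, and the swap $\mathbb S$. The bound then passes through the padded link products, including the reordered mixed-branch products and the doubled link, by submultiplicativity. It extends to mixtures by the triangle inequality, and the spectral conclusions follow from $s_j,|\lambda_j|\le\|\Ups\|_\infty$. The one difference is that you prove the key bound $\|J[\cE]\|_\infty\le1$ yourself. You do this via $J[\cE]=\sum_k(K_k\otimes K_k^\dagger)\mathbb S$ and the row--column Cauchy--Schwarz inequality, which needs only $\sum_kK_k^\dagger K_k=\I$. The paper instead cites this bound from the literature.
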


\begin{proof}
Notice
$\|J[\cE]\|_\infty\le1$ for any CPTP map $\cE$ \cite{fullwood2026entropypseudodensitymatrix}.  Also $\|\varrho_{t_0}\|_\infty\le1$ for every density
operator.  Each padded temporal link in Section~\ref{sec:recursion} is an ordinary product of these contractions, so submultiplicativity gives
$\|\Ups\|_\infty\le1$.  The same dilation argument applies to the doubled
link, whose extra cap is itself a swap.  Taking adjoints preserves the norm,
and a convex mixture of contractions remains a contraction.  Finally,
$s_j(\Ups)\le\|\Ups\|_\infty$ and every eigenvalue obeys
$|\lambda_j(\Ups)|\le\|\Ups\|_\infty$.  Hermiticity makes the eigenvalues
real, yielding the interval $[-1,1]$.
\end{proof}

For general spatiotemporal state, Weyl's majorant inequality gives
\begin{equation}\label{eq:spectral-singular-majorization}
 \sum_{j=1}^k|\lambda_j|^\downarrow
 \le\sum_{j=1}^k s_j^\downarrow\quad(1\le k\le D),
 \qquad
 \sum_j|\lambda_j|\le\|\Ups\|_1,
\end{equation}
so the singular spectrum controls the absolute eigenvalue weight, with equality holds for normal $\Ups$.  
Define $Z_\lambda(\Ups):=\sum_j|\lambda_j|$, where eigenvalues are counted with algebraic multiplicity.
For a spatiotemporal state of rank $R$, Eq.~\eqref{eq:spectral-singular-majorization} implies
\begin{equation}\label{eq:norm-hierarchy}
 1\le Z_\lambda(\Ups)\le \|\Ups\|_1 \le R\le D.
\end{equation}
Moreover, the temporality has a decomposition
\begin{equation}\label{eq:temporality-decomposition}
 \mathfrak T(\Ups)
 =\underbrace{\frac{Z_\lambda(\Ups)-1}{2}}_{\displaystyle
 \mathfrak T_{\rm eig}(\Ups)}
 +\underbrace{\frac{\|\Ups\|_1-Z_\lambda(\Ups)}{2}}_{\displaystyle
 \mathfrak T_{\rm nn}(\Ups)}.
\end{equation}
Here $\mathfrak T_{\rm eig}$ is determined entirely by the complex spectrum,
whereas $\mathfrak T_{\rm nn}$ is a nonnormality contribution. 
Indeed, we have (See Section~\ref{app:normal} in Appendix for a proof),
\begin{equation}\label{eq:STstateNormal}
 \mathfrak T_{\rm nn}(\Ups)=0
 \quad\Longleftrightarrow\quad
 \Ups\ \text{is normal}.
\end{equation}
For an Margenau–Hill spatiotemporal state, $\mathfrak T_{\rm nn}(\Ups)=0$ and $\mathfrak T(\Ups)=\mathfrak T_{\rm eig}(\Ups)=\sum_{\lambda_j<0}|\lambda_j|$.

\subsubsection{Spectral entropy}

Let $\lambda_j$ be the eigenvalues of $\Ups$, counted with algebraic
multiplicity, and write
\begin{equation}
 \bm\lambda(\Ups)=(\lambda_1,\ldots,\lambda_D).
\end{equation}
The spectrum gives rise to three related objects.  They should not be
conflated: only the third is the entropy of a probability distribution.

\paragraph{Complex spectral entropy.}
Fix a branch cut of the logarithm that avoids the nonzero spectrum and use the
continuous convention $0\log0=0$.  Whenever
$C_{\rm spec}^{(q)}(\Ups)\ne0$, define
\begin{align}
 C_{\rm spec}^{(q)}(\Ups)&=\Tr\Ups^q=\sum_j\lambda_j^q,\\
 S_{\rm spec}^{(q)}(\Ups)&=\frac{1}{1-q}\log C_{\rm spec}^{(q)}(\Ups),\label{eq:spectral-renyi}\\
 T_{\rm spec}^{(q)}(\Ups)&=\frac{1}{q-1}\left[1-C_{\rm spec}^{(q)}(\Ups)\right],\label{eq:spectral-tsallis}\\
 S_{\rm spec}(\Ups)&=-\Tr(\Ups\log\Ups)=-\sum_j\lambda_j\log\lambda_j .
 \label{eq:spectral-vn}
\end{align}
For integer $q$, $C_{\rm spec}^{(q)}$ is branch independent.  For noninteger
$q$, and for the logarithmic quantities, a branch must be specified.  The
definitions extend to defective invertible operators through the holomorphic
functional calculus; a defective Jordan block at zero requires a separate
regularization.  Spectral entropies are similarity invariant and, with
compatible logarithm branches, additive under tensor products.  They are
generally complex and need not be monotone under quantum channels.  They should
therefore be regarded as spectral diagnostics, analogous to pseudo entropy for
a transition matrix \cite{Doi2023Pseudoentropy,Doi2023Timelike}.

\paragraph{Unnormalized spectral-magnitude entropy.}
Consider next the nonnegative spectral-magnitude vector
\begin{equation}
 \bm\lambda^{|\bullet|}(\Ups)
 :=(|\lambda_1|,\ldots,|\lambda_D|)
\end{equation}
and its moments
\begin{equation}\label{eq:spectral-magnitude-moments}
 C_{|\lambda|}^{(q)}(\Ups):=\sum_j|\lambda_j|^q.
\end{equation}
The direct von Neumann-type functional associated with this unnormalized
vector is
\begin{equation}\label{eq:spectral-magnitude-entropy}
 S_{|\lambda|}(\Ups)
 :=-\sum_j|\lambda_j|\log|\lambda_j|.
\end{equation}
For the dynamically admissible states of
Proposition~\ref{prop:temporal-spectral-disk}, $0\le|\lambda_j|\le1$, and hence
$S_{|\lambda|}$ is real and nonnegative.  More precisely,
\begin{equation}
 0\le S_{|\lambda|}(\Ups)\le \frac{D}{e}.
\end{equation}
It is similarity invariant and unchanged by taking the adjoint.  It is not a
probability entropy, because in general
$\sum_j|\lambda_j|\ne1$, and it is not additive.  If
\begin{equation}\label{eq:spectral-modulus}
 Z_\lambda(\Ups):=\sum_j|\lambda_j|,
\end{equation}
then
\begin{equation}\label{eq:raw-spectral-magnitude-product}
 S_{|\lambda|}(X\otimes Y)
 =Z_\lambda(Y)S_{|\lambda|}(X)
  +Z_\lambda(X)S_{|\lambda|}(Y).
\end{equation}
For a unit-trace operator, the triangle inequality gives
$1\le Z_\lambda(\Ups)$; equality does not imply positivity unless additional
conditions, such as Hermiticity, are imposed.

\paragraph{Normalized spectral-magnitude entropy.}
The Euclidean normalization
$\bm\lambda^{|\bullet|}/\|\bm\lambda^{|\bullet|}\|_2$ does \emph{not} define
a probability distribution.  The correct normalization is the $\ell_1$ norm:
\begin{equation}\label{eq:normalized-spectral-magnitude}
 p_j^{|\lambda|}(\Ups)
 :=\frac{|\lambda_j|}{Z_\lambda(\Ups)},
 \qquad \sum_jp_j^{|\lambda|}=1.
\end{equation}
We define the normalized spectral-magnitude entropy and its R\'enyi and Tsallis
extensions by
\begin{align}
 \widehat S_{|\lambda|}(\Ups)
 &:=-\sum_jp_j^{|\lambda|}\log p_j^{|\lambda|},
 \label{eq:normalized-spectral-magnitude-entropy}\\
 \widehat S_{|\lambda|}^{(q)}(\Ups)
 &:=\frac{1}{1-q}\log\sum_j(p_j^{|\lambda|})^q,
 \label{eq:normalized-spectral-magnitude-renyi}\\
 \widehat T_{|\lambda|}^{(q)}(\Ups)
 &:=\frac{1}{q-1}\left[1-\sum_j(p_j^{|\lambda|})^q\right].
 \label{eq:normalized-spectral-magnitude-tsallis}
\end{align}
If $r=\#\{j:\lambda_j\ne0\}$, then
\begin{equation}
 0\le\widehat S_{|\lambda|}(\Ups)\le\log r,
\end{equation}
with equality on the right exactly when the nonzero eigenvalues have equal
modulus.  This entropy is real, similarity invariant, invariant under
adjunction, invariant under nonzero scalar rescaling, and additive under tensor
products.  The raw and normalized quantities satisfy
\begin{equation}\label{eq:raw-normalized-spectral-relation}
 S_{|\lambda|}(\Ups)
 =Z_\lambda(\Ups)
 \left[\widehat S_{|\lambda|}(\Ups)-\log Z_\lambda(\Ups)\right].
\end{equation}
When $\Ups=\rho\ge0$ and $\Tr\rho=1$, one has $Z_\lambda=1$ and all three
von Neumann-type expressions reduce to the ordinary entropy $S(\rho)$.

The normalized spectral-magnitude entropy is nevertheless not the same as the
SVD entropy introduced below.  For a nonnormal operator, the moduli
$|\lambda_j|$ are generally different from its singular values $s_j$.  They
coincide, including multiplicities, when $\Ups$ is normal.

\paragraph{Mutual-information diagnostics.}
For a bipartition $\sA|\sB$, one may define the spectral mutual entropy
\begin{equation}\label{eq:spectral-mi}
 I_{\rm spec}(\sA:\sB)_\Ups
 :=S_{\rm spec}(\Ups_{\sA})+S_{\rm spec}(\Ups_{\sB})
 -S_{\rm spec}(\Ups).
\end{equation}
Likewise, one may introduce the normalized spectral-magnitude diagnostic
\begin{equation}\label{eq:normalized-spectral-magnitude-mi}
 \widehat I_{|\lambda|}(\sA:\sB)_\Ups
 :=\widehat S_{|\lambda|}(\Ups_{\sA})
  +\widehat S_{|\lambda|}(\Ups_{\sB})
  -\widehat S_{|\lambda|}(\Ups).
\end{equation}
Neither expression is a mutual information in the information-theoretic sense.
The spectrum of a partial trace is not a marginal of the global spectrum;
therefore subadditivity and data processing do not apply.  In particular,
$I_{\rm spec}$ can be complex or negative, while
$\widehat I_{|\lambda|}$ is real but can still be negative.  For example,
\begin{equation}
 \Ups=\frac12\diag(1,1,1,-1)
\end{equation}
has $\Ups_{\sA}=\Ups_{\sB}=\diag(1,0)$.  Consequently
$\widehat S_{|\lambda|}(\Ups_{\sA})=
\widehat S_{|\lambda|}(\Ups_{\sB})=0$ and
$\widehat S_{|\lambda|}(\Ups)=\log4$, giving
$\widehat I_{|\lambda|}=-\log4$.

%===================
\subsubsection{SVD entropy}
\label{sec:svd-entropies}

For a nonnormal Kirkwood--Dirac spatiotemporal state, absolute eigenvalues $|\lambda_j|$ need not equal the singular values $s_j$.
The SVD therefore provides a genuinely different way to define entropy.
We call
\begin{equation}\label{eq:svd-entropy}
 S_{\rm SVD}(\Ups)
 :=-\Tr(|\Ups|\log|\Ups|)
 =-\sum_js_j\log s_j
\end{equation}
the \emph{SVD entropy}, with $0\log0=0$.  
We can also introduce the normalized spatiotemporal state $\omega_\Ups:=\frac{|\Ups|}{\|\Ups\|_1}$, from which we obtain the probability distribution
\begin{equation}\label{eq:normalized-svd-probability}
 p_j^{\rm SVD}:=\frac{s_j}{\|\Ups\|_1},
\end{equation}
and defines the \emph{normalized SVD entropy}
\begin{equation}\label{eq:normalized-svd-entropy}
 \widehat S_{\rm SVD}(\Ups)
 :=S(\omega_\Ups) = -\Tr ( \omega_{\Ups} \log \omega_{\Ups})
 =-\sum_jp_j^{\rm SVD}\log p_j^{\rm SVD}.
\end{equation}
Their exact relation is
\begin{equation}\label{eq:svd-normalized-svd-relation}
 S_{\rm SVD}(\Ups)
 =\|\Ups\|_1\left[\widehat S_{\rm SVD}(\Ups)-\log \|\Ups\|_1\right].
\end{equation}
For \(q>0\), \(q\neq1\), define
\begin{align}
    C_{\rm SVD}^{(q)}(\Ups)
    &:=\Tr|\Ups|^q
      =\sum_j s_j^q,
    \label{eq:svd-moments}\\
    \widehat C_{\rm SVD}^{(q)}(\Ups)
    &:=C_{\rm SVD}^{(q)}(\omega_\Ups)
      =\Tr\omega_\Ups^q
      =\frac{C_{\rm SVD}^{(q)}(\Ups)}{\|\Ups\|_1^q},
    \label{eq:normalized-svd-moments}\\
    S_{\rm SVD}^{(q)}(\Ups)
    &:=\frac{t_\Ups}{1-q}
      \log\frac{C_{\rm SVD}^{(q)}(\Ups)}{t_\Ups},
    \label{eq:svd-renyi}\\
    \widehat S_{\rm SVD}^{(q)}(\Ups)
    &:=\frac{1}{1-q}
      \log\widehat C_{\rm SVD}^{(q)}(\Ups),
    \label{eq:normalized-svd-renyi}\\
    T_{\rm SVD}^{(q)}(\Ups)
    &:=\frac{t_\Ups-C_{\rm SVD}^{(q)}(\Ups)}{q-1},
    \label{eq:svd-tsallis}\\
    \widehat T_{\rm SVD}^{(q)}(\Ups)
    &:=\frac{1-\widehat C_{\rm SVD}^{(q)}(\Ups)}{q-1}.
    \label{eq:normalized-svd-tsallis}
\end{align}
In $q\to 1$ limit, we obtain
\begin{align}
    \lim_{q\to1}S_{\rm SVD}^{(q)}(\Ups)
    &=
    \lim_{q\to1}T_{\rm SVD}^{(q)}(\Ups)
    =
    -\Tr\!\left(|\Ups|\log|\Ups|\right)
    =S_{\rm SVD}(\Ups),\\
    \lim_{q\to1}\widehat S_{\rm SVD}^{(q)}(\Ups)
    &=
    \lim_{q\to1}\widehat T_{\rm SVD}^{(q)}(\Ups)
    =
    -\Tr\!\left(\omega_\Ups\log\omega_\Ups\right)
    =\widehat S_{\rm SVD}(\Ups).
\end{align}
Although the distinction between these two types of entropy may appear insignificant at the level of the full spatiotemporal state, it becomes essential when defining mutual entropies and other related information-theoretic quantities. In particular, for reduced spatiotemporal states, whether normalization is performed before or after taking the reduction can lead to qualitatively different entropic properties. Thus, the choice of normalization prescription must be treated carefully when extending these entropy measures to multipartite spatiotemporal settings.
For a normal spatiotemporal state, such as a Margenau--Hill state, the singular values coincide with the absolute values of its eigenvalues. Hence, the SVD and spectral-magnitude entropies coincide.

\begin{remark}
    Besides the SVD entropy, one can also consider the \emph{Fullwood-Parzygnat (FP) entropy}, originally defined for Hermitian states over time as
\begin{align}
    S_{FP}(\Upsilon) = -\Tr\!\big(\Upsilon\log|\Upsilon|\big)
                     = -\sum_i \lambda_i \log|\lambda_i| ,
\end{align}
where $\lambda_i$ are the eigenvalues of $\Upsilon$ and $|\Upsilon|=\sqrt{\Upsilon^\dagger\Upsilon}$.
For Hermitian $\Upsilon$ this is the real part of complex spectral entropy, $S_{FP}(\Upsilon)=\mathrm{Re}\,[-\Tr(\Upsilon\log\Upsilon)]$, the
imaginary part being $\mp\pi\sum_{\lambda_i<0}|\lambda_i|$. For non-Hermitian $\Upsilon$ the
two expressions above no longer agree, since $|\Upsilon|$ is built from the singular values rather than the eigenvalues. Therefore, we take the spectral form
$S_{FP}(\Upsilon):=-\sum_i\lambda_i\log|\lambda_i|$ as the definition, which is in general complex. Such entropy is normalized automatically by $\Tr\Upsilon=\sum_i\lambda_i=1$.

The difference between FP entropy and SVD entropy is: the SVD entropy is real and non-negative for an arbitrary operator, while the FP entropy can be complex for non-Hermitian operator. The two coincide, and reduce to the von Neumann entropy, if and only if $\Upsilon\geq0$.
\end{remark}

\subsubsection{SVD  mutual entropy}
\label{sec:svd-mutual-entropies}

Let $\Ups_{\sA\sB}$ be a bipartite spatiotemporal state, with reduced
operators
  $  \Ups_{\sA}:=\Tr_{\sB}\Ups_{\sA\sB}$ and
    $\Ups_{\sB}:=\Tr_{\sA}\Ups_{\sA\sB}$.
There are two inequivalent normalization prescriptions, which must be distinguished carefully.

\paragraph{Reduction followed by normalization.}
The direct SVD mutual-entropy differences are
\begin{align}
    I_{\rm SVD}^{\rm red}(\sA:\sB)_\Ups
    &:=
    S_{\rm SVD}(\Ups_{\sA})
    +S_{\rm SVD}(\Ups_{\sB})
    -S_{\rm SVD}(\Ups_{\sA\sB}),
    \label{eq:raw-svd-mutual-entropy}\\
    \widehat I_{\rm SVD}^{\rm red}(\sA:\sB)_\Ups
    &:=
    \widehat S_{\rm SVD}(\Ups_{\sA})
    +\widehat S_{\rm SVD}(\Ups_{\sB})
    -\widehat S_{\rm SVD}(\Ups_{\sA\sB}).
    \label{eq:reduced-normalized-svd-mutual-entropy}
\end{align}
Their R\'enyi analogues are
\begin{align}
    I_{\rm SVD}^{(q),{\rm red}}(\sA:\sB)_\Ups
    &:=
    S_{\rm SVD}^{(q)}(\Ups_{\sA})
    +S_{\rm SVD}^{(q)}(\Ups_{\sB})
    -S_{\rm SVD}^{(q)}(\Ups_{\sA\sB}),
    \label{eq:raw-svd-renyi-mutual-entropy}\\
    \widehat I_{\rm SVD}^{(q),{\rm red}}(\sA:\sB)_\Ups
    &:=
    \widehat S_{\rm SVD}^{(q)}(\Ups_{\sA})
    +\widehat S_{\rm SVD}^{(q)}(\Ups_{\sB})
    -\widehat S_{\rm SVD}^{(q)}(\Ups_{\sA\sB}).
    \label{eq:normalized-svd-renyi-mutual-entropy}
\end{align}
Analogous Tsallis differences may be defined by replacing
$S_{\rm SVD}^{(q)}$ and $\widehat S_{\rm SVD}^{(q)}$ with
$T_{\rm SVD}^{(q)}$ and $\widehat T_{\rm SVD}^{(q)}$, respectively.

These quantities are useful spectral diagnostics, but they are not mutual
informations in the information-theoretic sense. In particular, neither
positivity nor data processing follows from their definitions. The reason is
that
\begin{equation}\label{eq:magnitude-reduction-noncommuting}
    \frac{|\Ups_{\sA}|}{\|\Ups_{\sA}\|_1}
    \neq
    \Tr_{\sB}\!\left(
    \frac{|\Ups_{\sA\sB}|}{\|\Ups_{\sA\sB}\|_1}
    \right)
\end{equation}
in general. Thus the normalized magnitude of a reduced temporal operator is
not generally a marginal of the normalized global magnitude operator.

The unnormalized quantity has an additional difficulty. Since
\begin{equation}\label{eq:raw-svd-product-law}
    S_{\rm SVD}(X\otimes Y)
    =
    \|Y\|_1 S_{\rm SVD}(X)
    +\|X\|_1 S_{\rm SVD}(Y),
\end{equation}
it is not additive. For unit-trace product operators
$\Ups_{\sA\sB}=X_{\sA}\otimes Y_{\sB}$, one obtains
\begin{equation}
    I_{\rm SVD}^{\rm red}(\sA:\sB)_{X\otimes Y}
    =
    \bigl(1-\|Y\|_1\bigr)S_{\rm SVD}(X)
    +
    \bigl(1-\|X\|_1\bigr)S_{\rm SVD}(Y),
\end{equation}
which need not vanish. By contrast,
$\widehat I_{\rm SVD}^{\rm red}$ vanishes on product operators because the
normalized SVD entropy is additive, but it can still be negative for
correlated operators.

For example, consider the unit-trace Hermitian contraction
\begin{equation}\label{eq:negative-normalized-svd-mi-example}
    \Ups_{\sA\sB}
    =\frac{1}{2}\operatorname{diag}(1,1,1,-1).
\end{equation}
Its reduced operators are
\begin{equation}
    \Ups_{\sA}=\Ups_{\sB}=\operatorname{diag}(1,0),
\end{equation}
and hence
\begin{equation}
    \widehat S_{\rm SVD}(\Ups_{\sA})
    =
    \widehat S_{\rm SVD}(\Ups_{\sB})=0.
\end{equation}
On the other hand, all four normalized singular values of
$\Ups_{\sA\sB}$ are equal to $1/4$, so that
\begin{equation}
    \widehat S_{\rm SVD}(\Ups_{\sA\sB})=\log4.
\end{equation}
Consequently,
\begin{equation}
    \widehat I_{\rm SVD}^{\rm red}(\sA:\sB)_\Ups=-\log4.
\end{equation}

\paragraph{Global normalization followed by reduction.}
A genuine normalized SVD mutual information is obtained by first defining
the single global density operator
\begin{equation}
    \omega_\Ups
    :=\frac{|\Ups_{\sA\sB}|}{\|\Ups_{\sA\sB}\|_1},
\end{equation}
and only afterward taking its marginals:
\begin{equation}\label{eq:global-magnitude-marginals}
    \omega_{\Ups,\sA}:=\Tr_{\sB}\omega_\Ups,
    \qquad
    \omega_{\Ups,\sB}:=\Tr_{\sA}\omega_\Ups.
\end{equation}
Define the regional normalized SVD entropies by
\begin{equation}\label{eq:regional-normalized-svd-entropy}
    \widehat S_{\rm SVD}(\sA)_\Ups
    :=S(\omega_{\Ups,\sA}),
    \qquad
    \widehat S_{\rm SVD}(\sB)_\Ups
    :=S(\omega_{\Ups,\sB}),
\end{equation}
and
\begin{equation}
    \widehat S_{\rm SVD}(\sA\sB)_\Ups
    :=S(\omega_\Ups)
    =\widehat S_{\rm SVD}(\Ups_{\sA\sB}).
\end{equation}
The normalized SVD mutual information is then
\begin{align}
    \widehat I_{\rm SVD}(\sA:\sB)_\Ups
    &:=
    \widehat S_{\rm SVD}(\sA)_\Ups
    +\widehat S_{\rm SVD}(\sB)_\Ups
    -\widehat S_{\rm SVD}(\sA\sB)_\Ups
    \label{eq:normalized-svd-mutual-information}\\
    &=
    S(\omega_{\Ups,\sA})
    +S(\omega_{\Ups,\sB})
    -S(\omega_\Ups)
    \nonumber\\
    &=
    D\!\left(
        \omega_\Ups
        \middle\|
        \omega_{\Ups,\sA}\otimes\omega_{\Ups,\sB}
    \right).
    \label{eq:normalized-svd-relative-entropy}
\end{align}
Because this is the ordinary quantum mutual information of the density
operator $\omega_\Ups$, it satisfies
\begin{equation}
    0\leq
    \widehat I_{\rm SVD}(\sA:\sB)_\Ups
    \leq
    2\min\{\log d_{\sA},\log d_{\sB}\}.
\end{equation}
Moreover,
\begin{equation}
    \widehat I_{\rm SVD}(\sA:\sB)_\Ups=0
    \quad\Longleftrightarrow\quad
    \omega_\Ups
    =
    \omega_{\Ups,\sA}\otimes\omega_{\Ups,\sB}.
\end{equation}
It is invariant under nonzero rescaling of $\Ups$ and under local
left--right unitary transformations
\begin{equation}
    \Ups\longmapsto
    (U_{\sA}\otimes U_{\sB})\,
    \Ups\,
    (V_{\sA}\otimes V_{\sB}),
\end{equation}
and it obeys data processing under local CPTP maps acting on
$\omega_\Ups$. Notice, however, that such a CPTP transformation of
$\omega_\Ups$ need not arise from the same linear transformation of the
original spatiotemporal operator $\Ups$, because the map
\begin{equation}
    \Ups\longmapsto\frac{|\Ups|}{\|\Ups\|_1}
\end{equation}
is nonlinear.

For three regions, one may similarly define the normalized SVD conditional
mutual information
\begin{align}
    \widehat I_{\rm SVD}(\sA:\sC|\sB)_\Ups
    &:=
    S(\omega_{\Ups,\sA\sB})
    +S(\omega_{\Ups,\sB\sC})
    -S(\omega_{\Ups,\sB})
    -S(\omega_{\Ups,\sA\sB\sC})
    \nonumber\\
    &\geq0,
    \label{eq:normalized-svd-conditional-mi}
\end{align}
where the inequality follows from strong subadditivity.

If $\Ups$ is a positive density operator, then $\omega_\Ups=\Ups$, and
$\widehat I_{\rm SVD}$ reduces to the ordinary quantum mutual information.
If $\Ups$ is normal, as for a Margenau--Hill state, then
\begin{equation}
    |\Ups|
    =
    \sum_j|\lambda_j|
    |j\rangle\langle j|,
\end{equation}
so the normalized SVD mutual information coincides with the mutual
information of the normalized spectral-magnitude state. Nevertheless,
Eq.~\eqref{eq:magnitude-reduction-noncommuting} generally remains strict.
Thus, even for Margenau--Hill states, global normalization followed by
reduction must be distinguished from reduction followed by normalization.

% %
\section{Conclusion and outlook}

We have developed an operator framework in which Wightman
correlators, spatiotemporal Kirkwood--Dirac quasiprobabilities, and spatiotemporal states are three representations of the same data.  The padded link product gives recursive constructions for left, right, mixed-branch, doubled, and many-fold states while retaining every sampled time as an explicit tensor factor.

Two complementary notions of temporal entanglement emerge.  Vectorization turns any spatiotemporal operator into a pure state on a doubled operator Hilbert space, so standard entanglement entropy, mutual information apply directly.  The intrinsic spectral approach instead exposes the nonpositive and nonnormal structure of the operator itself.  For memoryless spatiotemporal state we proved that the
operator norm is at most one and hence that every eigenvalue lies in the unit disk.  The temporality separates exactly into a contribution visible in the eigenvalue moduli and a nonnormal contribution, the latter vanishing if and only if the temporal state is normal.  This yields a precise distinction between complex spectral entropy, spectral-magnitude entropy, SVD entropy, and their normalized counterparts.  On the Margenau--Hill case the spectral-magnitude and SVD hierarchies coincide; for general Kirkwood--Dirac states they do not. 

Several directions remain open. Extending the framework to infinite-dimensional systems is an important direction for future work, particularly for applications to quantum field theory. 
The spatiotemporal framework may also provide a useful approach to quantum causal inference. Moreover, given the notion of quantum states in the temporal direction, it is natural to ask whether the concept of quantum phases can also be extended to time. In particular, it would be interesting to investigate whether time crystals can be characterized within the spatiotemporal-state framework. We leave these questions for future work.

\begin{acknowledgments}
Z.J. is supported by by the National Natural Science Foundation of China (Grant
No. 12605015), Start-up Grant of Central South University (Grant No. 502045031) and by the Frontier Interdisciplinary Direction ``Quantum Technologies'' Project of Central South University (Grant No. 506010805). Z. J. would like to thank Dagomir Kaszlikowski, Kavan Modi and Kelvin Onggadinata for discussions on related topics. 
\end{acknowledgments}

\appendix

%%%%============

\section{Some proofs and technical results}
\label{app:proofs}

In this section, we provide detailed proofs for several results and some technical discussion.
Recall that duality identity for the Schatten 1-norm reads
\begin{equation}\label{eq:DualHolder}
    \|X\|_1 = \sup_{\|Z\|_{\infty}\leq 1} |\Tr Z^{\dagger} X|
    =\sup_{\|Z\|_{\infty}\leq 1} \operatorname{Re}\Tr Z^{\dagger} X,
\end{equation}
we will use it in several proofs.

\subsection{Proof of Proposition~\ref{prop:temporality-properties}}
\begin{proof}
1-4 are straightforward.

5. For both Hermitian and non-Hermitian \(\Ups\), by the reverse triangle inequality, for every positive semidefinite \(\rho\) with \(\Tr\rho=1\),
\begin{equation}
  \|\Ups-\rho\|_1 \ge \|\Ups\|_1-\|\rho\|_1 = \|\Ups\|_1-1 =: 2\mathfrak{T}(\Ups).  
\end{equation}
Thus
\begin{equation}
2\mathfrak{T}(\Ups) \le \inf_{\rho\ge0,\,\Tr\rho=1}\|\Ups-\rho\|_1.  
\end{equation}

For Hermitian \(\Ups\) with Hahn--Jordan decomposition $\Ups=\Ups_+-\Ups_-$, we have
\(\Tr \Ups=\Tr\Ups_+-\Tr\Ups_-=1\), and
\(\|\Ups\|_1-1=2\Tr\Ups_-=:2r\). Then $\Tr\Ups_+=1+r$ and $\|\Ups\|_1
 =\Tr\Ups_++\Tr\Ups_-
 =1+2r$.
Define the density operator
\begin{equation}
 \rho_*:=\frac{\Ups_+}{\Tr\Ups_+}
 =\frac{\Ups_+}{1+r}.
\end{equation}
Since the supports of $\Ups_+$ and $\Ups_-$ are orthogonal,
\begin{equation}
 \Ups-\rho_*
 =
 \frac{r}{1+r}\Ups_+-\Ups_-
\end{equation}
is itself a Hahn--Jordan decomposition. Consequently,
\begin{align}
 \|\Ups-\rho_*\|_1
 &=
 \frac{r}{1+r}\Tr\Ups_+
 +\Tr\Ups_-
 =2r
 =2\mathfrak{T}(\Ups).
\end{align}
The infimum is therefore attained by
$\rho_*=\Ups_+/\Tr\Ups_+$, which proves
Eq.~\eqref{eq:temporality-jordan}.

6. Let $\Phi$ be a CPTP map. Its adjoint $\Phi^\dagger$ is unital and
completely positive (UCP), and is therefore contractive with respect to the
operator norm:
\begin{equation}\label{eq:UPCnorm}
    \|\Phi^\dagger(X)\|_{\infty}\leq \|X\|_{\infty}.
\end{equation}
To see this, we use Choi's generalization \cite{choi1980some} of Kadison's
inequality \cite{kadison1952generalized}, which states that for any unital positive map
$\Psi$,
\begin{equation}
    \Psi(A^\dagger A)\geq \Psi(A^\dagger)\Psi(A).
\end{equation}
Let $\|X\|_\infty=\lambda$. Then
\begin{equation}
    X^\dagger X\leq \lambda^2\mathds{1}.
\end{equation}
By the positivity and unitality of $\Phi^\dagger$,
\begin{equation}
    \Phi^\dagger(X^\dagger X)
    \leq \Phi^\dagger(\lambda^2\mathds{1})
    =\lambda^2\mathds{1}.
\end{equation}
Applying Kadison's inequality to $\Phi^\dagger$, we obtain
\begin{equation}
    \Phi^\dagger(X)^\dagger\Phi^\dagger(X)
    \leq \Phi^\dagger(X^\dagger X)
    \leq \lambda^2\mathds{1}.
\end{equation}
Taking the operator norm gives
\begin{equation}
    \|\Phi^\dagger(X)\|_\infty^2
    =\|\Phi^\dagger(X)^\dagger\Phi^\dagger(X)\|_\infty
    \leq \lambda^2
    =\|X\|_\infty^2,
\end{equation}
which proves Eq.~\eqref{eq:UPCnorm}.

We now use this result to bound $\|\Phi(X)\|_1$. By the duality between the
trace norm and the operator norm,
\begin{align}
    \|\Phi(X)\|_1
    &=\max_{\|Z\|_\infty\leq 1}
    \left|\Tr\!\left[Z^\dagger\Phi(X)\right]\right|\\
    &=\max_{\|Z\|_\infty\leq 1}
    \left|\Tr\!\left[\Phi^\dagger(Z^\dagger)X\right]\right|\\
    &=\max_{\|Z\|_\infty\leq 1}
    \left|\Tr\!\left[\Phi^\dagger(Z)^\dagger X\right]\right|.
\end{align}
For every $Z$ satisfying $\|Z\|_\infty\leq1$,
Eq.~\eqref{eq:UPCnorm} implies
\begin{equation}
    \|\Phi^\dagger(Z)\|_\infty\leq1.
\end{equation}
Therefore,
\begin{equation}
    \left\{\Phi^\dagger(Z):\|Z\|_\infty\leq1\right\}
    \subseteq
    \left\{W:\|W\|_\infty\leq1\right\}.
\end{equation}
It follows that
\begin{align}
    \|\Phi(X)\|_1
    &\leq
    \max_{\|W\|_\infty\leq1}
    \left|\Tr\!\left[W^\dagger X\right]\right|\\
    &=\|X\|_1.
\end{align}
Hence every CPTP map is contractive with respect to the trace norm:
\begin{equation}
    \|\Phi(X)\|_1\leq\|X\|_1.
\end{equation}
Therefore, the temporality is non-increasing under CPTP maps.

7. This is clear from the fact that norm of all frames are bounded by 1.
\end{proof}

\subsection{Proof of Eq.~\eqref{eq:kms-petz-reflection}}

Since $\cE_t(\varrho)=\varrho$, the Petz recovery map associated with
$\varrho$ and $\cE_t$ is
\begin{align}
\mathcal R_{\varrho,\cE_t}(X)
&=
\varrho^{1/2}
\cE_t^\dagger\!\left(
\varrho^{-1/2}X\varrho^{-1/2}
\right)
\varrho^{1/2}.
\end{align}
For the unitary channel
\begin{equation}
    \cE_t(X)=U_t X U_t^\dagger,
\end{equation}
its Hilbert--Schmidt adjoint is
\begin{equation}
    \cE_t^\dagger(X)=U_t^\dagger XU_t.
\end{equation}
Hence
\begin{align}
\mathcal R_{\varrho,\cE_t}(X)
&=
\varrho^{1/2}
U_t^\dagger
\varrho^{-1/2}
X
\varrho^{-1/2}
U_t
\varrho^{1/2}.
\end{align}
Using $[U_t,\varrho]=0$, and therefore
$[U_t,\varrho^\alpha]=0$ for every real $\alpha$, we obtain
\begin{align}
\mathcal R_{\varrho,\cE_t}(X)
&=
U_t^\dagger
\varrho^{1/2}\varrho^{-1/2}
X
\varrho^{-1/2}\varrho^{1/2}
U_t
\nonumber\\
&=
U_t^\dagger XU_t
=
\cE_{-t}(X).
\end{align}

We now consider the temporal operator
\begin{equation}
\Ups^{(s)}[\varrho,\cE_t]
=
(U_t\otimes I)
(\varrho^{1-s}\otimes\varrho^s)
\mathbb S
(U_t^\dagger\otimes I).
\end{equation}
Using
\begin{equation}
    \mathbb S(A\otimes B)\mathbb S
    =
    B\otimes A,
\end{equation}
together with
\begin{equation}
    \mathbb S(U_t\otimes I)
    =
    (I\otimes U_t)\mathbb S,
\end{equation}
we find
\begin{align}
\mathbb S\,
\Ups^{(s)}[\varrho,\cE_t]\,
\mathbb S
&=
\mathbb S
(U_t\otimes I)
(\varrho^{1-s}\otimes\varrho^s)
\mathbb S
(U_t^\dagger\otimes I)
\mathbb S
\nonumber\\
&=
(I\otimes U_t)
(\varrho^s\otimes\varrho^{1-s})
(U_t^\dagger\otimes I)
\mathbb S
\nonumber\\
&=
(\varrho^sU_t^\dagger
 \otimes
 U_t\varrho^{1-s})
\mathbb S.
\end{align}
Since $[U_t,\varrho]=0$, this becomes
\begin{equation}\label{eq:swap-reflected-form}
\mathbb S\,
\Ups^{(s)}[\varrho,\cE_t]\,
\mathbb S
=
(U_t^\dagger\varrho^s
 \otimes
 \varrho^{1-s}U_t)
\mathbb S.
\end{equation}
On the other hand,
\begin{align}
\Ups^{(1-s)}[\varrho,\cE_{-t}]
&=
(U_t^\dagger\otimes I)
(\varrho^s\otimes\varrho^{1-s})
\mathbb S
(U_t\otimes I)
\nonumber\\
&=
(U_t^\dagger\otimes I)
(\varrho^s\otimes\varrho^{1-s})
(I\otimes U_t)
\mathbb S
\nonumber\\
&=
(U_t^\dagger\varrho^s
 \otimes
 \varrho^{1-s}U_t)
\mathbb S.
\end{align}
Comparing with Eq.~\eqref{eq:swap-reflected-form}, we conclude that
\begin{equation}
    \mathbb S\,
    \Ups^{(s)}[\varrho,\cE_t]\,
    \mathbb S
    =
    \Ups^{(1-s)}[\varrho,\cE_{-t}].
\end{equation}

\subsection{Visible space of $\cM_{\varrho,\mathbb{P}}$ in Eq.~\eqref{eq:visible-operator-spaces}}
\label{app:MrhoP}
We now give more details about two-time temporal state and its entanglement entropy.
Let
\begin{equation}
  \varrho=\sum_{a=1}^{q}\lambda_a\ket{a}\!\bra{a},
  \qquad
  \lambda_a>0,
\end{equation}
and let \(P=\sum_{a=1}^{q}\ket{a}\!\bra{a}\) be the support
projector of \(\varrho\).  We also write
$  Q:=\mathds{1}-P$.
Recall that
\begin{equation}
  \cM_{\mathbb P,\varrho}(X)
  :=
  \int_{[0,1]}
  \varrho^sX\varrho^{1-s}\,\mathbb P(s)ds,
\end{equation}
with
\begin{equation}
  p_0:=\mathbb P(\{0\}),
  \qquad
  p_1:=\mathbb P(\{1\}),
\end{equation}
and endpoint conventions
\begin{equation}
  \cM_{0,\varrho}(X)=X\varrho,
  \qquad
  \cM_{1,\varrho}(X)=\varrho X.
\end{equation}
Extend the eigenbasis of \(\varrho\) to an orthonormal basis of
\(\cH\), setting \(\lambda_a=0\) for \(a>q\).  The matrix units are
eigenoperators of $\cM_{\varrho,\mathbb{P}}$:
\begin{equation}\label{eq:modular-map-matrix-units}
  \cM_{\mathbb P,\varrho}(\ket a\!\bra b)
  =
  m_{ab}^{\mathbb P}\ket a\!\bra b,
\end{equation}
where
\begin{equation}\label{eq:modular-map-weights-nonfaithful}
  m_{ab}^{\mathbb P}
  =
  p_0\lambda_b+p_1\lambda_a
  +
  \int_{(0,1)}
  \lambda_a^s\lambda_b^{1-s}\,d\mathbb P(s).
\end{equation}
Consequently, \(\ket a\!\bra b\) belongs to
\(\operatorname{range}\cM_{\mathbb P,\varrho}\) precisely when
\(m_{ab}^{\mathbb P}>0\).
There are four possibilities:
\begin{equation}\label{eq:visible-matrix-unit-table}
\begin{array}{c|c|c}
 a & b & m_{ab}^{\mathbb P}>0 \\ \hline
 a\in\operatorname{supp}\varrho
   & b\in\operatorname{supp}\varrho
   & \text{always} \\[2pt]
 a\in\ker\varrho
   & b\in\operatorname{supp}\varrho
   & p_0>0 \\[2pt]
 a\in\operatorname{supp}\varrho
   & b\in\ker\varrho
   & p_1>0 \\[2pt]
 a\in\ker\varrho
   & b\in\ker\varrho
   & \text{never}
\end{array}
\end{equation}
Indeed, when \(a,b\in\operatorname{supp}\varrho\), both
\(\lambda_a\) and \(\lambda_b\) are strictly positive, so
\(m_{ab}^{\mathbb P}>0\).  When
\(a\in\ker\varrho\) and
\(b\in\operatorname{supp}\varrho\), the only possible nonzero
contribution in Eq.~\eqref{eq:modular-map-weights-nonfaithful} is
\(p_0\lambda_b\).  Similarly, when
\(a\in\operatorname{supp}\varrho\) and
\(b\in\ker\varrho\), the only possible nonzero contribution is
\(p_1\lambda_a\).  If \(a,b\in\ker\varrho\), every term vanishes.

Since \(\cM_{\mathbb P,\varrho}\) is diagonal in the matrix-unit
basis, its range is the span of the matrix units with nonzero weights.
It follows that
\begin{equation}\label{eq:visible-operator-spacesV}
  \operatorname{range}\cM_{\mathbb P,\varrho}
  =
  \begin{cases}
    P\BB(\cH)P,
      &p_0=p_1=0,\\[2pt]
    \BB(\cH)P,
      &p_0>0,\ p_1=0,\\[2pt]
    P\BB(\cH),
      &p_0=0,\ p_1>0,\\[2pt]
    P\BB(\cH)+\BB(\cH)P,
      &p_0>0,\ p_1>0.
  \end{cases}
\end{equation}
To compute the dimensions, use the orthogonal decomposition
\begin{equation}
  \cH=P\cH\oplus Q\cH,
  \qquad
  \dim P\cH=q,
  \qquad
  \dim Q\cH=d-q.
\end{equation}
Every operator \(X\in\BB(\cH)\) has the block form
\begin{equation}
  X=
  \begin{pmatrix}
    X_{PP} & X_{PQ}\\
    X_{QP} & X_{QQ}
  \end{pmatrix}.
\end{equation}
First,
\begin{equation}
  PXP=
  \begin{pmatrix}
    X_{PP} & 0\\
    0 & 0
  \end{pmatrix}.
\end{equation}
Since \(X_{PP}\) is a \(q\times q\) matrix,
\begin{equation}
  \dim P\BB(\cH)P=q^2.
\end{equation}
Next,
\begin{equation}
  XP=
  \begin{pmatrix}
    X_{PP} & 0\\
    X_{QP} & 0
  \end{pmatrix}.
\end{equation}
Such an operator has \(d\) rows and \(q\) unrestricted columns, and
therefore
\begin{equation}
  \dim\BB(\cH)P=dq.
\end{equation}
Similarly,
\begin{equation}
  PX=
  \begin{pmatrix}
    X_{PP} & X_{PQ}\\
    0 & 0
  \end{pmatrix}
\end{equation}
has \(q\) unrestricted rows and \(d\) columns, so
\begin{equation}
  \dim P\BB(\cH)=qd.
\end{equation}
Finally,
\begin{equation}
  P\BB(\cH)+\BB(\cH)P
  =
  \left\{
  \begin{pmatrix}
    X_{PP} & X_{PQ}\\
    X_{QP} & 0
  \end{pmatrix}
  \right\}.
\end{equation}
The three unrestricted blocks have dimensions
\begin{equation}
  \dim X_{PP}=q^2,
  \qquad
  \dim X_{PQ}=q(d-q),
  \qquad
  \dim X_{QP}=(d-q)q.
\end{equation}
Hence
\begin{align}
  \dim\!\left(
    P\BB(\cH)+\BB(\cH)P
  \right)
  &=
  q^2+q(d-q)+(d-q)q\\
  &=
  2qd-q^2.
\end{align}
Equivalently,
\begin{align}
  \dim\!\left(
    P\BB(\cH)+\BB(\cH)P
  \right)
  &=
  \dim P\BB(\cH)
  +\dim\BB(\cH)P \nonumber\\
  &\quad
  -\dim\!\left(
    P\BB(\cH)\cap\BB(\cH)P
  \right)\\
  &=
  qd+qd-q^2,
\end{align}
because
\begin{equation}
  P\BB(\cH)\cap\BB(\cH)P
  =
  P\BB(\cH)P.
\end{equation}
Therefore,
\begin{equation}\label{eq:visible-space-dimensions}
  \dim\operatorname{range}\cM_{\mathbb P,\varrho}
  =
  \begin{cases}
    q^2,
      &p_0=p_1=0,\\[2pt]
    qd,
      &p_0>0,\ p_1=0,\\[2pt]
    qd,
      &p_0=0,\ p_1>0,\\[2pt]
    2qd-q^2,
      &p_0>0,\ p_1>0.
  \end{cases}
\end{equation}
The corresponding kernels are
\begin{equation}\label{eq:invisible-operator-spaces}
  \ker\cM_{\mathbb P,\varrho}
  =
  \begin{cases}
    \{X\in\BB(\cH):PXP=0\},
      &p_0=p_1=0,\\[2pt]
    \BB(\cH)Q,
      &p_0>0,\ p_1=0,\\[2pt]
    Q\BB(\cH),
      &p_0=0,\ p_1>0,\\[2pt]
    Q\BB(\cH)Q,
      &p_0>0,\ p_1>0.
  \end{cases}
\end{equation}

The temporal operator depends on the dynamics only through the
composition
\begin{equation}
  \cE\circ\cM_{\mathbb P,\varrho}.
\end{equation}
Thus two maps \(\cE\) and \(\cF\) generate the same two-time
temporal operator if and only if
\begin{equation}
  \cE(X)=\cF(X)
  \qquad
  \text{for every }
  X\in\operatorname{range}\cM_{\mathbb P,\varrho}.
\end{equation}
Accordingly, the invisible dynamics is determined by the operator
directions annihilated by
\(\cM_{\mathbb P,\varrho}\), rather than merely by the vector-space
orthogonal complement of \(\operatorname{supp}\varrho\).  In
particular, depending on the endpoint weights \(p_0\) and \(p_1\), the
cross-coherence sectors
\begin{equation}
  P\BB(\cH)Q
  \qquad\text{and}\qquad
  Q\BB(\cH)P
\end{equation}
may remain visible.

\subsection{Proof of Proposition~\ref{prop:TandR}}
\label{app:proofofTandR}

\begin{proposition}
For real vector space $\operatorname{Herm}(\sA\sB)$,  $\|\cdot\|_{\rm SEP}$ in Eq.~\eqref{eq:sep-base-norm} is a norm.
\end{proposition}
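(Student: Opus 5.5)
The plan is to treat $\|\cdot\|_{\rm SEP}$ as the Minkowski gauge of the convex hull $K:=\operatorname{conv}\{\pm\rho_{\sA}\otimes\tau_{\sB}\}$ and to check the norm axioms in order: finiteness, absolute homogeneity, the triangle inequality, and definiteness.

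First, finiteness. In finite dimensions the product states $\rho_{\sA}\otimes\tau_{\sB}$ span $\operatorname{Herm}(\sA\sB)$ over $\mathbb{R}$. Indeed, Hermitian operators on $\sA$ are real combinations of density operators (spectral decomposition), and likewise on $\sB$. Tensor products of such combinations span $\operatorname{Herm}(\sA)\otimes_{\mathbb R}\operatorname{Herm}(\sB)=\operatorname{Herm}(\sA\sB)$. Hence every $X\in\operatorname{Herm}(\sA\sB)$ admits at least one decomposition $X=\sum_j c_j\rho_j\otimes\tau_j$, so the infimum is over a nonempty set and $\|X\|_{\rm SEP}<\infty$. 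I will also note that a finite decomposition always suffices, by Carathéodory.

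Second, absolute homogeneity and the triangle inequality. For $\lambda\in\mathbb R$, multiplying every coefficient of a decomposition of $X$ by $\lambda$ gives a decomposition of $\lambda X$ with cost $|\lambda|\sum|c_j|$. This yields $\|\lambda X\|_{\rm SEP}\le|\lambda|\|X\|_{\rm SEP}$, and for $\lambda\neq0$ the reverse inequality follows by applying the same bound with $1/\lambda$. For the triangle inequality, take $\epsilon$-optimal decompositions of $X$ and $Y$ and concatenate them into a decomposition of $X+Y$. Its cost is at most $\|X\|_{\rm SEP}+\|Y\|_{\rm SEP}+2\epsilon$, and letting $\epsilon\to0$ gives the claim.

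Third, definiteness, which I expect to be the only point needing care. I will show $\|X\|_{\rm SEP}\ge\|X\|_1$. For any decomposition, $\|X\|_1\le\sum_j|c_j|\,\|\rho_j\otimes\tau_j\|_1=\sum_j|c_j|$, since product states have unit trace norm. Taking the infimum gives the bound, so $\|X\|_{\rm SEP}=0$ forces $\|X\|_1=0$ and hence $X=0$. The same estimate shows $\|\cdot\|_{\rm SEP}\ge\|\cdot\|_1$, consistent with $\mathfrak T\le\mathfrak R_{\rm SEP}$ in Proposition~\ref{prop:TandR}.

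Finally, I would remark that $K$ is compact, because the set of product states is compact. The infimum is therefore attained, and the unit ball of $\|\cdot\|_{\rm SEP}$ is exactly $K$. This is the form needed later for the witness duality in Eq.~\eqref{eq:sep-base-dual}.
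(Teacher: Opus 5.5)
Your proof is correct and follows essentially the same route as the paper: the bound $\|X\|_1\le\|X\|_{\rm SEP}$ gives definiteness, rescaling decompositions gives homogeneity, and concatenating $\varepsilon$-optimal decompositions gives the triangle inequality. The one addition is your explicit finiteness argument via the real span of product states, together with the compactness remark giving attainment and the unit ball $\operatorname{conv}(\mathcal{P}\cup-\mathcal{P})$. The paper asserts finiteness without proof, and later relies on both attainment and the unit-ball description in the proof of Proposition~\ref{prop:TandR}.
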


\begin{proof}

\emph{Nonnegativity and definiteness.}
Nonnegativity follows immediately from the definition.
For any admissible decomposition
\begin{equation}
X=\sum_jc_j\,\rho_j^{\sA}\otimes\tau_j^{\sB},
\end{equation}
the trace-norm triangle inequality gives
\begin{equation}
\begin{split}
\|X\|_1
&\leq
\sum_j|c_j|\,
\|\rho_j^{\sA}\otimes\tau_j^{\sB}\|_1
=
\sum_j|c_j|.
\end{split}
\end{equation}
Taking the infimum
over all admissible decompositions yields
\begin{equation}\label{eq:trace-below-sep}
\|X\|_1\leq\|X\|_{\rm SEP}.
\end{equation}
Consequently, if $\|X\|_{\rm SEP}=0$, then $\|X\|_1=0$, which implies
$X=0$. Conversely, the zero operator has norm zero by taking all
coefficients to vanish. Thus
$\|X\|_{\rm SEP}=0
\quad\Longleftrightarrow\quad
X=0$.

\emph{Absolute homogeneity.}
Let $\alpha\in\mathbb R$. Scaling any admissible decomposition of $X$ by
$\alpha$ gives an admissible decomposition of $\alpha X$, and therefore
\begin{equation}
\|\alpha X\|_{\rm SEP}
\leq
|\alpha|\,\|X\|_{\rm SEP}.
\end{equation}
If $\alpha\neq0$, applying the same inequality to
$X=\alpha^{-1}(\alpha X)$ gives
\begin{equation}
\|X\|_{\rm SEP}
\leq
\frac{1}{|\alpha|}\|\alpha X\|_{\rm SEP}.
\end{equation}
Hence
\begin{equation}
\|\alpha X\|_{\rm SEP}
=
|\alpha|\,\|X\|_{\rm SEP}.
\end{equation}
The result is immediate when $\alpha=0$.

\emph{Triangle inequality.}
Let $X,Y\in\operatorname{Herm}(\sA\sB)$ and let
$\varepsilon>0$. By the definition of the infimum, there exist admissible
decompositions
\begin{equation}
X=\sum_jc_j\omega_j,
\qquad
Y=\sum_kd_k\eta_k,
\end{equation}
where $\omega_j,\eta_k$ are product density operators, such that
\begin{equation}
\sum_j|c_j|
\leq
\|X\|_{\rm SEP}+\frac{\varepsilon}{2},
\qquad
\sum_k|d_k|
\leq
\|Y\|_{\rm SEP}+\frac{\varepsilon}{2}.
\end{equation}
Concatenating the two decompositions gives
\begin{equation}
X+Y
=
\sum_jc_j\omega_j+\sum_kd_k\eta_k.
\end{equation}
Therefore,
\begin{equation}
\begin{split}
\|X+Y\|_{\rm SEP}
&\leq
\sum_j|c_j|+\sum_k|d_k|\\
&\leq
\|X\|_{\rm SEP}
+\|Y\|_{\rm SEP}
+\varepsilon.
\end{split}
\end{equation}
Since $\varepsilon>0$ is arbitrary,
\begin{equation}
\|X+Y\|_{\rm SEP}
\leq
\|X\|_{\rm SEP}
+\|Y\|_{\rm SEP}.
\end{equation}

Thus $\|\cdot\|_{\rm SEP}$ is finite, positive definite, absolutely
homogeneous, and subadditive. It is therefore a norm on the real vector space
$\operatorname{Herm}(\sA\sB)$.
\end{proof}

The following is the proof of Proposition~\ref{prop:TandR}.
\begin{proof}
1. Let
   \begin{equation}
   \Upsilon=(1+s)\sigma_+-s\sigma_-,
   \qquad
   \sigma_+,\sigma_-\in\Sep(\sA{:}\sB),
   \end{equation}
   be any feasible separable quasiprobabilistic mixture. Since $\|\sigma_\pm\|_1=1$, the triangle inequality therefore
   gives
   \begin{equation}
   \|\Upsilon\|_1
   \leq
   (1+s)\|\sigma_+\|_1+s\|\sigma_-\|_1
   =1+2s.
   \end{equation}
   Consequently,
   \begin{equation}
   \mathfrak T(\Upsilon)
   =
   \frac{\|\Upsilon\|_1-1}{2}
   \leq s.
   \end{equation}
   Moreover, partial transpose maps every separable state to another positive
   trace-one operator. Hence $\|\sigma_\pm^{T_{\sB}}\|_1=1$, and linearity of
   the partial transpose gives
   \begin{equation}
   \Upsilon^{T_{\sB}}
   =
   (1+s)\sigma_+^{T_{\sB}}
   -s\sigma_-^{T_{\sB}}.
   \end{equation}
   Another application of the triangle inequality yields
   \begin{equation}
   \|\Upsilon^{T_{\sB}}\|_1
   \leq
   (1+s)\|\sigma_+^{T_{\sB}}\|_1
   +s\|\sigma_-^{T_{\sB}}\|_1
   =1+2s.
   \end{equation}
   Therefore,
   \begin{equation}
   \fN_{\rm PT}(\Upsilon)
   =
   \frac{\|\Upsilon^{T_{\sB}}\|_1-1}{2}
   \leq s.
   \end{equation}
   Since these inequalities hold for every feasible quasiprobabilistic mixture, minimizing
   over $s$ proves Eq.~\eqref{eq:robustness-lower-bounds}

2. From 1, we have
   \begin{equation}
   \mathfrak R_{\rm SEP}(\Ups)\geq r=\mathfrak T(\Ups).
   \end{equation}
   Recall that $\mathfrak R_{\rm SEP}(\Ups)$ is the minimum $s\geq0$ for which
   there exist $\sigma_\pm\in\Sep(\sA{:}\sB)$ such that
   \begin{equation}\label{eq:sep-pseudomixture-proof}
   \Ups=(1+s)\sigma_+-s\sigma_-.
   \end{equation}
   Suppose first that
 $  \frac{\Ups_+}{1+r}\in\Sep(\sA{:}\sB) $
   and, when $r>0$,
   $\frac{\Ups_-}{r}\in\Sep(\sA{:}\sB)$.
   The Hahn--Jordan decomposition then gives the feasible separable
   quasiprobabilistic mixture
   \begin{equation}
   \Ups
   =
   (1+r)\frac{\Ups_+}{1+r}
   -r\frac{\Ups_-}{r}.
   \end{equation}
   Hence $\mathfrak R_{\rm SEP}(\Ups)\leq r$. Together with the preceding lower
   bound, this proves
   $\mathfrak R_{\rm SEP}(\Ups)
   =r=\mathfrak T(\Ups)$.

Conversely, suppose that
$\mathfrak R_{\rm SEP}(\Ups)=\fT(\Ups)=r$.
Since the optimization is attained in finite dimensions, there exist
$\sigma_\pm\in\Sep(\sA{:}\sB)$ such that
\begin{equation}
\Ups=(1+r)\sigma_+-r\sigma_-.
\end{equation}
Define
$A:=(1+r)\sigma_+$, $B:=r\sigma_-$.
Then $A,B\geq0$, $\Ups=A-B$, and
\begin{equation}
\|A-B\|_1
=
\|\Ups\|_1
=
1+2r
=
\Tr A+\Tr B.
\end{equation}
Equality in the trace-norm triangle inequality for positive operators holds
if and only if their supports are orthogonal. Thus $AB=0$.
From Eq.~\eqref{eq:DualHolder}, for Hermitian operator $X$, we have
\begin{equation}
\|X\|_1
=
\sup_{H\in\operatorname{Herm},\,\|H\|_\infty\leq1}
\Tr HX.
\end{equation}
Thus there is a Hermitian operator $-\I\leq H\leq\I$ such that
\begin{equation}
\|A-B\|_1=\Tr[H(A-B)].
\end{equation}
Equality with $\Tr A+\Tr B$ implies (recall the inner products of positive
operators are positive)
\begin{equation}
\Tr[(\I-H)A]=0,
\qquad
\Tr[(\I+H)B]=0.
\end{equation}
Notice $\Tr[(\I-H)A]=\|(\I-H)^{1/2}A^{1/2}\|_2^2=0$, $(\I-H)^{1/2}A^{1/2}=0$, $\operatorname{supp} A \subseteq \operatorname{ker} (\I-H)$; similarly, $\operatorname{supp} B \subseteq \operatorname{ker} (\I+H)$.
Thus $H=\I$ on $\operatorname{supp}A$ and $H=-\I$ on
$\operatorname{supp}B$, so these supports are orthogonal.
Therefore $A-B$ is itself the Hahn--Jordan decomposition of $\Ups$.
Uniqueness of that decomposition gives
$A=\Ups_+$, $B=\Ups_-$.
Consequently,
\begin{equation}
\sigma_+=\frac{\Ups_+}{1+r}\in\Sep(\sA{:}\sB)
\end{equation}
and, when $r>0$,
\begin{equation}
\sigma_-=\frac{\Ups_-}{r}\in\Sep(\sA{:}\sB).
\end{equation}
This proves the equivalence.

3.
Let
\begin{equation}
\cP
:=
\left\{
\rho_{\sA}\otimes\tau_{\sB}:
\rho_{\sA}\in\operatorname{Pos}_1(\sA),\
\tau_{\sB}\in\operatorname{Pos}_1(\sB)
\right\}.
\end{equation}
By the definition of the separable base norm, its closed unit ball is
\begin{equation}
\cC
= \left\{
X:\|X\|_{\rm SEP}\leq1
\right\} =
\operatorname{conv}
\left(
\cP\cup-\cP
\right).
\end{equation}
Indeed, every element of $\cC$ has a product-state decomposition whose
total absolute coefficient is at most one. Conversely, every decomposition
with total absolute coefficient at most one gives an element of $\cC$,
because $\cC$ is convex and contains the zero operator.

The dual unit ball is therefore the set
\begin{equation}
\cC^{\vee}
=
\left\{
W\in \operatorname{Herm}(\sA\sB):
\left|
\Tr\!\left[W(\rho_{\sA}\otimes\tau_{\sB})\right]
\right|
\leq1
\quad\text{for all }\rho_{\sA},\tau_{\sB}
\right\}.
\end{equation}
Finite-dimensional norm duality now gives
\begin{equation}
\|\Ups\|_{\rm SEP}
=
\max_{W\in\cC^{\vee}}\Tr(W\Ups),
\end{equation}
which is  Eq.~\eqref{eq:sep-base-dual}.
\end{proof}

\subsection{Proof of Eq.~\eqref{eq:STstateNormal}}
\label{app:normal}
\begin{proof}
We need to prove
\begin{equation}
    \|\Ups\|_1=\sum_j|\lambda_j|
    \quad\Longleftrightarrow\quad
    \Ups\ \text{is normal}.
\end{equation}
When $\Ups$ is normal, it is clear that the equality holds.
For the reverse direction,  take a Schur decomposition
\begin{equation}
    \Ups=UTU^\dagger,
\end{equation}
where \(U\) is unitary and \(T\) is upper triangular, with
\(T_{jj}=\lambda_j\). By unitary invariance of the trace norm $\|\Ups\|_1=\|T\|_1$.
Define the diagonal unitary
\begin{equation}
    D:=\operatorname{diag}(e^{i\theta_1},\ldots,e^{i\theta_D}),
    \qquad
    e^{i\theta_j}:=
    \begin{cases}
        \lambda_j/|\lambda_j|, & \lambda_j\neq0,\\
        1, & \lambda_j=0.
    \end{cases}
\end{equation}
Then
\begin{equation}
    \operatorname{Re}\Tr(D^\dagger T)
    =\sum_j\operatorname{Re}
      \bigl(e^{-i\theta_j}\lambda_j\bigr)
    =\sum_j|\lambda_j|
    =Z_\lambda(\Ups).
\end{equation}
Using trace-norm duality in Eq.~\eqref{eq:DualHolder},
and observing that \(\|D\|_{\infty}=1\), we obtain
\begin{equation}
    \|T\|_1\geq
    \operatorname{Re}\Tr(D^\dagger T)
    =Z_\lambda(\Ups).
\end{equation}
Suppose now that equality holds: $\|T\|_1=Z_\lambda(\Ups)$. Set $ A:=D^\dagger T.$
Since left multiplication by a unitary preserves singular values,
\begin{equation}
    \|A\|_1=\|T\|_1
    =\operatorname{Re}\Tr A.
\end{equation}
We claim that this equality implies \(A\geq0\). To see this, write the
polar decomposition
\begin{equation}
    A=V|A|,
    \qquad
    |A|=\sum_k s_k|k\rangle\langle k|,
\end{equation}
where \(V\) is a partial isometry and \(s_k\geq0\). Then
\begin{align}
    \operatorname{Re}\Tr A
    &=\operatorname{Re}\Tr(V|A|)
    =\sum_k s_k\,
      \operatorname{Re}\langle k|V|k\rangle
    \leq\sum_k s_k
    =\|A\|_1.
\end{align}
Equality requires
\(\operatorname{Re}\langle k|V|k\rangle=1\) for every \(k\) with
\(s_k>0\). Since \(\|V\|_\infty\leq1\), this is possible only if
\begin{equation}
    V|k\rangle=|k\rangle
    \qquad\text{whenever }s_k>0.
\end{equation}
Thus \(V\) acts as the identity on the support of \(|A|\), and hence
\begin{equation}
    A=V|A|=|A|\geq0.
\end{equation}
On the other hand, \(A=D^\dagger T\) is upper triangular. Since a positive
operator is Hermitian, \(A\) is both Hermitian and upper triangular.
Every upper-triangular Hermitian matrix is diagonal: its entries below the
diagonal vanish by triangularity, and Hermiticity then forces all entries
above the diagonal to vanish. Therefore \(A\) is diagonal. Since \(D\) is
also diagonal,
\begin{equation}
    T=DA
\end{equation}
is diagonal. It follows that
\begin{equation}
    \Ups=UTU^\dagger
\end{equation}
is unitarily diagonalizable and therefore normal.
\end{proof}

\section{Doubled Kirkwood--Dirac tensors and the superdensity operator}
\label{app:superDens}
The doubled Kirkwood--Dirac correlation function 
$T^{\boldsymbol\mu;\boldsymbol\nu}$ defines a positive semidefinite  doubled correlation tensor
\cite{cotler2018superdensity,jia2024spatiotemporal}.
Using the orthonormal kets
$|\boldsymbol\mu\rangle\!\rangle
=\bigotimes_k|\widehat\sigma_{\mu_k}\rangle\!\rangle$, the corresponding
superdensity operator is defined by \cite{cotler2018superdensity}
\begin{equation}\label{eq:superdensity-xi}
 \Xi:=\frac{1}{Z}\sum_{\boldsymbol\mu,\boldsymbol\nu}
 T^{\boldsymbol\mu;\boldsymbol\nu}
 |\boldsymbol\mu\rangle\!\rangle
 \langle\!\langle\boldsymbol\nu|,
 \qquad
 Z=\sum_{\boldsymbol\mu}
 T^{\boldsymbol\mu;\boldsymbol\mu}.
\end{equation}
Since $\Xi\geq 0$ and $\Tr\Xi=1$, the standard notions
of entropy and mutual information apply. For a spatiotemporal region
$\mathsf A$, we define
\begin{equation}
 S_\Xi(\mathsf A)
 :=S\!\left(\Tr_{\mathsf A^c}\Xi\right),
 \qquad
 I_\Xi(\mathsf A:\mathsf B)
 :=S_\Xi(\mathsf A)+S_\Xi(\mathsf B)-S_\Xi(\mathsf A\mathsf B).
\end{equation}
As $\Xi$ is a genuine density operator, these quantities satisfy the usual
entropy inequalities.

The vectorized doubled Kirkwood--Dirac spatiotemporal state discussed in Section~\ref{sec:STentropy} is indeed equivalent to the vectorization of the superdensity operator. However, the vectorized spatiotemporal state and the superdensity operator provide two conceptually distinct constructions for encoding spatiotemporal correlations:
\begin{enumerate}
\item
The vectorized spatiotemporal state is pure and can be constructed from any nonzero left, right, mixed-branch, doubled, or folded temporal operator.
In contrast, $\Xi$ is generally mixed and its construction requires a
positive doubled correlation tensor $T^{\boldsymbol\mu;\boldsymbol\nu}$. It can not be applied to single local space state, like left, right, mixed-brach spatiotemporal states.

\item
Vectorization preserves the linear relations among different temporal
branches in a transparent manner. For example, the relation
$\Upsl=\Upsr^\dagger$ is mapped to an antiunitary conjugation between the
corresponding vectorized kets. By contrast, the construction of $\Xi$
contracts the two branches into a doubled correlation tensor, and such relations are no longer manifest as simple relations between states.

\item
 In general, there is no
equality or universal ordering between $S_\Xi$, $I_\Xi$, $S_{\rm vec}$,
and $I_{\rm vec}$.
\end{enumerate}

We therefore take vectorization as the basic framework for defining
spatiotemporal entanglement. It applies uniformly to all temporal states and retains the relations among the left, right, and mixed temporal branches.
The superdensity operator provides a complementary positive construction,
particularly when the doubled correlation tensor
$T^{\boldsymbol\mu;\boldsymbol\nu}$ itself is the object of interest.

\bibliographystyle{apsrev4-1-title}
\bibliography{Jiabib}
\end{document}

%% file: MathPhysDef.tex
\usepackage{MathPhystitle}
\pdfoutput = 1
\usepackage[T1]{fontenc}

\usepackage{amsmath}
\usepackage{amssymb}
\usepackage{amsthm}
\usepackage{bm}
\usepackage{braket}
\usepackage{graphicx}
\usepackage{adjustbox}

\usepackage{mathdots}

\usepackage{multirow}
\usepackage{tabu}

\usepackage{booktabs} % for better lines
\usepackage{xcolor}   % for defining custom colors
\usepackage{hhline}   % for custom horizontal lines

\usepackage{orcidlink} %%%%%%oricd link%%%%%%%

\usepackage{tikz}
\usetikzlibrary{positioning,intersections}
\usetikzlibrary{calc}
\usetikzlibrary{shapes.geometric}
\usepackage{braids}
\usepackage{tikz-cd}
\usetikzlibrary{shapes.geometric,shapes.misc}
\usetikzlibrary{arrows,matrix,calc,scopes,decorations.markings,snakes}
\usetikzlibrary{arrows.meta}
\usetikzlibrary{intersections, patterns,fit} 
\usetikzlibrary{decorations.pathreplacing,angles,quotes}
\usetikzlibrary{tqft} %for TQFT graph
\definecolor{darkblue}{RGB}{40, 85, 120} % Darker blue for better contrast

\newtheorem{theorem}{Theorem}[section]

\newtheorem{proposition}[theorem]{Proposition}
\newtheorem{corollary}[theorem]{Corollary}

\theoremstyle{definition}
\newtheorem{definition}[theorem]{Definition}

\newtheorem{example}[theorem]{Example}

\theoremstyle{remark}
\newtheorem{remark}[theorem]{Remark}
\usepackage[bbgreekl]{mathbbol} %change mathbb style
\usepackage{dsfont}
\usepackage{microtype}       % Improves typography
\DeclareMathAlphabet{\mathcal}{OMS}{cmsy}{m}{n}
\usepackage{euscript}

\newcommand{\Rbb}{\mathbb{R}}

\newcommand\Tr{\operatorname{Tr}}

\newcommand{\id}{\operatorname{id}}

\newcommand{\cC}{\text{\usefont{OMS}{cmsy}{m}{n}C}}

\newcommand{\cE}{\text{\usefont{OMS}{cmsy}{m}{n}E}}
\newcommand{\cF}{\text{\usefont{OMS}{cmsy}{m}{n}F}}

\newcommand{\cH}{\text{\usefont{OMS}{cmsy}{m}{n}H}}
\newcommand{\cK}{\text{\usefont{OMS}{cmsy}{m}{n}K}}

\newcommand{\cM}{\text{\usefont{OMS}{cmsy}{m}{n}M}}

\newcommand{\cP}{\text{\usefont{OMS}{cmsy}{m}{n}P}}

\newcommand\sA           {\mathsf{A}}
\newcommand\sB           {\mathsf{B}}
\newcommand\sC           {\mathsf{C}}

\newcommand\sF          {\mathsf{F}}

\newcommand\sP         {\mathsf{P}}

\newcommand\sR         {\mathsf{R}}

\newcommand\fA           {\mathfrak{A}}

\newcommand\fN           {\mathfrak{N}}

\newcommand\fT           {\mathfrak{T}}

\newcommand\ttB         {\mathtt{B}}

\usepackage{mathtools,amssymb,scalerel}

\newcommand\biopencrossl{%
	\mathrel{\scalerel*{>\kern-.4\LMpt\joinrel\blacktriangleleft}{x}}}
\newcommand\biopencrossr{%
	\mathrel{\scalerel*{\blacktriangleright\joinrel\kern-.4\LMpt<}{x}}}

%% file: Jiabib.bib
@article{Audenaert2007entropy,
doi = {10.1088/1751-8113/40/28/S18},
url = {https://doi.org/10.1088/1751-8113/40/28/S18},
year = {2007},
month = {jun},
publisher = {},
volume = {40},
number = {28},
pages = {8127},
author = {Audenaert, Koenraad M R},
title = {A sharp continuity estimate for the von Neumann entropy},
journal = {Journal of Physics A: Mathematical and Theoretical}
}

@Article{Fannes1973entroy,
author={Fannes, M.},
title={A continuity property of the entropy density for spin lattice systems},
journal={Communications in Mathematical Physics},
year={1973},
month={Dec},
day={01},
volume={31},
number={4},
pages={291-294},
issn={1432-0916},
doi={10.1007/BF01646490},
url={https://doi.org/10.1007/BF01646490}
}

@article{Rovelli1991time,
  title = {Time in quantum gravity: An hypothesis},
  author = {Rovelli, Carlo},
  journal = {Phys. Rev. D},
  volume = {43},
  issue = {2},
  pages = {442--456},
  numpages = {0},
  year = {1991},
  month = {Jan},
  publisher = {American Physical Society},
  doi = {10.1103/PhysRevD.43.442},
  url = {https://link.aps.org/doi/10.1103/PhysRevD.43.442}
}

@article{Anderson2012time,
author = {Anderson, E.},
title = {Problem of time in quantum gravity},
journal = {Annalen der Physik},
volume = {524},
number = {12},
pages = {757-786},
doi = {https://doi.org/10.1002/andp.201200147},
url = {https://onlinelibrary.wiley.com/doi/abs/10.1002/andp.201200147},
eprint = {https://onlinelibrary.wiley.com/doi/pdf/10.1002/andp.201200147},
year = {2012}
}

@book{Muga2008Time,
  editor    = {Muga, J. G. and Sala Mayato, R. and Egusquiza, {\'I}. L.},
  title     = {Time in Quantum Mechanics},
  series    = {Lecture Notes in Physics},
  publisher = {Springer Berlin Heidelberg},
  address   = {Berlin, Heidelberg},
  year      = {2008},
  edition   = {2},
  pages     = {455},
  doi       = {10.1007/978-3-540-73473-4},
  isbn      = {978-3-540-73473-4}
}

@article{Page1983evolution,
  title = {Evolution without evolution: Dynamics described by stationary observables},
  author = {Page, Don N. and Wootters, William K.},
  journal = {Phys. Rev. D},
  volume = {27},
  issue = {12},
  pages = {2885--2892},
  numpages = {0},
  year = {1983},
  month = {Jun},
  publisher = {American Physical Society},
  doi = {10.1103/PhysRevD.27.2885},
  url = {https://link.aps.org/doi/10.1103/PhysRevD.27.2885}
}

@article{lie2025probingquantumstatesspacetime,
  title = {Probing Quantum States over Spacetime through Interferometry},
  author = {Lie, Seok Hyung and Kwon, Hyukjoon},
  journal = {Phys. Rev. Lett.},
  volume = {136},
  issue = {25},
  pages = {250201},
  numpages = {10},
  year = {2026},
  month = {Jun},
  publisher = {American Physical Society},
  doi = {10.1103/t6jg-z17r},
  url = {https://link.aps.org/doi/10.1103/t6jg-z17r},
   eprint={2507.19258},
  archivePrefix={arXiv},
  primaryClass={quant-ph} 
}

@article{Doi2023Timelike,
  author = {Doi, Kazuki and Harper, Jonathan and Mollabashi, Ali and Takayanagi, Tadashi and Taki, Yusuke},
  title = {Timelike entanglement entropy},
  journal = {Journal of High Energy Physics},
  year = {2023},
  volume = {2023},
  number = {5},
  pages = {52},
  doi = {10.1007/JHEP05(2023)052},
  url = {https://doi.org/10.1007/JHEP05(2023)052},
  isbn = {1029-8479}
}

@article{Doi2023Pseudoentropy,
  title = {Pseudoentropy in $\mathrm{dS}/\mathrm{CFT}$ and Timelike Entanglement Entropy},
  author = {Doi, Kazuki and Harper, Jonathan and Mollabashi, Ali and Takayanagi, Tadashi and Taki, Yusuke},
  journal = {Phys. Rev. Lett.},
  volume = {130},
  issue = {3},
  pages = {031601},
  numpages = {6},
  year = {2023},
  month = {Jan},
  publisher = {American Physical Society},
  doi = {10.1103/PhysRevLett.130.031601},
  url = {https://link.aps.org/doi/10.1103/PhysRevLett.130.031601}
}

@misc{fullwood2026entropypseudodensitymatrix,
      title={On the entropy of a pseudo-density matrix}, 
      author={James Fullwood and Boyu Yang},
      year={2026},
      eprint={2608.28946},
      archivePrefix={arXiv},
      primaryClass={quant-ph},
      url={https://arxiv.org/abs/2608.28946}, 
}

@Article{Caputa2024SVD,
author={Caputa, Pawe{\l}
and Purkayastha, Souradeep
and Saha, Abhigyan
and Su{\l}kowski, Piotr},
title={Musings on SVD and pseudo entanglement entropies},
journal={Journal of High Energy Physics},
year={2024},
month={Nov},
day={19},
volume={2024},
number={11},
pages={103},
issn={1029-8479},
doi={10.1007/JHEP11(2024)103},
url={https://doi.org/10.1007/JHEP11(2024)103}
}

@article{Watanabe1955twostateQM,
  title = {Symmetry of Physical Laws. Part III. Prediction and Retrodiction},
  author = {Watanabe, Satosi},
  journal = {Rev. Mod. Phys.},
  volume = {27},
  issue = {2},
  pages = {179--186},
  numpages = {0},
  year = {1955},
  month = {Apr},
  publisher = {American Physical Society},
  doi = {10.1103/RevModPhys.27.179},
  url = {https://link.aps.org/doi/10.1103/RevModPhys.27.179}
}

@misc{aharonov2007twostatevectorformalismqauntum,
      title={The Two-State Vector Formalism of Qauntum Mechanics: an Updated Review}, 
      author={Yakir Aharonov and Lev Vaidman},
      year={2007},
      eprint={quant-ph/0105101},
      archivePrefix={arXiv},
      primaryClass={quant-ph},
      url={https://arxiv.org/abs/quant-ph/0105101}, 
}

@article{Braunstein2007QIblackhole,
  title = {Quantum Information Cannot Be Completely Hidden in Correlations: Implications for the Black-Hole Information Paradox},
  author = {Braunstein, Samuel L. and Pati, Arun K.},
  journal = {Phys. Rev. Lett.},
  volume = {98},
  issue = {8},
  pages = {080502},
  numpages = {4},
  year = {2007},
  month = {Feb},
  publisher = {American Physical Society},
  doi = {10.1103/PhysRevLett.98.080502},
  url = {https://link.aps.org/doi/10.1103/PhysRevLett.98.080502}
}

@Article{Muller2012GPTblackhole,
author={M{\"u}ller, Markus P.
and Oppenheim, Jonathan
and Dahlsten, Oscar C.O.},
title={The black hole information problem beyond quantum theory},
journal={Journal of High Energy Physics},
year={2012},
month={Sep},
day={25},
volume={2012},
number={9},
pages={116},
issn={1029-8479},
doi={10.1007/JHEP09(2012)116},
url={https://doi.org/10.1007/JHEP09(2012)116}
}

@article{Song2025twotime,
    author = {Song, Minjeong and Parzygnat, Arthur J.},
    title = {Bipartite quantum states admitting a causal explanation},
    journal = {AVS Quantum Science},
    volume = {7},
    number = {4},
    pages = {045002},
    year = {2025},
    month = {12},
    issn = {2639-0213},
    doi = {10.1116/5.0303300},
    url = {https://doi.org/10.1116/5.0303300},
}

@article{Lerose2021influencematrix,
  title = {Influence Matrix Approach to Many-Body Floquet Dynamics},
  author = {Lerose, Alessio and Sonner, Michael and Abanin, Dmitry A.},
  journal = {Phys. Rev. X},
  volume = {11},
  issue = {2},
  pages = {021040},
  numpages = {21},
  year = {2021},
  month = {May},
  publisher = {American Physical Society},
  doi = {10.1103/PhysRevX.11.021040},
  url = {https://link.aps.org/doi/10.1103/PhysRevX.11.021040}
}

@article{Feynman1963influenceMatrix,
title = {The theory of a general quantum system interacting with a linear dissipative system},
journal = {Annals of Physics},
volume = {24},
pages = {118-173},
year = {1963},
issn = {0003-4916},
doi = {https://doi.org/10.1016/0003-4916(63)90068-X},
url = {https://www.sciencedirect.com/science/article/pii/000349166390068X},
author = {R.P Feynman and F.L Vernon}
}

@article{Dowling2024chaosEnt,
  title = {Operational Metric for Quantum Chaos and the Corresponding Spatiotemporal-Entanglement Structure},
  author = {Dowling, Neil and Modi, Kavan},
  journal = {PRX Quantum},
  volume = {5},
  issue = {1},
  pages = {010314},
  numpages = {28},
  year = {2024},
  month = {Feb},
  publisher = {American Physical Society},
  doi = {10.1103/PRXQuantum.5.010314},
  url = {https://link.aps.org/doi/10.1103/PRXQuantum.5.010314}
}

@article{Dowling2023scrabling,
  title = {Scrambling Is Necessary but Not Sufficient for Chaos},
  author = {Dowling, Neil and Kos, Pavel and Modi, Kavan},
  journal = {Phys. Rev. Lett.},
  volume = {131},
  issue = {18},
  pages = {180403},
  numpages = {6},
  year = {2023},
  month = {Nov},
  publisher = {American Physical Society},
  doi = {10.1103/PhysRevLett.131.180403},
  url = {https://link.aps.org/doi/10.1103/PhysRevLett.131.180403}
}

@Article{Roberts2015otoc,
author={Roberts, Daniel A.
and Stanford, Douglas
and Susskind, Leonard},
title={Localized shocks},
journal={Journal of High Energy Physics},
year={2015},
month={Mar},
day={10},
volume={2015},
number={3},
pages={51},
issn={1029-8479},
doi={10.1007/JHEP03(2015)051},
url={https://doi.org/10.1007/JHEP03(2015)051}
}

@article{choi1980some,
  title={Some assorted inequalities for positive linear maps on C*-algebras},
  author={Choi, Man-Duen},
  journal={Journal of Operator Theory},
  pages={271--285},
  year={1980},
  publisher={JSTOR}
}

@article{kadison1952generalized,
  title={A generalized Schwarz inequality and algebraic invariants for operator algebras},
  author={Kadison, Richard V},
  journal={Annals of Mathematics},
  volume={56},
  number={3},
  pages={494--503},
  year={1952},
  publisher={JSTOR}
}

@misc{milekhin2025observable,
      title={Observable and computable entanglement in time}, 
      author={Alexey Milekhin and Zofia Adamska and John Preskill},
      year={2025},
      eprint={2502.12240},
      archivePrefix={arXiv},
      primaryClass={quant-ph},
      url={https://arxiv.org/abs/2502.12240}, 
}

@article{Prosen2007operatorEnt,
  title = {Operator space entanglement entropy in a transverse Ising chain},
  author = {Prosen, Toma\ifmmode \check{z}\else \v{z}\fi{} and Pi\ifmmode \check{z}\else \v{z}\fi{}orn, Iztok},
  journal = {Phys. Rev. A},
  volume = {76},
  issue = {3},
  pages = {032316},
  numpages = {5},
  year = {2007},
  month = {Sep},
  publisher = {American Physical Society},
  doi = {10.1103/PhysRevA.76.032316},
  url = {https://link.aps.org/doi/10.1103/PhysRevA.76.032316},
      eprint={0706.2480},
  archivePrefix={arXiv},
  primaryClass={quant-ph}
}

@article{Zanardi2001operatorEnt,
  title = {Entanglement of quantum evolutions},
  author = {Zanardi, Paolo},
  journal = {Phys. Rev. A},
  volume = {63},
  issue = {4},
  pages = {040304(R)},
  numpages = {4},
  year = {2001},
  month = {Mar},
  publisher = {American Physical Society},
  doi = {10.1103/PhysRevA.63.040304},
  url = {https://link.aps.org/doi/10.1103/PhysRevA.63.040304},
    eprint={quant-ph/0010074},
  archivePrefix={arXiv},
  primaryClass={quant-ph}
}

@Article{Haag1967KMS,
author={Haag, R.
and Hugenholtz, N. M.
and Winnink, M.},
title={On the equilibrium states in quantum statistical mechanics},
journal={Communications in Mathematical Physics},
year={1967},
month={Jun},
day={01},
volume={5},
number={3},
pages={215-236},
issn={1432-0916},
doi={10.1007/BF01646342},
url={https://doi.org/10.1007/BF01646342}
}

@article{martin1959KMStheory,
  title={Theory of many-particle systems. I},
  author={Martin, Paul C and Schwinger, Julian},
  journal={Physical Review},
  volume={115},
  number={6},
  pages={1342},
  year={1959},
  publisher={APS},
  url={https://journals.aps.org/pr/abstract/10.1103/PhysRev.115.1342}
}

@article{kubo1957statistical,
  title={Statistical-mechanical theory of irreversible processes. I. General theory and simple applications to magnetic and conduction problems},
  author={Kubo, Ryogo},
  journal={Journal of the physical society of Japan},
  volume={12},
  number={6},
  pages={570--586},
  year={1957},
  publisher={The Physical Society of Japan},
  url={https://journals.jps.jp/doi/10.1143/JPSJ.12.570}
}

@book{kamenev2023field,
  title={Field theory of non-equilibrium systems},
  author={Kamenev, Alex},
  year={2023},
  publisher={Cambridge University Press}
}

@article{sieberer2016keldysh,
  title={Keldysh field theory for driven open quantum systems},
  author={Sieberer, Lukas M and Buchhold, Michael and Diehl, Sebastian},
  journal={Reports on Progress in Physics},
  volume={79},
  number={9},
  pages={096001},
  year={2016},
  publisher={IOP Publishing},
  url={https://iopscience.iop.org/article/10.1088/0034-4885/79/9/096001},
    eprint={1512.00637},
  archivePrefix={arXiv},
  primaryClass={cond-mat.quant-gas}
}

@Article{Petz1986petzmap,
author={Petz, D{\'e}nes},
title={Sufficient subalgebras and the relative entropy of states of a von Neumann algebra},
journal={Communications in Mathematical Physics},
year={1986},
month={Mar},
day={01},
volume={105},
number={1},
pages={123-131},
issn={1432-0916},
doi={10.1007/BF01212345},
url={https://doi.org/10.1007/BF01212345}
}

@Article{Parzygnat2023SVD,
author={Parzygnat, Arthur J.
and Takayanagi, Tadashi
and Taki, Yusuke
and Wei, Zixia},
title={SVD entanglement entropy},
journal={Journal of High Energy Physics},
year={2023},
month={Dec},
day={18},
volume={2023},
number={12},
pages={123},
issn={1029-8479},
doi={10.1007/JHEP12(2023)123},
url={https://doi.org/10.1007/JHEP12(2023)123},
  eprint={2307.06531},
  archivePrefix={arXiv},
  primaryClass={hep-th} 
}

@Article{Glorioso2024spacetimeMutual,
author={Glorioso, Paolo
and Qi, Xiao-Liang
and Yang, Zhenbin},
title={Space-time generalization of mutual information},
journal={Journal of High Energy Physics},
year={2024},
month={May},
day={31},
volume={2024},
number={5},
pages={338},
issn={1029-8479},
doi={10.1007/JHEP05(2024)338},
url={https://doi.org/10.1007/JHEP05(2024)338},
  eprint={2401.02475},
  archivePrefix={arXiv},
  primaryClass={quant-ph} 
}

@book{weinberg1995quantum,
  title={The quantum theory of fields},
  author={Weinberg, Steven},
  volume={1},
  year={1995},
  publisher={Cambridge university press}
}

@misc{jia2026temporaltomography,
      title={Temporal State Tomography via Quantum Snapshotting the Temporal Quasiprobabilities}, 
      author={Zhian Jia},
      year={2026},
      eprint={2605.02655},
      archivePrefix={arXiv},
      primaryClass={quant-ph},
      url={https://arxiv.org/abs/2605.02655}, 
}

@article{Lie2025stateovertime,
  title = {Multipartite Quantum States over Time from Two Fundamental Assumptions},
  author = {Lie, Seok Hyung and Fullwood, James},
  journal = {Phys. Rev. Lett.},
  volume = {135},
  issue = {23},
  pages = {230204},
  numpages = {8},
  year = {2025},
  month = {Dec},
  publisher = {American Physical Society},
  doi = {10.1103/lbf3-snp8},
  url = {https://link.aps.org/doi/10.1103/lbf3-snp8}
}

@misc{Jia2025TemporalKirkwoodDirac,
   title={Temporal Kirkwood-Dirac Quasiprobability Distribution and Unification of Temporal State Formalisms through Temporal Bloch Tomography}, 
      author={Zhian Jia and Kavan Modi and Dagomir Kaszlikowski},
      year={2026},
      eprint={2601.05294},
      archivePrefix={arXiv},
      primaryClass={quant-ph},
      url={https://arxiv.org/abs/2601.05294}, 
}

@misc{fullwood2025spatiotemporalbornrule,
      title={The spatiotemporal Born rule is quasiprobabilistic}, 
      author={James Fullwood and Zhihao Ma and Zhen Wu},
      year={2025},
      eprint={2507.16919},
      archivePrefix={arXiv},
      primaryClass={quant-ph},
      url={https://arxiv.org/abs/2507.16919}, 
}

@article{margenau1961correlation,
  title={Correlation between measurements in quantum theory},
  author={Margenau, Henry and Hill, Robert Nyden},
  journal={Progress of Theoretical Physics},
  volume={26},
  number={5},
  pages={722--738},
  year={1961},
  publisher={Oxford University Press},
  url={https://academic.oup.com/ptp/article/26/5/722/1936017},
  doi={10.1143/PTP.26.722},
}

@article{Alonso2019otocKD,
  title = {Out-of-Time-Ordered-Correlator Quasiprobabilities Robustly Witness Scrambling},
  author = {Gonz\'{a}lez Alonso, Jos\'{e} Ra\'{u}l and Yunger Halpern, Nicole and Dressel, Justin},
  journal = {Phys. Rev. Lett.},
  volume = {122},
  issue = {4},
  pages = {040404},
  numpages = {7},
  year = {2019},
  month = {Feb},
  publisher = {American Physical Society},
  doi = {10.1103/PhysRevLett.122.040404},
  url = {https://link.aps.org/doi/10.1103/PhysRevLett.122.040404},
  eprint={1806.09637},
  archivePrefix={arXiv},
  primaryClass={quant-ph} 
}

@article{Halpern2018otoc,
  title = {Quasiprobability behind the out-of-time-ordered correlator},
  author = {Yunger Halpern, Nicole and Swingle, Brian and Dressel, Justin},
  journal = {Phys. Rev. A},
  volume = {97},
  issue = {4},
  pages = {042105},
  numpages = {37},
  year = {2018},
  month = {Apr},
  publisher = {American Physical Society},
  doi = {10.1103/PhysRevA.97.042105},
  url = {https://link.aps.org/doi/10.1103/PhysRevA.97.042105},
  eprint={1704.01971},
  archivePrefix={arXiv},
  primaryClass={quant-ph}  
}

@article{larkin1969otoc,
  title={Quasiclassical method in the theory of superconductivity},
  author={Larkin, Anatoly I and Ovchinnikov, Yu N},
  journal={Sov Phys JETP},
  volume={28},
  number={6},
  pages={1200--1205},
  year={1969}
}

@article{jia2024spatiotemporal,
  title={The Spatiotemporal Doubled-Density Operator: A Unified Framework for Analyzing Spatial and Temporal Quantum Processes},
  author={Jia, Zhian and Kaszlikowski, Dagomir},
  journal={Advanced Quantum Technologies},
  volume={7},
  number={11},
  pages={2400102},
  year={2024},
  publisher={Wiley Online Library},
  url={https://advanced.onlinelibrary.wiley.com/doi/10.1002/qute.202400102},
  doi = {10.1002/qute.202400102},
  eprint={2305.15649},
  archivePrefix={arXiv},
  primaryClass={quant-ph}   
}

@article{Leifer2013toward,
	title = {Towards a formulation of quantum theory as a causally neutral theory of Bayesian inference},
	author = {Leifer, M. S. and Spekkens, Robert W.},
	journal = {Phys. Rev. A},
	volume = {88},
	issue = {5},
	pages = {052130},
	numpages = {38},
	year = {2013},
	month = {Nov},
	publisher = {American Physical Society},
	doi = {10.1103/PhysRevA.88.052130},
	url = {https://link.aps.org/doi/10.1103/PhysRevA.88.052130},
  eprint={1107.5849},
  archivePrefix={arXiv},
  primaryClass={quant-ph}   
}

@article{Chiribella2009comb,
  title = {Theoretical framework for quantum networks},
  author = {Chiribella, Giulio and D'Ariano, Giacomo Mauro and Perinotti, Paolo},
  journal = {Phys. Rev. A},
  volume = {80},
  issue = {2},
  pages = {022339},
  numpages = {20},
  year = {2009},
  month = {Aug},
  publisher = {American Physical Society},
  doi = {10.1103/PhysRevA.80.022339},
  url = {https://link.aps.org/doi/10.1103/PhysRevA.80.022339},
  eprint={0904.4483},
  archivePrefix={arXiv},
  primaryClass={quant-ph}   
}

@article{Aharonov2009multi,
  title = {Multiple-time states and multiple-time measurements in quantum mechanics},
  author = {Aharonov, Yakir and Popescu, Sandu and Tollaksen, Jeff and Vaidman, Lev},
  journal = {Phys. Rev. A},
  volume = {79},
  issue = {5},
  pages = {052110},
  numpages = {16},
  year = {2009},
  month = {May},
  publisher = {American Physical Society},
  doi = {10.1103/PhysRevA.79.052110},
  url = {https://link.aps.org/doi/10.1103/PhysRevA.79.052110},
  eprint={0712.0320},
  archivePrefix={arXiv},
  primaryClass={quant-ph}    
}

@article{Parzygnat2023pdo,
  title = {From Time-Reversal Symmetry to Quantum Bayes' Rules},
  author = {Parzygnat, Arthur J. and Fullwood, James},
  journal = {PRX Quantum},
  volume = {4},
  issue = {2},
  pages = {020334},
  numpages = {26},
  year = {2023},
  month = {Jun},
  publisher = {American Physical Society},
  doi = {10.1103/PRXQuantum.4.020334},
  url = {https://link.aps.org/doi/10.1103/PRXQuantum.4.020334},
eprint={2212.08088},
  archivePrefix={arXiv},
  primaryClass={quant-ph}  
}

@Article{Liu2025PDO,
author={Liu, Xiangjing
and Qiu, Yixian
and Dahlsten, Oscar
and Vedral, Vlatko},
title={Quantum causal inference with extremely light touch},
journal={npj Quantum Information},
year={2025},
month={Mar},
day={29},
volume={11},
number={1},
pages={54},
issn={2056-6387},
doi={10.1038/s41534-024-00956-0},
url={https://doi.org/10.1038/s41534-024-00956-0},  
eprint={2303.10544},
  archivePrefix={arXiv},
  primaryClass={quant-ph}  
}

@article{ArvidssonShukur2024KDreview,
doi = {10.1088/1367-2630/ada05d},
url = {https://dx.doi.org/10.1088/1367-2630/ada05d},
year = {2024},
month = {dec},
publisher = {IOP Publishing},
volume = {26},
number = {12},
pages = {121201},
author = {Arvidsson-Shukur, David R M and Braasch Jr, William F and De Bièvre, Stephan and Dressel, Justin and Jordan, Andrew N and Langrenez, Christopher and Lostaglio, Matteo and Lundeen, Jeff S and Halpern, Nicole Yunger},
title = {Properties and applications of the Kirkwood–Dirac distribution},
journal = {New Journal of Physics},
   eprint={2403.18899},
  archivePrefix={arXiv},
  primaryClass={quant-ph}   
}

@article{Song2024causal,
  title = {Causal Classification of Spatiotemporal Quantum Correlations},
  author = {Song, Minjeong and Narasimhachar, Varun and Regula, Bartosz and Elliott, Thomas J. and Gu, Mile},
  journal = {Phys. Rev. Lett.},
  volume = {133},
  issue = {11},
  pages = {110202},
  numpages = {6},
  year = {2024},
  month = {Sep},
  publisher = {American Physical Society},
  doi = {10.1103/PhysRevLett.133.110202},
  url = {https://link.aps.org/doi/10.1103/PhysRevLett.133.110202},
   eprint={2306.09336 },
  archivePrefix={arXiv},
  primaryClass={quant-ph}   
}

@article{Milz2021quantum,
  title = {Quantum Stochastic Processes and Quantum non-Markovian Phenomena},
  author = {Milz, Simon and Modi, Kavan},
  journal = {PRX Quantum},
  volume = {2},
  issue = {3},
  pages = {030201},
  numpages = {81},
  year = {2021},
  month = {Jul},
  publisher = {American Physical Society},
  doi = {10.1103/PRXQuantum.2.030201},
  url = {https://link.aps.org/doi/10.1103/PRXQuantum.2.030201},
   eprint={2012.01894},
  archivePrefix={arXiv},
  primaryClass={quant-ph}    
}

@article {liu2023unification,
    AUTHOR = {Liu, Xiangjing and Jia, Zhian and Qiu, Yixian and Li, Fei and
              Dahlsten, Oscar},
     TITLE = {Unification of spatiotemporal quantum formalisms: mapping
              between process and pseudo-density matrices via multiple-time
              states},
   JOURNAL = {New J. Phys.},
  FJOURNAL = {New Journal of Physics},
    VOLUME = {26},
      YEAR = {2024},
    NUMBER = {March},
     PAGES = {Paper No. 033008, 15},
   MRCLASS = {81P16},
  MRNUMBER = {4730855},
       DOI = {10.1088/1367-2630/ad264c},
       URL = {https://doi.org/10.1088/1367-2630/ad264c},
    eprint={2306.05958},
  archivePrefix={arXiv},
  primaryClass={quant-ph}   
}

@article{fullwood2023quantum,
  title={Quantum dynamics as a pseudo-density matrix},
  author={Fullwood, James},
  journal={Quantum},
  volume={9},
  pages={1719},
  year={2025},
  url={https://doi.org/10.22331/q-2025-04-24-1719},
  eprint={2304.03954},
  archivePrefix={arXiv},
  primaryClass={quant-ph}      
}

@article{fullwood2022quantum,
  title={On quantum states over time},
  author={Fullwood, James and Parzygnat, Arthur J},
  journal={Proceedings of the Royal Society A},
  volume={478},
  number={2264},
  pages={20220104},
  year={2022},
  publisher={The Royal Society},
  url={https://royalsocietypublishing.org/doi/abs/10.1098/rspa.2022.0104},
    eprint={2202.03607},
  archivePrefix={arXiv},
  primaryClass={quant-ph}   
}

@article{Dirac1945on,
  title = {On the Analogy Between Classical and Quantum Mechanics},
  author = {Dirac, P. A. M.},
  journal = {Rev. Mod. Phys.},
  volume = {17},
  issue = {2-3},
  pages = {195--199},
  numpages = {0},
  year = {1945},
  month = {Apr},
  publisher = {American Physical Society},
  doi = {10.1103/RevModPhys.17.195},
  url = {https://link.aps.org/doi/10.1103/RevModPhys.17.195}
}

@article{Kirkwood1933quantum,
  title = {Quantum Statistics of Almost Classical Assemblies},
  author = {Kirkwood, John G.},
  journal = {Phys. Rev.},
  volume = {44},
  issue = {1},
  pages = {31--37},
  numpages = {0},
  year = {1933},
  month = {Jul},
  publisher = {American Physical Society},
  doi = {10.1103/PhysRev.44.31},
  url = {https://link.aps.org/doi/10.1103/PhysRev.44.31}
}

@article {jia2023quantumspace,
    AUTHOR = {Jia, Zhian and Song, Minjeong and Kaszlikowski, Dagomir},
     TITLE = {Quantum space-time marginal problem: global causal structure
              from local causal information},
   JOURNAL = {New J. Phys.},
  FJOURNAL = {New Journal of Physics},
    VOLUME = {25},
      YEAR = {2023},
    NUMBER = {December},
     PAGES = {Paper No. 123038, 23},
   MRCLASS = {81P45 (81P16 81P40)},
  MRNUMBER = {4687136},
       DOI = {10.1088/1367-2630/ad1416},
       URL = {https://doi.org/10.1088/1367-2630/ad1416},
  eprint={2303.12819},
  archivePrefix={arXiv},
  primaryClass={quant-ph} 
}

@article{davies1970operational,
  title={An operational approach to quantum probability},
  author={Davies, E Brian and Lewis, John T},
  journal={Communications in Mathematical Physics},
  volume={17},
  number={3},
  pages={239--260},
  year={1970},
  publisher={Springer},
  url={https://link.springer.com/article/10.1007/BF01647093}
}

@article{Vidal1999robustness,
  title = {Robustness of entanglement},
  author = {Vidal, Guifr\'e and Tarrach, Rolf},
  journal = {Phys. Rev. A},
  volume = {59},
  issue = {1},
  pages = {141--155},
  numpages = {0},
  year = {1999},
  month = {Jan},
  publisher = {American Physical Society},
  doi = {10.1103/PhysRevA.59.141},
  url = {https://link.aps.org/doi/10.1103/PhysRevA.59.141},
  eprint={quant-ph/9806094},
  archivePrefix={arXiv},
  primaryClass={quant-ph}     
}

@inproceedings{gutoski2007toward,
 title={Toward a general theory of quantum games},
  author={Gutoski, Gus and Watrous, John},
  booktitle={Proceedings of the thirty-ninth annual ACM symposium on Theory of computing},
  pages={565--574},
  year={2007},
  url={https://dl.acm.org/doi/10.1145/1250790.1250873},
  eprint={quant-ph/0611234},
  archivePrefix={arXiv},
  primaryClass={quant-ph}    
}

@article{zhao2018geometry,
  title = {Geometry of quantum correlations in space-time},
  author = {Zhao, Zhikuan and Pisarczyk, Robert and Thompson, Jayne and Gu, Mile and Vedral, Vlatko and Fitzsimons, Joseph F.},
  journal = {Phys. Rev. A},
  volume = {98},
  issue = {5},
  pages = {052312},
  numpages = {5},
  year = {2018},
  month = {Nov},
  publisher = {American Physical Society},
  doi = {10.1103/PhysRevA.98.052312},
  url = {https://link.aps.org/doi/10.1103/PhysRevA.98.052312},
  eprint={1711.05955},
  archivePrefix={arXiv},
  primaryClass={quant-ph}   
}

@article{cotler2018superdensity,
  title={Superdensity operators for spacetime quantum mechanics},
  author={Cotler, Jordan and Jian, Chao-Ming and Qi, Xiao-Liang and Wilczek, Frank},
  journal={Journal of High Energy Physics},
  volume={2018},
  number={9},
  pages={1--57},
  year={2018},
  publisher={Springer},
  url={https://link.springer.com/article/10.1007/JHEP09(2018)093},
  eprint={1711.03119 },
  archivePrefix={arXiv},
  primaryClass={quant-ph}       
}

@article{griffiths1984consistent,
  title={Consistent histories and the interpretation of quantum mechanics},
  author={Griffiths, Robert B},
  journal={Journal of Statistical Physics},
  volume={36},
  number={1},
  pages={219--272},
  year={1984},
  publisher={Springer},
  url={https://link.springer.com/article/10.1007/BF01015734}
}

@article{oreshkov2012quantum,
  title={Quantum correlations with no causal order},
  author={Oreshkov, Ognyan and Costa, Fabio and Brukner, {\v{C}}aslav},
  journal={Nature Communications},
  volume={3},
  number={1},
  pages={1--8},
  year={2012},
  publisher={Nature Publishing Group},
  url={https://www.nature.com/articles/ncomms2076},
  eprint={1105.4464},
  archivePrefix={arXiv},
  primaryClass={quant-ph}     
}

@article{Wigner1932on,
  title = {On the Quantum Correction For Thermodynamic Equilibrium},
  author = {Wigner, E.},
  journal = {Phys. Rev.},
  volume = {40},
  issue = {5},
  pages = {749--759},
  numpages = {0},
  year = {1932},
  month = {Jun},
  publisher = {American Physical Society},
  doi = {10.1103/PhysRev.40.749},
  url = {https://link.aps.org/doi/10.1103/PhysRev.40.749}
}

@article {wei2022antilinear,
    AUTHOR = {Wei, Lu and Jia, Zhian and Kaszlikowski, Dagomir and Tan, Sheng},
     TITLE = {Antilinear superoperator, quantum geometric invariance, and antilinear symmetry for higher-dimensional quantum systems},
   JOURNAL = {Quantum Inf. Process.},
  FJOURNAL = {Quantum Information Processing},
    VOLUME = {23},
      YEAR = {2024},
    NUMBER = {8},
     PAGES = {Paper No. 290, 35},
      ISSN = {1570-0755},
   MRCLASS = {81S22 (81P40 81Q70)},
  MRNUMBER = {4779602},
       DOI = {10.1007/s11128-024-04499-3},
       URL = {https://doi.org/10.1007/s11128-024-04499-3},
 eprint={2202.10989},
  archivePrefix={arXiv},
  primaryClass={quant-ph} 
}

@article{fitzsimons2015quantum,
  title={Quantum correlations which imply causation},
  author={Fitzsimons, Joseph F and Jones, Jonathan A and Vedral, Vlatko},
  journal={Scientific Reports},
  volume={5},
  number={1},
  pages={1--7},
  year={2015},
  publisher={Nature Publishing Group},
  url={https://www.nature.com/articles/srep18281},
  eprint={1302.2731},
  archivePrefix={arXiv},
  primaryClass={quant-ph} 
}

@article{Gell-mann1962symmetry,
  title = {Symmetries of Baryons and Mesons},
  author = {Gell-Mann, Murray},
  journal = {Phys. Rev.},
  volume = {125},
  issue = {3},
  pages = {1067--1084},
  numpages = {0},
  year = {1962},
  month = {Feb},
  publisher = {American Physical Society},
  doi = {10.1103/PhysRev.125.1067},
  url = {https://link.aps.org/doi/10.1103/PhysRev.125.1067}
}

@Book{Nielsen2010,
  title     = {Quantum computation and quantum information},
  publisher = {Cambridge university press},
  year      = {2010},
  author    = {Nielsen, Michael A and Chuang, Isaac L},
  url       = {http://www.cambridge.org/cn/academic/subjects/physics/quantum-physics-quantum-information-and-quantum-computation/quantum-computation-and-quantum-information-10th-anniversary-edition?format=PB&isbn=9781107002173#eUc7irgofUw4ZEd5.97},
}

@Article{bell1964,
  author  = {Bell, John S},
  title   = {On the Einstein Podolsky Rosen paradox},
  journal = {Physics},
  year    = {1964},
  volume  = {1},
  pages   = {195--200},
}

@Article{kochen1967problem,
  author  = {Kochen, Simon and Specker, Ernst P},
  title   = {The problem of hidden variables in quantum mechanics},
  journal = {J. Math. Mech.},
  year    = {1967},
  volume  = {17},
  pages   = {59-87},
  url={https://www.jstor.org/stable/24902153}
}

@Article{Ryu2006,
  author    = {Ryu, Shinsei and Takayanagi, Tadashi},
  title     = {Holographic Derivation of Entanglement Entropy from the anti--de Sitter Space/Conformal Field Theory Correspondence},
  journal   = {Phys. Rev. Lett.},
  year      = {2006},
  volume    = {96},
  pages     = {181602},
  month     = {May},
  doi       = {10.1103/PhysRevLett.96.181602},
  issue     = {18},
  numpages  = {4},
  publisher = {American Physical Society},
  url       = {https://link.aps.org/doi/10.1103/PhysRevLett.96.181602},
}

@article{Pollock2018processtensor,
  title = {Non-Markovian quantum processes: Complete framework and efficient characterization},
  author = {Pollock, Felix A. and Rodr\'{\i}guez-Rosario, C\'esar and Frauenheim, Thomas and Paternostro, Mauro and Modi, Kavan},
  journal = {Phys. Rev. A},
  volume = {97},
  issue = {1},
  pages = {012127},
  numpages = {13},
  year = {2018},
  month = {Jan},
  publisher = {American Physical Society},
  doi = {10.1103/PhysRevA.97.012127},
  url = {https://link.aps.org/doi/10.1103/PhysRevA.97.012127}
}
